\tikzset{->-/.style={decoration={
  markings,
  mark=at position #1 with {\arrow{>}}},postaction={decorate}}}
\tikzset{
  big arrow/.style={
    decoration={markings,mark=at position 1 with {\arrow[scale=1.6,#1]{>}}},
    postaction={decorate},
    shorten >=0.4pt}}
\newcommand{\beq}{\begin{equation}}
\newcommand{\eeq}{\end{equation}}
\newcommand{\ba}{\begin{array}}
\newcommand{\ea}{\end{array}}
\newcommand{\sn}{\mathrm{sn}}
\newtheorem{thm}{Theorem}[section]
\newtheorem{lemma}[thm]{Lemma}
\newcommand\UW{Department of Applied Mathematics,\\ University of Washington,\\ Seattle, WA 98195-3925, USA}
\newcommand\UCL{Department of Physics and Astronomy,\\ University College London,\\ London, WC1E 6BT, UK}
\definecolor{UWpurple}{rgb}{.35,.01,.55}
\definecolor{BoxCol}{rgb}{.35,.01,.55}
\title{The Stability Spectrum for Elliptic Solutions to the Sine-Gordon Equation}
\date{\today}
\author{Bernard Deconinck$^1$, Peter McGill$^2$, and Benjamin L. Segal$^1$\\
\\
$^1$\UW \and $^2$\UCL}
\begin{document}
\maketitle

\begin{abstract}
We present an analysis of the stability spectrum for all stationary periodic solutions to the sine-Gordon equation. An analytical expression for the spectrum is given. From this expression, various quantitative and qualitative results about the spectrum are derived. Specifically, the solution parameter space is shown to be split into regions of distinct qualitative behavior of the spectrum, in one of which the solutions are stable. Additional results on the stability of solutions with respect to perturbations of an integer multiple of the solution period are given.
\end{abstract}

\section{Introduction}
The sine-Gordon equation in laboratory coordinates is given by
\beq u_{tt}-u_{xx}+\sin u =0. \label{SGlab}\eeq
Here, $u(x,t)$ is a real-valued function. This equation was first introduced to study surfaces of constant Gaussian curvature in light cone form \cite{Bour}.
Since its introduction it has appeared in various applications including the description of the magnetic flux in long superconducting Josephson junctions \cite{Scott1970,Rem94,Scott69}, the modeling of fermions in the Thirring model \cite{Coleman75}, the study of the stability of structures found in galaxies \cite{liang1979nonlinear,voglis2003solitons,voglis2006invariant}, mechanical vibrations of a ribbon pendulum \cite{Waldram70}, propagation of crystal dislocation \cite{FK39}, propagation of deformations along DNA double helix \cite{Y2006}, among others. A comprehensive discussion of many of these applications is found in the review paper by Barone \cite{Barone1971}.

We consider general traveling wave solutions to (\ref{SGlab}). Defining $z=x-c t,$ $\tau = t$, and introducing $v(z,\tau) = u(x,t)$,
\beq (c^2-1) v_{zz}-2c v_{z\tau}+v_{\tau \tau}+\sin (v) = 0. \label{travelingSG}\eeq
For subsequent discussion we assume that $c\ne 1$. We proceed to look for stationary solutions to (\ref{travelingSG}) of the form
\beq v(z,\tau) = f(z), \label{stationaryanz} \eeq
leading to
\beq (c^2-1) f''(z)+\sin\left(f(z)\right) = 0, \label{pendulumeqn} \eeq
where $'$ denotes a derivative with respect to $z$. Integrating once,
\beq \frac{1}{2} (c^2-1)f'(z)^2 + 1 - \cos\left(f(z)\right) = E, \label{energyeqn} \eeq
where $E$ is a constant of integration referred to as the total energy.
The stationary solutions in this paper are the elliptic solutions to (\ref{energyeqn}) and their limits. These solutions are periodic in $z$ and limit to the well-known kink solutions as their period goes to infinity \cite{dauxois2006physics,NewallSolitons}.

\begin{figure}
\begin{tabular}{cc}
  \includegraphics[width=75mm]{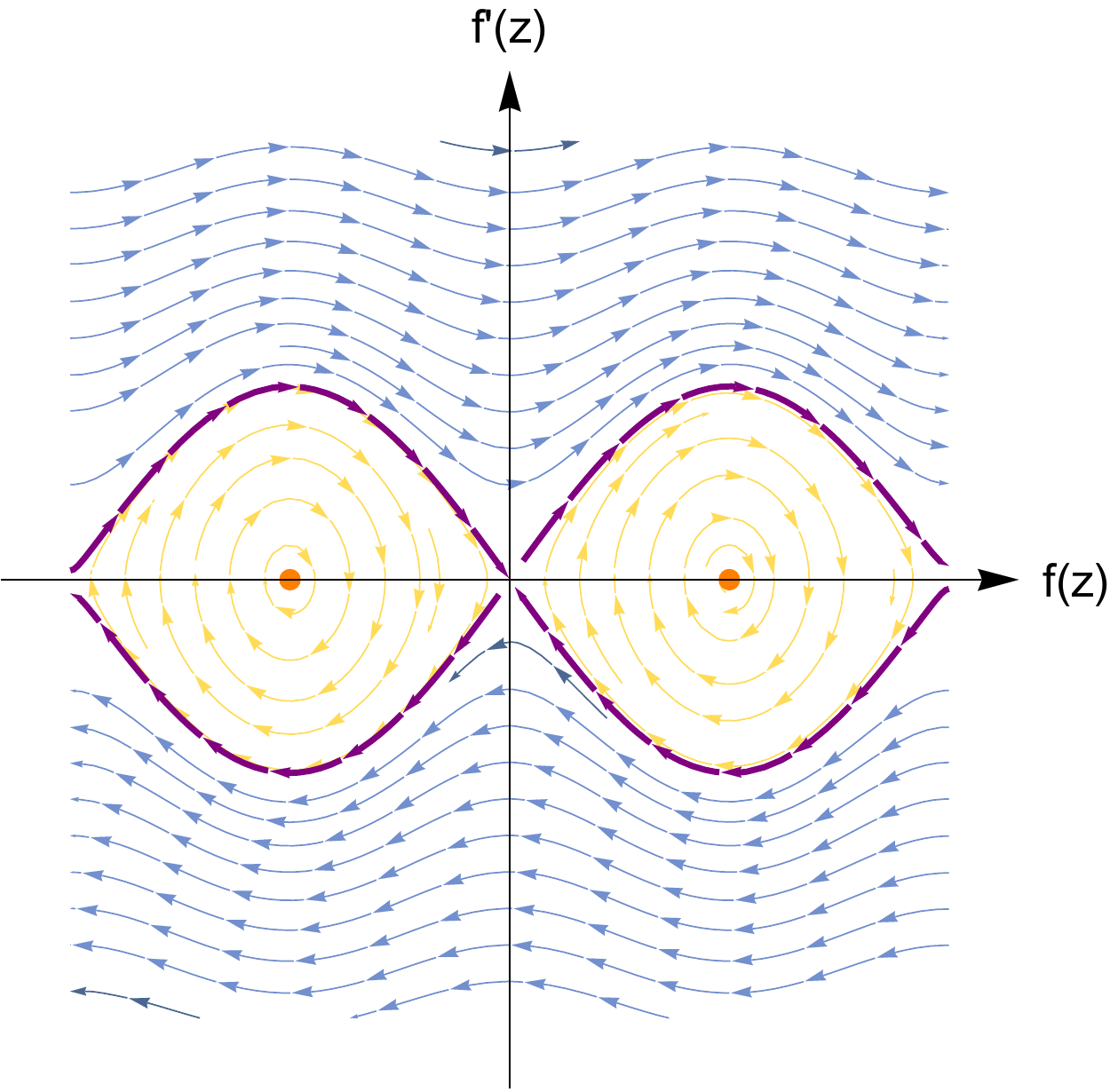} & \includegraphics[width=75mm]{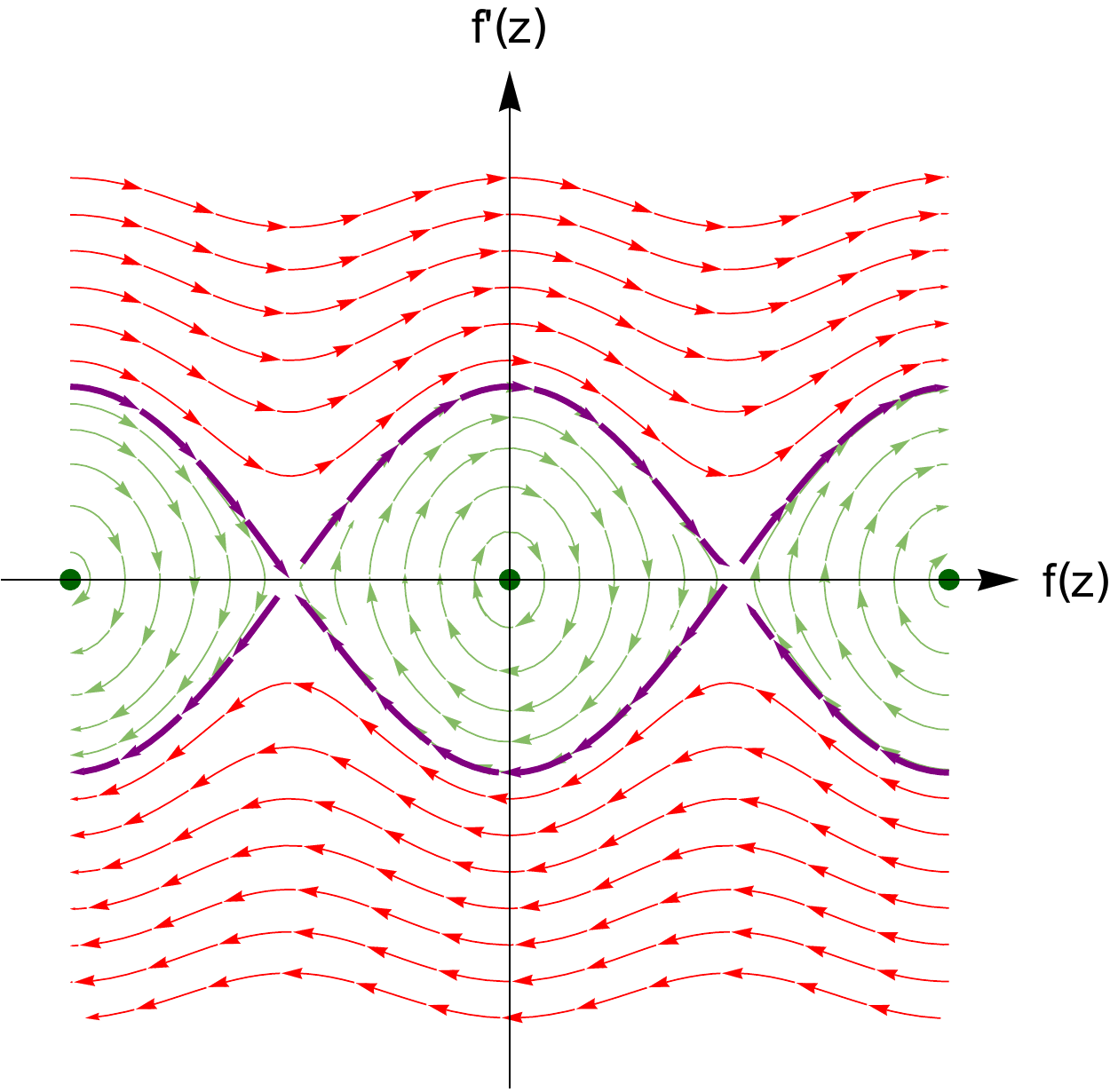} \\
(a) Subluminal: $c^2<1$ & (b)  Superluminal: $c^2>1$ 
\end{tabular}
\caption{Phase portraits of the solutions showing both librational waves (closed orbits inside the separatrix)  in yellow for (a) and green for (b) and rotational waves (orbits outside the separatrix) in blue for (a) and red for (b). The separatrix is denoted in purple.}
\label{SupSub}
\end{figure}

We call stationary solutions $f(z)$ with waves speeds satisfying $c^2<1$ (respectively $c^2>1$) subluminal (superluminal). Representative phase portraits of subluminal and superluminal solutions to (\ref{energyeqn}) are shown in Figure~\ref{SupSub}. Additionally, we call solutions $f(z)$ whose orbits in phase space lie within the separatrix librational, and those whose orbits lie outside the separatrix rotational. This distinction is illustrated in Figure~\ref{SupSub} in both the subluminal and superluminal cases. Librational waves correspond to $E\in(0,2).$ For rotational waves, $E<0$ for subluminal waves and $E>2$ for superluminal waves.

\begin{figure}
\centering
\includegraphics[width=10cm]{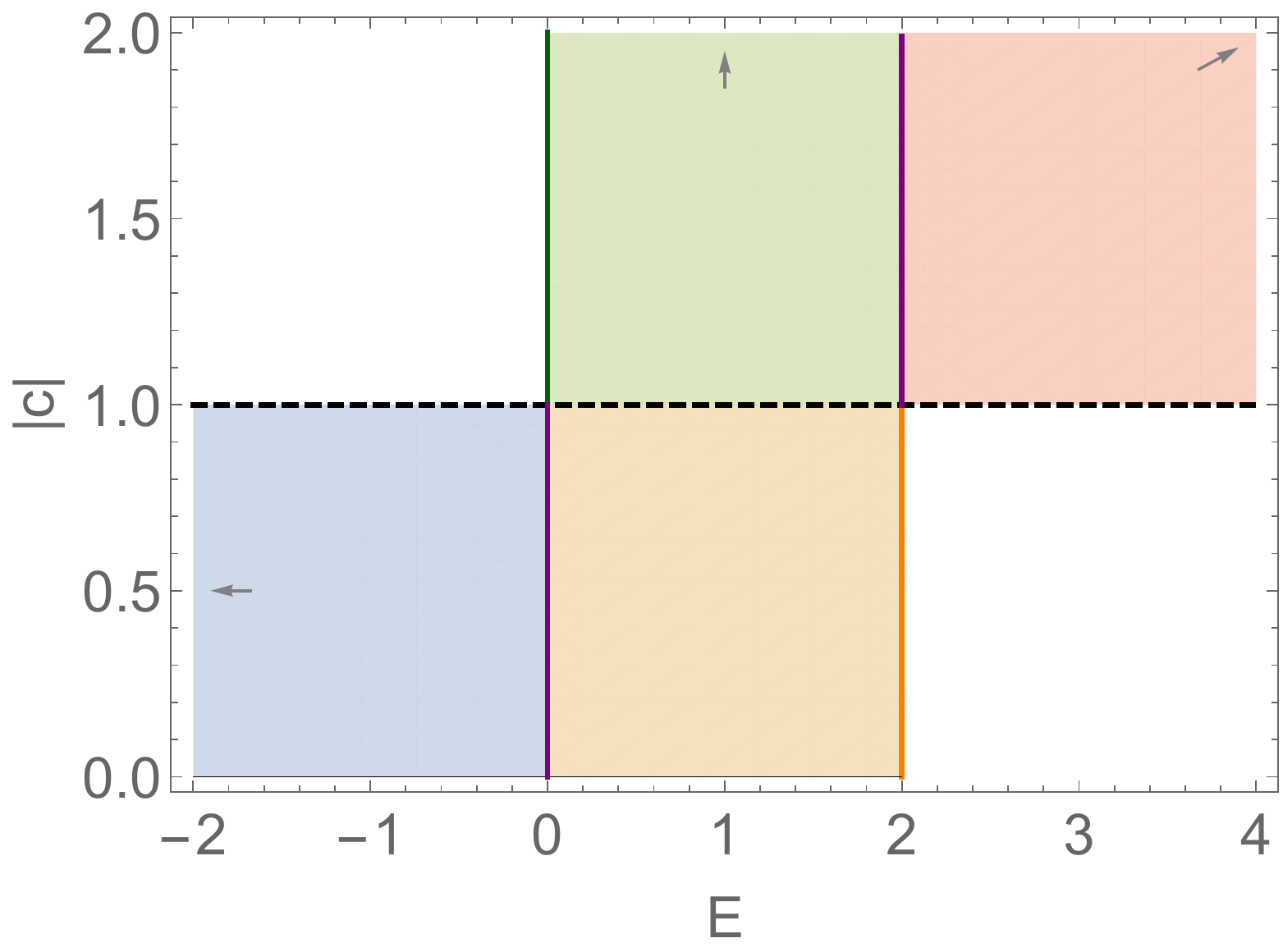} 
\caption{Subregions of Parameter space. Colors correspond to solutions in Figure~\ref{SupSub}. Blue: subluminal rotational ($0\le |c| \, <1,\,E<0$), orange: subluminal librational ($0\le |c| \, <1,\,0<E \le 2$), green: superluminal librational ($|c| \, >1,\,0\le E< 2$), red: superluminal rotational ($0\le|c| \, >1,\,E>2$). Subregions extend to infinity in directions of arrows. Subluminal kink solutions occur for $E = 0,0\le |c| \, <1,$ and superluminal kink solutions occur for $E = 2,|c| \, >1$.}
\label{SGregion}
\end{figure}

Scott \cite{scott1969waveform} was the first to study the stability of periodic traveling wave solutions to (\ref{SGlab}). He classified subluminal rotational waves as spectrally stable and determined spectral instability for all other types of waves, but these instability results were based on an incorrect claim that the spectrum in all cases was strictly confined to the real and imaginary axes.  His proof has been corrected \cite{JMMP} and extended to the Klein-Gordon equation \cite{JMMP2}. Using entirely different methods, we confirm the results in \cite{JMMP} and explicitly characterize all of parameter space. We also provide stability results for solutions perturbed by integer multiples of their fundamental period.

In Section~\ref{ellipticsolnsection} we present the elliptic solutions to (\ref{energyeqn}) in Jacobi elliptic form from \cite{JMMP}, and then reformulate the solutions into Weierstrass elliptic form. 
In Sections \ref{setup}, \ref{lax-section} and \ref{sqeig-section}, using the same methods as \cite{BD,BDN,DN,DS17}, we exploit the integrability of (\ref{SGlab}) to associate the spectrum of the linear stability problem with the Lax spectrum using the squared eigenfunction connection \cite{AKNS}. 
This allows us to obtain an analytical expression for the spectrum of the operator associated with the linearization of (\ref{SGlab}) in the form of a condition on the real part of an integral over one period of some integrand. Similar to \cite{DS17}, we proceed by integrating the integrand explicitly in Section \ref{Weierstrass}. 
Next, using the expressions obtained, we prove results concerning the location of the stability spectrum on the imaginary axis in Section \ref{imagaxisregions}. In Section \ref{regions}, we present analytical results about the spectrum, and we make use of the integral condition to split parameter space into different regions where the spectrum shows qualitatively different behavior. Finally, in Section~\ref{subharmonic} we examine the spectral stability of solutions with respect to perturbations of an integer multiple of their fundamental period and prove various stability results.

\section{Elliptic solutions}\label{ellipticsolnsection}

The derivation of the solutions is presented in the appendix of \cite{JMMP}. We limit our presentation to what is necessary for the following sections. For solutions to be real and nonsingular for real $z$ we require the following constraints:
\begin{align}
\text{subluminal,}&\text{ rotational:}  &   0\le |c|&<1,  &   E&<0,\\
\text{superluminal,}&\text{ rotational:}   &     |c|&>1,   &  E&>2,\\
\text{subluminal,}&\text{ librational:}    &    0\le |c|&<1, &    0<E&\le 2,\\
\text{superluminal,}&\text{ librational:}    &   |c|&>1,  &  0<E&\le 2.
\end{align} 
Solutions to (\ref{energyeqn}) are of the form
\beq \cos\left(f(z)\right) = \alpha+\beta \sn^2(\lambda z,k), \label{genJac}\eeq
with the following parameter values for the various cases:
\begin{align}
\text{subluminal,}&\text{ rotational:}  & \alpha&=-1,  &  \beta&= 2,  &  \lambda&=\sqrt{\frac{2-E}{2(1-c^2)}},  &  k&=\sqrt{\frac{2}{2-E}}, \label{subrot} \\
\text{superluminal,}&\text{ rotational:}   &   \alpha&=1,  &   \beta&=-2,  &  \lambda&=\sqrt{\frac{E}{2(c^2-1)}},   &   k&=\sqrt{\frac{2}{E}}, \\
\text{subluminal,}&\text{ librational:}   &   \alpha&=-1,  &   \beta&=2-E,  &  \lambda&=\sqrt{\frac{1}{1-c^2}},   &   k&=\sqrt{\frac{2-E}{2}}, \\
\text{superluminal,}&\text{ librational:}   &   \alpha&=1,  &   \beta&=-E,  &  \lambda&=\sqrt{\frac{1}{c^2-1}},   &   k&=\sqrt{\frac{E}{2}}. \label{suplum}
\end{align}
Here $\sn(x,k)$ is the Jacobi elliptic $\sn$ function with elliptic modulus $k$ \cite{BF,L13,WW,O}. We are neglecting to include a horizontal shift in $z$. This additional parameter does not change the qualitative results and it is not included here. 

Of some importance are the limits of these solutions on the boundaries of their regions of validity. On the boundaries for subluminal waves and superluminal waves the rotational and librational solutions limit to kink solutions. For subluminal waves that limit occurs when $E=0$:
\beq \cos\left(f(z)\right) = -1 +2 \tanh^2\left(\frac{z}{\sqrt{1-c^2}}\, \right), \eeq
while for superluminal waves the limit is when $E=2$:
\beq \cos\left(f(z)\right) = 1 -2 \tanh^2\left(\frac{z}{\sqrt{c^2-1}}\,\right). \eeq
These solutions are seen as the separatices in Figure~\ref{SupSub} in purple and are on the purple curves in parameter space in Figure~\ref{SGregion}.
The other limits for librational waves are when solutions limit to a constant. In the subluminal cases this occurs when $E=2$ and $\cos\left(f(z)\right) = -1,$ or in the superluminal case when $E=0$ and $\cos\left(f(z)\right) = 1$. For a general solution which is not on the boundary in parameter space, the solutions in (\ref{subrot}-\ref{suplum}) are periodic in $z$ with period $2 \mathcal{K}(k)$ where 
\beq \mathcal{K}(k) = \int_0^{\pi/2} \frac{1}{\sqrt{1-k^2 \sin^2 y}} dy, \eeq
the complete elliptic integral of the first kind.

We reformulate our elliptic solutions to (\ref{SGlab}) using Weierstrass elliptic functions \cite{O} rather than Jacobi elliptic functions. This will simplify working with the integral condition (\ref{intcond}) in Section~\ref{lax-section}, as formulas for integrating Weierstrass elliptic functions are well documented \cite{BF,GR}. It is important to note that nothing is lost by switching to Weierstrass elliptic functions, as we can map any Weierstrass elliptic function to a Jacobi elliptic function, and visa versa \cite{O,DS17}.  Let
\beq \wp(z+\omega_3,g_2,g_3)-e_3 =\left(\frac{\mathcal{K}(k)k}{\omega_1}\right)^2 \sn^2\left(\frac{\mathcal{K}(k)z}{\omega_1},k\right), \label{Jac2Wei} \eeq
with $g_2$ and $g_3$ the lattice invariants of the Weierstrass $\wp$ function, $e_1$, $e_2$, and $e_3$ the zeros of the polynomial $4t^3-g_2 t-g_3 ,$ and $\omega_1$ and $\omega_3$ the half-periods of the Weierstrass lattice given by
\beq \omega_1 = \int _{e_1} ^\infty \frac{\textrm{d}z}{\sqrt{4z^3-g_2 z-g_3}}, \eeq
\beq \omega_3 = i \int _{-e_3} ^\infty \frac{\textrm{d}z}{\sqrt{4z^3-g_2 z+g_3}}.\eeq
Using (\ref{Jac2Wei}) we convert our general solution in terms of Jacobi elliptic functions (\ref{genJac}) to one in terms of Weierstrass elliptic functions:
\beq \cos\left(f(z)\right) = \alpha +\frac{\beta}{k^2 \lambda^2}\left(\wp(z+\omega_3,g_2,g_3)-e_3\right), \label{feqnW} \eeq
with 
\beq g_2 = \frac{4}{3}\left(1-k^2+k^4\right)\lambda^4, \eeq
\beq g_3 = \frac{4}{27}\left(2-3k^2-3k^4+2k^6\right)\lambda^6, \eeq
\beq e_1 = \frac{1}{3}\left(2-k^2\right)\lambda^2,\;\; e_2 = \frac{1}{3}\left(-1+2k^2\right)\lambda^2,\;\; e_3 = \frac{1}{3}(-1-k^2)\lambda^2, \eeq
\beq \label{omega-nums} \omega_1 = \frac{\mathcal{K}(k)}{\lambda},\;\; \omega_3 = \frac{i \mathcal{K}'(k)}{\lambda}, \eeq
where $\mathcal{K}'(k)$ is the complement to $\mathcal{K}(k)$ given by $\mathcal{K}'(k) = \mathcal{K}(1-k^2)$. 
For all cases,
\beq g_2 = \frac{4-2E+E^2}{3(c^2-1)^2}, \label{g2eqn}\eeq
\beq g_3 = \frac{8-6E-3E^2+E^3}{27(c^2-1)^3}. \label{g3eqn}\eeq
One motivation for using Weierstrass elliptic functions instead of Jacobi elliptic functions is that there is a unique expression for the lattice invariants $g_2$ and $g_3$ see (\ref{g2eqn}-\ref{g3eqn}) which holds for all cases, as opposed to Jacobi elliptic functions where a different elliptic modulus $k$ is used for each case see (\ref{subrot}-\ref{suplum}).
The zeros of the polynomial $4s^3-g_2 s-g_3$ are 
\beq s= \frac{E-1}{3(c^2-1)},\;\; s=\frac{E+2}{6(1-c^2)},\;\; s=\frac{4-E}{6(c^2-1)}. \eeq
These roots correspond to $e_1$, $e_2$, and $e_3$ where $e_1>e_2>e_3$. For the various cases:
\begin{align}
\text{subluminal,}&\text{ rotational:}  & e_1 &= \frac{E-1}{3(c^2-1)} ,  &  e_2 &= \frac{E+2}{6(1-c^2)} ,  &  e_3 &=\frac{4-E}{6(c^2-1)},\label{subrotes} \\
\text{superluminal,}&\text{ rotational:}  & e_1 &= \frac{E-1}{3(c^2-1)} ,  &  e_2 &=\frac{4-E}{6(c^2-1)}  ,  &  e_3 &=\frac{E+2}{6(1-c^2)}, \\
\text{subluminal,}&\text{ librational:}    & e_1 &=\frac{E+2}{6(1-c^2)}  ,  &  e_2 &= \frac{E-1}{3(c^2-1)} ,  &  e_3 &=\frac{4-E}{6(c^2-1)}, \\
\text{superluminal,}&\text{ librational:}  & e_1 &=\frac{4-E}{6(c^2-1)}  ,  &  e_2 &=\frac{E-1}{3(c^2-1)} ,  &  e_3 &= \frac{E+2}{6(1-c^2)}. \label{suplumes}
\end{align}

\section{The linear stability problem}\label{setup}
To examine the linear stability of our solutions, we consider
\beq v(z,\tau) = f(z) + \epsilon w(z,\tau)+ \mathcal{O}\left(\epsilon^2\right), \label{wpert} \eeq
where $\epsilon$ is a small parameter. Substituting (\ref{wpert}) into (\ref{travelingSG}), we obtain at order $\epsilon$
\beq (c^2-1) w_{z z}-2c w_{z\tau}+w_{\tau\tau}+\cos\left(f(z)\right) w = 0. \label{weqn} \eeq
Letting $w_1(z,\tau) = w(z,\tau)$ and $w_2(z,\tau) = w_\tau(z,\tau)$ we rewrite (\ref{weqn}) as a first-order system of equations
\beq \frac{\partial}{\partial\tau}\left( \ba{c} w_1 \\ w_2 \ea \right) = \mathcal{L} \left( \ba{c} w_1 \\ w_2 \ea \right)= \left(\ba{cc} 0 & 1 \\
-(c^2-1)\partial_z^2-\cos\left(f(z)\right) & 2c \partial_z \ea \right) \left( \ba{c} w_1 \\ w_2 \ea \right).\label{wsystem}
\eeq

An elliptic solution $f(z)$ is linearly stable if for all $\epsilon >0$ there exists a $\delta>0$ such that if $||w(z,0)|| \, <\delta$ then $||w(z,\tau)|| \, <\epsilon$ for all $\tau>0$. This definition depends on the choice of norm $||\cdot||$, which is specified in the definition of the spectrum in (\ref{spectral-norm}) below.

Since (\ref{wsystem}) is autonomous in $\tau$, we separate variables to look for solutions of the form
\beq \left(\ba{c} w_1(z,\tau) \\ w_2(z,\tau) \ea \right) = e^{\lambda \tau} \left(\ba{c} W_1(z) \\ W_2(z) \ea \right), \label{wWeqn} \eeq
resulting in the spectral problem 
\beq \lambda \left( \ba{c} W_1 \\ W_2 \ea \right) = \mathcal{L} \left( \ba{c} W_1 \\ W_2 \ea \right)= \left(\ba{cc} 0 & 1 \\
-(c^2-1)\partial_z^2-\cos\left(f(z)\right) & 2c \partial_z \ea \right) \left( \ba{c} W_1 \\ W_2 \ea \right).\label{spectralproblem}
\eeq
Here
\beq \label{spectral-norm} \sigma_{\mathcal{L}}=\{\lambda\in\mathbb{C}: \max_{x\in \mathbb{R}} \left( |W_1(x)|,|W_2(x)| \right) < \infty\}, \eeq
or
\beq W_1,W_2\in C_b^0(\mathbb{R}). \eeq
For spectral stability, we need to demonstrate that the spectrum $\sigma_{\mathcal{L}}$ does not enter the open right half of the complex $\lambda$ plane.
Since (\ref{SGlab}) is Hamiltonian \cite{AS}, the spectrum of its linearization is symmetric with respect to both the real and imaginary axis \cite{W}. In other words, proving spectral stability for elliptic solutions to (\ref{SGlab}) amounts to proving that the stability spectrum lies strictly on the imaginary axis. We show that the elliptic solutions are spectrally stable only in the subluminal rotational case. We demonstrate spectral elements in the right-half plane near the origin for all choices of the parameters $E$ and $c$ outside the subluminal rotational regime.

\section{The Lax problem} \label{lax-section}
We wish to obtain an analytical representation for the spectrum $\sigma_{\mathcal{L}}$. As mentioned in the introduction, this analytical representation comes from the squared eigenfunction connection between the linear stability problem (\ref{spectralproblem}) and the Lax pair of (\ref{SGlab}).
The Lax pair for sine-Gordon is well known \cite{AKNS-SG,AS,AKNS,Kaup75}. The compatibility condition $\chi_{xt} = \chi_{tx}$ of the Lax pair,
\beq  \chi_{x} = \left( \ba{cc}\displaystyle -\frac{i \zeta}{2}+\frac{i \cos(u)}{8\zeta} & \displaystyle \frac{i \sin(u)}{8\zeta}-\frac{1}{4}\left(u_x+u_t\right) \\ \displaystyle
 \frac{i \sin(u)}{8\zeta}+\frac{1}{4}\left(u_x+u_t\right) & \displaystyle  \frac{i \zeta}{2}-\frac{i \cos(u)}{8\zeta}  \ea \right) \chi , \eeq
\beq \chi_{t} = \left( \ba{cc} \displaystyle -\frac{i \zeta}{2}-\frac{i \cos(u)}{8\zeta} & \displaystyle -\frac{i \sin(u)}{8\zeta}-\frac{1}{4}\left(u_x+u_t\right) \\ \displaystyle
 -\frac{i \sin(u)}{8\zeta}+\frac{1}{4}\left(u_x+u_t\right) & \displaystyle  \frac{i \zeta}{2}+\frac{i \cos(u)}{8\zeta}  \ea \right) \chi , \eeq
is (\ref{SGlab}). 
We transform the Lax pair by moving into a traveling reference frame letting $z=x-ct,$ $\tau=t,$ and $v(z,\tau)= u(x,t).$ Additionally, to examine the stationary solutions we let $v(z,\tau) = f(z)$ so that
\beq \chi_z = \left( \ba{cc} C & D \\ -D^* & -C \ea \right)\chi, \label{LaxZsimp} \eeq
\beq \chi_\tau = \left( \ba{cc}  A & B \\ -B^* & -A  \ea \right)\chi, \label{LaxTausimp} \eeq
where $^*$ represents the complex conjugate, and 
\begin{align}
A &=  -\frac{ i \left(4(1+c)\zeta^2-(c-1) \cos(f(z))\right)}{8\zeta}, \\
B &= \frac{(c-1) \left(i \sin(f(z))+2(c+1) \zeta f'(z)\right)}{8\zeta}, \\
C &=  -\frac{i \zeta}{2}+\frac{i \cos(f(z))}{8\zeta},\label{Aeqn}\\
D &= \frac{i \sin(f(z))}{8\zeta}-\frac{f'(z)}{4}+\frac{cf'(z)}{4},
\label{Deqn}
\end{align}
whose compatibility condition  $\chi_{z\tau} = \chi_{\tau z}$ is (\ref{pendulumeqn}).
We define $\sigma_L$, or informally the Lax spectrum, as the set of all $\zeta$ for which (\ref{LaxZsimp}) has a bounded (in z) solution. 
Examining (\ref{LaxTausimp}), since $A$ and $B$ are independent of $\tau,$ we separate variables. Let
\beq \chi(z,\tau) = e^{\Omega \tau} \varphi(z), \label{chiO} \eeq
with $\Omega$ being independent of $\tau,$ but possibly depending on $z$. Substituting (\ref{chiO}) into (\ref{LaxTausimp}) and canceling the exponential, we find 
\beq \left( \ba{cc} A-\Omega & B \\ -B^* & -A- \Omega \ea \right) \varphi = 0. \label{OmegaMat}\eeq
To have nontrivial solutions, we require the determinant of (\ref{OmegaMat}) to be zero. Using the definitions of $A$ and $B$, we get
\beq \Omega^2 = A^2 -B B^* = \frac{1}{64} \left( -8(c^2-1)(E-1) -\frac{(c-1)^2}{\zeta^2}- 16(c+1)^2 \zeta^2\right).\label{Omegaeqn} \eeq
As expected, $\Omega$ is independent of both $\tau$ (by construction) and $z$ (by integrability). Thus $\Omega$ is strictly a function of $\zeta$ and the solution parameters $c$ and $E$. We remark that $\Omega$ takes the form (\ref{Omegaeqn}) for all values of $c$ and $E$ regardless of where we are in parameter space.

To satisfy (\ref{OmegaMat}), we let
\beq \varphi(z) = \gamma(z) \left(\ba{c} -B(z) \\ A(z)-\Omega \ea \right), \label{varphieqn} \eeq
where $\gamma(z)$ is a scalar function. By construction of $\varphi(z)$, $\chi(z,\tau)$ satisfies (\ref{LaxTausimp}). Since (\ref{LaxZsimp}) and (\ref{LaxTausimp}) commute, it is possible to choose $\gamma(z)$ such that $\chi$ also satisfies (\ref{LaxZsimp}). Indeed, $\gamma(z)$ satisfies a first-order linear equation, whose solution is given by
\beq \gamma(z) =\gamma_0 \exp\left(\int \frac{-C(A-\Omega)+B D^*-A_z}{A-\Omega}\text{d}z\right). \label{gammaeqn} \eeq
For almost every $\zeta\in\mathbb{C}$, we have explicitly determined the two linearly independent solutions of (\ref{LaxZsimp}), {\em i.e.}, those corresponding to the positive and negative signs of $\Omega$ in (\ref{Omegaeqn}). Assuming $\Omega \ne 0$ these two solutions are, by construction linearly independent. In the case where $\zeta$ is a root of $\Omega,$ the second solution to (\ref{LaxZsimp}) can be determined via the reduction-of-order method. 

Since (\ref{LaxZsimp}) and (\ref{LaxTausimp}) share eigenfunctions, $\sigma_L$ is the set of all $\zeta \in \mathbb{C}$ such that (\ref{varphieqn}) is bounded for all $z\in \mathbb{R}$. The vector part of $\varphi$ is bounded for all $z$, so we only need that the scalar function $\gamma(z)$ is bounded as $z\rightarrow \pm \infty$. A necessary and sufficient condition for this is 
\beq \left< \text{Re} \left( \frac{-C(A-\Omega)+B D^*-A_z}{A-\Omega} \right) \right> =0 \label{intcond},\eeq
where $\left<\cdot\right>$ is the average over one period $2\mathcal{K}(k)$ of the integrand, and $\text{Re}$ denotes the real part. The integral condition (\ref{intcond}) completely determines the Lax spectrum $\sigma_L$.

\section{The squared eigenfunction connection}\label{sqeig-section}
A connection between the eigenfunction of the Lax pair (\ref{LaxZsimp}) and (\ref{LaxTausimp}) and the eigenfunctions of the linear stability problem (\ref{spectralproblem}) using squared eigenfunctions is well known \cite{AKNS}. We prove the following theorem.
\vspace{2mm}
\begin{thm}
The sum of squares,
\beq w(z,\tau) = \chi_1(z,\tau)^2+\chi_2(z,\tau)^2, \label{sqeig} \eeq
satisfies the linear stability problem (\ref{weqn}) for $f(z)$. Here $\chi=(\chi_1,\chi_2)^T$ is any solution of (\ref{LaxZsimp}-\ref{LaxTausimp}).
\end{thm}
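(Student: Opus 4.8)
The plan is to verify the identity by exhibiting the three natural quadratic combinations of the eigenfunction components and showing they close under both spatial and temporal differentiation. Alongside $w=\chi_1^2+\chi_2^2$, I would introduce the auxiliary quantities $p=\chi_1^2-\chi_2^2$ and $q=\chi_1\chi_2$. As functions of $(\chi_1,\chi_2)$ these are linearly independent quadratic forms, so an expression $\mu w+\nu p+\rho q$ vanishes for every solution $\chi$ if and only if $\mu=\nu=\rho=0$. Hence it suffices to compute $(c^2-1)w_{zz}-2cw_{z\tau}+w_{\tau\tau}+\cos(f)w$ as a linear combination of $w,p,q$ and check that each of the three coefficients vanishes.

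First I would differentiate $w$, $p$, $q$ once in $z$ and once in $\tau$ using (\ref{LaxZsimp}) and (\ref{LaxTausimp}), obtaining closed first-order systems $\partial_z(w,p,q)^T=M\,(w,p,q)^T$ and $\partial_\tau(w,p,q)^T=N\,(w,p,q)^T$ with entries built from $A,B,C,D$. The calculation is made tractable by two simplifications of the matrix entries: the diagonal entries are purely imaginary, so $A^*=-A$ and $C^*=-C$, while in the off-diagonal combinations the derivative terms either cancel or survive cleanly, giving $D-D^*=\tfrac{i\sin f}{4\zeta}$, $B-B^*=\tfrac{(c-1)i\sin f}{4\zeta}$, $D+D^*=\tfrac{(c-1)f'}{2}$, and $B+B^*=\tfrac{(c^2-1)f'}{2}$. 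In particular only $\sin f$, $\cos f$, and $f'$ (never $f''$) enter, since the coefficients of $w_z$ and $w_\tau$ contain no $f'$ and are differentiated at most once, with $\partial_z\cos f=-\sin f\,f'$.

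Next I would differentiate a second time, re-applying $M$ and $N$ to the derivatives of $w,p,q$ and differentiating the coefficient functions, to express $w_{zz}$, $w_{z\tau}$, and $w_{\tau\tau}$ each as an explicit $w$-, $p$-, $q$-combination. Assembling $(c^2-1)w_{zz}-2cw_{z\tau}+w_{\tau\tau}$ and adding $\cos(f)\,w$ then reduces the theorem to three scalar identities.

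The main obstacle is the $w$-coefficient, where all the spectral-parameter dependence is concentrated. Its contributions organize into powers $\zeta^2$, $\zeta^0$, and $\zeta^{-2}$: the $\zeta^{\pm2}$ terms must cancel by pure algebra in $c$ (each reduces to $(c\pm1)$ times a bracket that collapses to zero), and the surviving $\zeta^0$ part must, after using $\sin^2 f+\cos^2 f=1$, reduce to exactly $-\cos f$ so as to annihilate the potential term $\cos(f)\,w$. The $p$- and $q$-coefficients vanish by analogous cancellations, each reducing to an identity of the form $(c^2-1)\big[(c+1)-2c+(c-1)\big]=0$. The only structural input beyond these elementary manipulations is that $\chi$ genuinely solves both Lax equations: their compatibility is precisely the profile equation (\ref{pendulumeqn}), which guarantees that a common solution $\chi$ exists and legitimizes substituting either first-order system for the derivatives throughout the computation. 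Establishing that every term—especially the full $\zeta^{\pm2}$ dependence—cancels is the one place where care is required, and I expect it to be the crux of the verification.
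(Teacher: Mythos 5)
Your proposal is correct and follows essentially the same route as the paper, whose proof is precisely the direct calculation you describe: substitute $w=\chi_1^2+\chi_2^2$ into (\ref{weqn}) and eliminate all $z$- and $\tau$-derivatives of $\chi_1,\chi_2$ via the Lax pair; your organization of that computation through the closed triple $(w,p,q)$ of quadratic forms is just tidier bookkeeping for the same argument. The cancellations you anticipate do all occur (indeed the $\zeta^0$ part of the $w$-coefficient collapses to $-\cos f$ term by term, without even invoking $\sin^2 f+\cos^2 f=1$), so the verification goes through as planned.
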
 
\begin{proof}
The proof is done by direct calculation. Substitute $w(z,\tau)$ into the left-hand side of (\ref{weqn}). Eliminate $z$-derivatives of $\chi_1$ and $\chi_2$ (up to order 2) using (\ref{LaxZsimp}) and eliminate $\tau$-derivatives of $\chi_1$ and $\chi_2$ (up to order 2) using (\ref{LaxTausimp}). The resulting expression for the left-hand side is 0, thus demonstrating that (\ref{weqn}) is satisfied, finishing the proof.
\end{proof}
To establish the connection between $\sigma_{\mathcal{L}}$ and $\sigma_L,$ we examine the right- and left-hand sides of (\ref{wWeqn}). Substituting (\ref{sqeig}) and (\ref{chiO}) to the left hand side of (\ref{wWeqn}) we find
\beq e^{2\Omega \tau} \left( \ba{c} \varphi_1^2+\varphi_2^2 \\ 2\Omega\left(\varphi_1^2+\varphi_2^2\right) \ea \right) = e^{\lambda \tau} 
\left( \ba{c} W_1(z) \\ W_2(z) \ea \right), \label{sqeigconnection}
\eeq
so we conclude that 
\beq \lambda = 2\Omega(\zeta), \label{lambdaconnection} \eeq
with eigenfunctions given by 
\beq \left( \ba{c} W_1(z) \\ W_2(z) \ea \right) =  \left( \ba{c} \varphi_1^2+\varphi_2^2 \\ 2\Omega\left(\varphi_1^2+\varphi_2^2\right) \ea \right). \label{W12eqns} \eeq
This gives the connection between the $\sigma_L$ spectrum and the $\sigma_{\mathcal{L}}$ spectrum. It is also necessary to check that indeed all solutions of (\ref{spectralproblem}) are obtained through (\ref{sqeigconnection}). This is not shown explicitly here, but is done analogous to the work in \cite{BD,BDN}.

Although in principle the above construction determines $\sigma_{\mathcal{L}}$, it remains to be seen how practical this determination is. In the following section we discuss a technique for explicitly integrating (\ref{intcond}) using Weierstrass elliptic functions, leading to a more explicit characterization of $\sigma_{\mathcal{L}}$.

\section{The Lax spectrum in terms of elliptic functions}\label{Weierstrass}
In terms of Weierstrass elliptic functions, (\ref{intcond}) becomes
\beq \text{Re} \int_0^{2\omega_1}  \frac{-C(A-\Omega)+B D^*-A_z}{A-\Omega} \text{d}z =0, \label{intcond2} \eeq
with $A$, $B$, $C,$ and $D$ given in (\ref{Aeqn}-\ref{Deqn}). 
Substituting in for $f(z)$ using (\ref{feqnW}) we find that (\ref{intcond2}) is
\beq \text{Re} \int_0^{2\omega_1}  \frac{C_1 +C_2 \wp (z) + C_3 \wp'(z)}{C_4+C_5 \wp(z)} \text{d}z =0, \label{intcond3} \eeq
with $\wp(z) = \wp(z+\omega_3,g_2,g_3)$ with the dependence on $\omega_3$, $g_2,$ and $g_3$ suppressed. The $C_j$'s depend on $\zeta$ but are independent of $z$. Like $\Omega(\zeta)$, the $C_j$'s take one form regardless of where the solution is in parameter space. They are given by
\begin{align}
C_1 &= \frac{1}{3} \left(-3i -16i(E-1) \zeta^2+48 i \zeta^4+3i c\left(1+8(E-1)\zeta^2+16\zeta^4\right)+\left(8(e-1) \zeta+96 \zeta^3\right)  \Omega(\zeta) \right), \\
C_2 &= 16(c^2-1) \zeta \left(i \zeta + \Omega(\zeta) \right), \\
C_3 &= -8 (c-1)^2(c+1) \zeta, \\
C_4 &= \frac{8}{3} \zeta \left( (c-1)(E-1)+12 (c+1) \zeta^2 - 24 i \zeta \Omega(\zeta) \right),\\
C_5 &= 16 (c-1)^2 (c+1) \zeta.
\end{align}
We evaluate the integral in (\ref{intcond3}) explicitly \cite{GR}. We find
\beq \text{Re}\left[\frac{2 \omega_1 C_2}{C_5} +\frac{4\left(C_1 C_5- C_2 C_4\right)}{\wp'(\rho) C_5^2} \left(\zeta_w(\rho) \omega_1-\zeta_w(\omega_1) \rho\right)\right] = 0,\label{intcond4} \eeq
with 
\beq \rho = \rho(\zeta) = \wp^{-1}\left(- \frac{C_4(\zeta)}{C_5(\zeta)},g_2,g_3\right), \eeq
and $\zeta_w$ is the Weierstrass Zeta function \cite{O}. Note that $\wp^{-1}$ is a multivalued function, but for our analysis $\rho$ is chosen as any value such that $\wp(\rho) = -C_4(\zeta)/C_5(\zeta).$ 
Substituting for the $C_j$'s (\ref{intcond4}) becomes
\beq \text{Re} \left[ \frac{2\omega_1\left(i \zeta +\Omega(\zeta)\right)}{c-1}+\frac{4\zeta \left(-i(c-1)(E-1)-4i(c+1)\zeta^2-8 \zeta \Omega(\zeta)\right)}{(c-1)^3(c+1)\wp'(\rho)}\left(\zeta_w(\rho) \omega_1-\zeta_w(\omega_1) \rho\right) \right] =0. \label{intcond5} \eeq
We simplify this further by recognizing that 
\beq {\wp'}^2(\rho)=4\wp^3(\rho)-g_2 \wp(\rho)-g_3 = 4\left(-\frac{C_4(\zeta)}{C_5(\zeta)}\right)^3-g_2\left(-\frac{C_4(\zeta)}{C_5(\zeta)}\right)-g_3. \eeq
Substituting in for $C_4(\zeta)$ and $C_5(\zeta)$ gives
\beq {\wp'}^2(\rho) = 4\left(\frac{\zeta \left(-i(c-1)(E-1)-4i(c+1)\zeta^2-8 \zeta \Omega(\zeta)\right)}{(c-1)^3(c+1)}\right)^2. \label{wpprime} \eeq
Thus (\ref{intcond5}) simplifies to
\beq \text{Re}\left( \frac{2\omega_1\left(i \zeta +\Omega(\zeta)\right)}{c-1}+ 2\nu\left(\zeta_w(\rho) \omega_1-\zeta_w(\omega_1) \rho\right) \right) =0, \label{intcond6}\eeq
where 
\beq \nu = \begin{cases} +1 &\mbox{if } -\frac{\pi}{2} <\arg\left(\frac{\zeta \left(-i(c-1)(E-1)-4i(c+1)\zeta^2-8 \zeta \Omega(\zeta)\right)}{(c-1)^3(c+1)}\right) \le \frac{\pi}{2}, \\ 
-1 & \mbox{otherwise.} \end{cases} \eeq
Using (\ref{omega-nums}), and applying the formula for the Weierstrass $\zeta$ function evaluated at a half period \cite{BF}, $\zeta_w(\omega_1)=\sqrt{e_1-e_3}\left(\mathcal{E}(k)-\frac{e_1}{e_1-e_3} \mathcal{K}(k)\right),$ (\ref{intcond6}) becomes
\beq \text{Re}\left[  \frac{2\mathcal{K}(k)\left(i \zeta +\Omega(\zeta)\right)}{c-1}+ 2\nu\left(\zeta_w(\rho) \mathcal{K}(k)-\sqrt{e_1-e_3}\left(\mathcal{E}(k)-\frac{e_1}{e_1-e_3} \mathcal{K}(k)\right)\rho\right) \right] =0. \label{intcond7} \eeq
Here 
\beq \mathcal{E}(k) = \int_0^{\pi/2} \sqrt{1-k^2 \sin^2 y}\, \text{d}y, \eeq
is the complete elliptic integral of the second kind. 
We have simplified the integral condition (\ref{intcond2}) significantly. Thus $\zeta\in \sigma_L$ if and only if (\ref{intcond7}) is satisfied. To simplify notation, let
\beq I(\zeta) =  \frac{2\omega_1\left(i \zeta +\Omega(\zeta)\right)}{c-1}+ 2\nu\left(\zeta_w(\rho) \omega_1-\zeta_w(\omega_1) \rho\right), \label{Ieqn} \eeq
so that (\ref{intcond7}) is 
\beq \text{Re} \left[ I(\zeta) \right] =0. \label{intcond-simple}\eeq
Next, we wish to examine the level sets of the left-hand side of (\ref{intcond-simple}). To this end, we differentiate $I(\zeta)$ with respect to $\zeta$. To evaluate this derivative we use the chain rule and note that
\beq \frac{\partial}{\partial \zeta} \zeta_w(\rho) = -\wp(\rho)\frac{\partial \rho}{\partial \zeta} = \frac{C_4(\zeta)}{C_5(\zeta)}\frac{\textrm{d} \wp^{-1}}{\textrm{d} \zeta} \left(-\frac{C_4(\zeta)}{C_5(\zeta)},g_2,g_3\right) \left(-\frac{C_4(\zeta)}{C_5(\zeta)}\right)'.\eeq
Since
\beq \frac{\textrm{d}}{\textrm{d}z} \wp^{-1}\left(-\frac{C_4(\zeta)}{C_5(\zeta)},g_2,g_3\right)=\frac{1}{\wp'\left (\wp^{-1}\left(-\frac{C_4(\zeta)}{C_5(\zeta)},g_2,g_3\right)\right)}=\frac{1}{\wp'(\rho)}, \eeq
we use (\ref{wpprime}) to obtain
\beq \frac{\text{d}I(\zeta)}{\text{d}\zeta}=\frac{3 (c-1)\omega_1-48(c+1)\zeta^4 \omega_1-8 \zeta^2\left(6(c^2-1)\zeta_w(\omega_1)+(1-E)\omega_1\right)}{96\zeta^3\Omega(\zeta)}. \label{derivintcond} \eeq 
Taking the real part of (\ref{derivintcond}) does not give another characterization of the spectrum.  Instead, if we think of (\ref{intcond7}) as restricting to the zero level set of the left-hand side. Then we use (\ref{derivintcond}) to determine a tangent vector field which allows us to map out level curves originating from any point. This is explained in more detail in Section~\ref{imagaxisregions}. 

\section{The $\sigma_{\mathcal{L}}$ spectrum on the imaginary axis}\label{imagaxisregions}
In this section we discuss $\sigma_{\mathcal{L}}\cap i\mathbb{R}$. As we demonstrate, this corresponds to the part of $\sigma_L$ lying on the real axis for both rotational and librational waves, as well as a part of $\sigma_L$ lying on the imaginary axis for rotational waves. Using (\ref{intcond7}) we obtain analytic expressions for $\sigma_L$ corresponding to $\sigma_{\mathcal{L}}\cap i \mathbb{R}$.

We begin by considering $\zeta\in \mathbb{R}$. As we demonstrate below, (\ref{intcond7}) is satisfied for any real $\zeta$. Using (\ref{Omegaeqn}) and (\ref{lambdaconnection}), we determine the corresponding parts of $\sigma_{\mathcal{L}}$.

\vspace{2mm}
\begin{thm}\label{realzetathm}
The condition (\ref{intcond7}) is satisified for $\zeta\in\mathbb{R}$.
\end{thm}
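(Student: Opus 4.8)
The plan is to avoid manipulating the explicit expression (\ref{intcond7}) directly and instead argue at the level of the boundedness condition it encodes. Since (\ref{intcond7}), equivalently $\mathrm{Re}\,[I(\zeta)]=0$ with $I$ as in (\ref{Ieqn}), is precisely the statement $\zeta\in\sigma_L$, and $\sigma_L$ is by definition the set of $\zeta$ for which the $z$-equation (\ref{LaxZsimp}) admits a solution bounded for all real $z$, it suffices to prove that for real $\zeta$ \emph{every} solution of (\ref{LaxZsimp}) is bounded. First I would record that for all four parameter regimes the modulus $k$ and the scale $\lambda$ are real, so that $\omega_1=\mathcal{K}(k)/\lambda$ and $\zeta_w(\omega_1)$ are real; these reality facts are also what one needs for the direct route described below.

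The key step is the observation that for real $\zeta$ the coefficient matrix $M=\left(\begin{smallmatrix}C&D\\-D^*&-C\end{smallmatrix}\right)$ of (\ref{LaxZsimp}) is anti-Hermitian. Indeed, from (\ref{Aeqn}) the entry $C=-i\zeta/2+i\cos(f)/(8\zeta)$ is purely imaginary whenever $\zeta$ and $f(z)$ are real (and $f$ is real throughout its domain), so $C+\bar C=0$; the lower-left entry $-D^*=-\bar D$ is exactly the negative conjugate of the upper-right entry $D$, so the off-diagonal contribution to $M+M^\dagger$ also vanishes. Hence $M+M^\dagger=0$, and for any solution $\chi$ of (\ref{LaxZsimp}),
\beq \frac{\mathrm{d}}{\mathrm{d}z}\left(|\chi_1|^2+|\chi_2|^2\right)=\chi^\dagger\!\left(M+M^\dagger\right)\chi=0. \eeq
Thus the Hermitian norm $|\chi_1|^2+|\chi_2|^2$ is constant in $z$; every solution is bounded (indeed of constant modulus), so $\zeta\in\sigma_L$ and (\ref{intcond7}) is satisfied.

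The one gap to close is that the equivalence ``(\ref{intcond7}) $\Leftrightarrow$ $\zeta\in\sigma_L$'' was derived under genericity assumptions (for instance $\Omega\neq0$ and a well-defined $\rho$), which can fail at isolated real $\zeta$. I would handle this by noting that the left-hand side of (\ref{intcond-simple}) is continuous in $\zeta$, so establishing the identity on the dense set of real $\zeta$ where the construction is valid extends it to all of $\mathbb{R}$ by continuity. I expect this continuity bookkeeping, rather than the anti-Hermitian computation, to be the only genuine subtlety.

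As a cross-check---and the route one would take if a purely formula-based proof of (\ref{intcond7}) is preferred---one can substitute real $\zeta$ into (\ref{Ieqn}) and split into the cases $\Omega^2>0$ ($\Omega$ real) and $\Omega^2<0$ ($\Omega$ imaginary) dictated by (\ref{Omegaeqn}). In each case the first term of $I$ is manifestly real or imaginary, and the content of the proof is showing that $\mathrm{Re}\,[\zeta_w(\rho)\,\omega_1-\zeta_w(\omega_1)\,\rho]$ cancels it. Here the main obstacle is controlling $\rho=\wp^{-1}(-C_4/C_5)$: one must determine from the reality or imaginarity of $-C_4/C_5$ where $\rho$ sits in the period lattice and deduce the corresponding reality properties of $\zeta_w(\rho)$. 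This is precisely the bookkeeping the anti-Hermitian argument lets one avoid, which is why I would present that argument as the main proof.
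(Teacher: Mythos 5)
Your proof is correct, but it takes a genuinely different route from the paper's. The paper works directly with the explicit formula (\ref{intcond7}): it shows that for real $\zeta$ each ingredient $\Omega(\zeta)$, $\rho$, and $\zeta_w(\rho)$ is purely imaginary --- the last two by verifying the inequality $-C_4/C_5\le e_3$ (split into subluminal and superluminal cases) so that $\rho=\wp^{-1}(-C_4/C_5)$ lies on the imaginary axis, where $\wp$ and $\zeta_w$ preserve imaginarity --- whence everything inside the real part of (\ref{intcond7}) is purely imaginary and the condition holds. You bypass the formula entirely: anti-Hermiticity of the coefficient matrix of (\ref{LaxZsimp}) for real $\zeta$ conserves $|\chi_1|^2+|\chi_2|^2$, so every solution is bounded, $\zeta\in\sigma_L$, and (\ref{intcond7}) follows from the paper's own equivalence between membership in $\sigma_L$ and that condition. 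Your argument is shorter and more conceptual (it is the standard ``skew-adjoint spectral problem'' reason the real axis lies in the Lax spectrum), and it has the added virtue that, because \emph{all} solutions are bounded, the condition holds for both branches of $\Omega$; this sidesteps the worry that $\sigma_L$ only requires \emph{some} bounded solution while (\ref{intcond}) tests the particular $\varphi$ of (\ref{varphieqn}). What it costs is exactly what you identify: you must lean on the chain from boundedness of $\varphi$ to boundedness of $\gamma$ to (\ref{intcond}), which needs the vector part of (\ref{varphieqn}) nonvanishing and $\Omega\ne0$, and then a continuity argument for the exceptional real $\zeta$ (note that $\Omega^2<0$ strictly on $\mathbb{R}\setminus\{0\}$ away from the boundary of parameter space, so the exceptional set is essentially $\{0\}$, where (\ref{intcond7}) is not defined anyway). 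The paper's computation, by contrast, is self-contained and additionally delivers the reality information about $\Omega$, $\rho$, and $\zeta_w(\rho)$ that is reused later (in mapping $\zeta\in\mathbb{R}$ to $\sigma_{\mathcal{L}}\cap i\mathbb{R}$ and in the proof of Theorem~\ref{imagzetathm}), which your route does not produce.
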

\begin{proof}
Since $k$, $c$, $\mathcal{K}(k)$, and $\mathcal{E}(k)$ are real, it suffices to show that $\Omega(\zeta)\in i \mathbb{R}$, $\rho\in i \mathbb{R}$, and $\zeta_w(\rho)\in i \mathbb{R}$. Rewriting (\ref{Omegaeqn}) in the superluminal case,
\beq \Omega^2(\zeta) = -\frac{1}{64} \left(\left(-4(1+c)\zeta+\frac{c-1}{\zeta}\right)^2+8 E (c^2-1)\right), \label{Omegasup} \eeq
and in the subluminal case,
\beq \Omega^2(\zeta) = -\frac{1}{64} \left(\left(-4(1+c)\zeta-\frac{c-1}{\zeta}\right)^2+8 (2-E) (1-c^2)\right). \label{Omegasub} \eeq
In either case $\Omega^2 \le 0$ and $\Omega(\zeta)\in i \mathbb{R}$.
Since $\zeta_w$ with $g_2,g_3\in \mathbb{R}$ takes real values to real values and purely imaginary values to purely imaginary values \cite{O}, to prove $\zeta_w(\rho)\in i \mathbb{R}$ it suffices to show that $\rho = \wp^{-1}\left(-\frac{C_4(\zeta)}{C_5(\zeta)},g_2,g_3\right)\in i \mathbb{R}.$ For $g_2,g_3\in \mathbb{R}$, $\wp(\mathbb{R},g_2,g_3)$ maps to $[e_1,\infty)$, and since $\wp(i x,g_2,g_3)=-\wp(x,g_2,-g_3)$ we have that $\wp(i \mathbb{R},g_2,g_3)$ maps to $(-\infty,e_3]$. Thus we need to show that for $\zeta\in \mathbb{R}$, $-\frac{C_4(\zeta)}{C_5(\zeta)}\le e_3.$ Again we split into cases. In the superluminal case, we want to show
\beq \frac{(c-1)(E-1)+12 (c+1)\zeta^2-24i \zeta \Omega(\zeta)}{6(c-1)(1-c^2)}\le \frac{E+2}{6(1-c^2)}. \eeq
Substituting in for $\Omega(\zeta)$ using (\ref{Omegasup}) and simplifying the left- and right-hand sides of this expression yields
\beq \frac{4(c+1)\zeta^2}{c-1}+\frac{\sqrt{\left(-4(1+c)\zeta^2+(c-1)\right)^2+8 E(c^2-1)\zeta^2}}{c-1}\ge 1. \label{supeqnWTS} \eeq
Since $ \sqrt{\left(-4(1+c)\zeta^2+(c-1)\right)^2+8 E(c^2-1)\zeta^2} \ge \sqrt{\left(-4(1+c)\zeta^2+(c-1)\right)^2}$, (\ref{supeqnWTS}) is satisfied.
For the subluminal case we proceed in a similar manner, noting the different value of $e_3$ from (\ref{subrotes}-\ref{suplumes}). We want to show
\beq \frac{(c-1)(E-1)+12 (c+1)\zeta^2-24i \zeta \Omega(\zeta)}{6(c-1)(1-c^2)}\le \frac{E-4}{6(1-c^2)}. \eeq
Substituting in for $\Omega(\zeta)$ using (\ref{Omegasub}) and simplifying the left- and right-hand sides of this expression yields
\beq \frac{4(c+1)\zeta^2+\sqrt{\left(-4(c+1)\zeta^2+(1-c)\right)^2+8 (2-E)(1-c^2)\zeta^2}}{1-c}\ge 1. \label{subeqnWTS} \eeq
Since $\sqrt{\left(-4(c+1)\zeta^2+(1-c)\right)^2+8 (2-E)(1-c^2)\zeta^2}\ge \sqrt{\left(-4(c+1)\zeta^2+(1-c)\right)^2}$, (\ref{subeqnWTS}) is satisfied.
\end{proof}

\begin{figure}
\begin{tabular}{cc}
  \includegraphics[width=75mm]{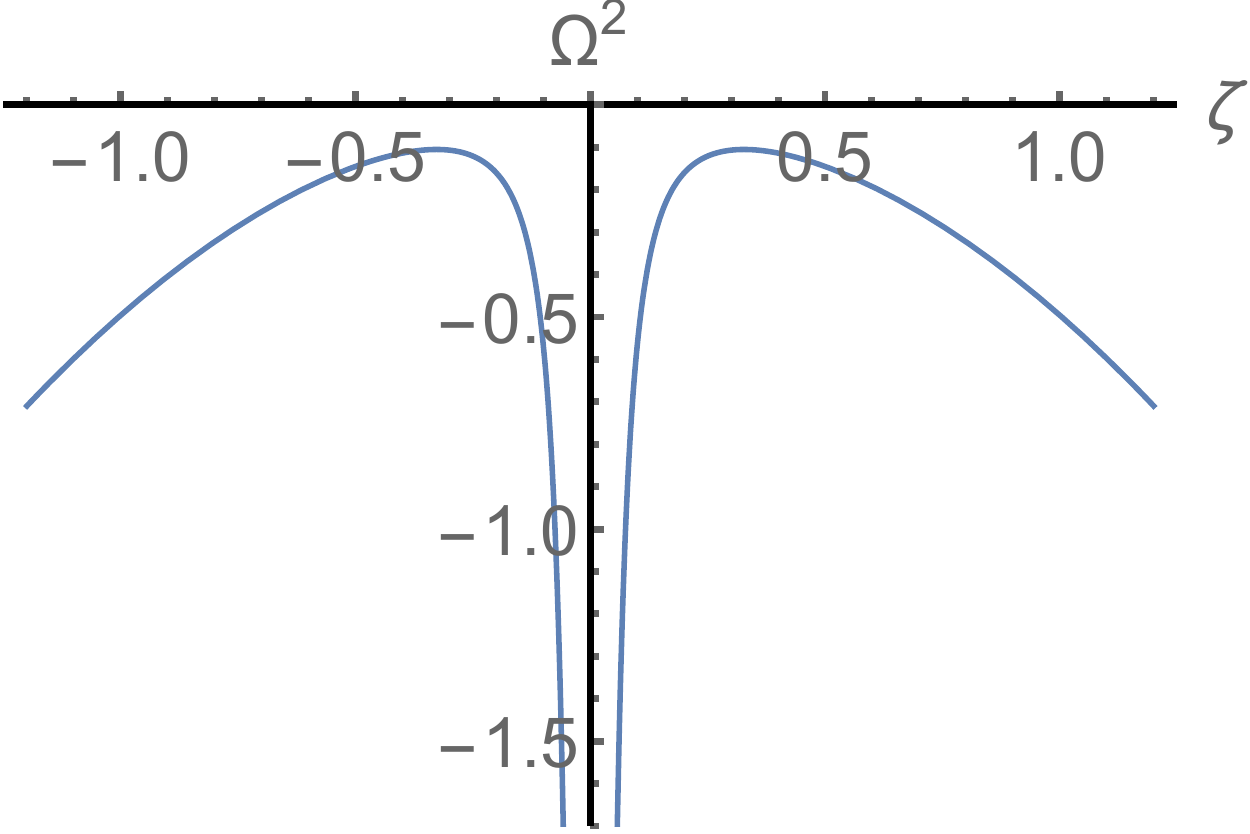} & \includegraphics[width=75mm]{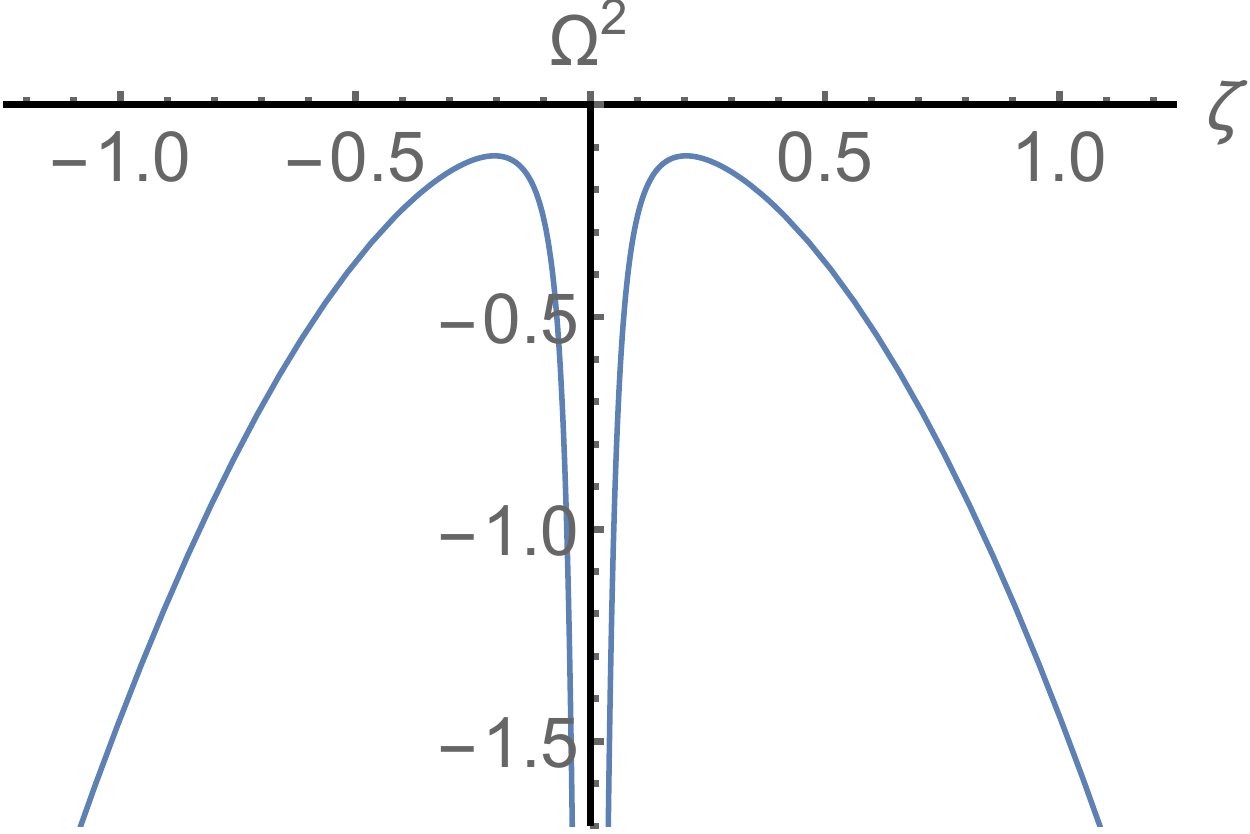} \\
(a) Subluminal: $c^2<1$ & (b)  Superluminal: $c^2>1$ 
\end{tabular}
\caption{$\Omega^2$ as a function of real $\zeta$ for subluminal and superluminal waves: (a) subluminal: $c=0.4$ and $E=1,$ and (b) superluminal: $c=1.4$ and $E=1$. }
\label{SupSubOmega}
\end{figure}

At this point, we know that $\mathbb{R}\subset \sigma_L$. We wish to see what this corresponds to for $\sigma_{\mathcal{L}}$. For convenience define 
\beq S_\Omega = \left\{\Omega : \Omega^2 = \frac{1}{64} \left( -8(c^2-1)(E-1) -\frac{(c-1)^2}{\zeta^2}- 16(c+1)^2 \zeta^2\right) \text{ and } \zeta\in\sigma_L\right\}. \eeq
As was seen in the proof of Theorem~\ref{realzetathm}, when $\zeta\in\mathbb{R}$, $\Omega(\zeta)\in i \mathbb{R}$. Applying (\ref{lambdaconnection}), we see that $\zeta\in \mathbb{R}$ corresponds to $\sigma_{\mathcal{L}}\cup i \mathbb{R}$. Representative plots of $\Omega^2$ are shown in Figure~\ref{SupSubOmega}. The subset of $S_\Omega$ corresponding to $\zeta\in \mathbb{R}$ consists of $(-i \infty,-i|\Omega_m|]\cup [i |\Omega_m|,i \infty),$ where $\Omega_m^2$ is the maximum value of $\Omega^2$. The set $S_\Omega$ corresponding to $\zeta\in \mathbb{R}$ is quadruple covered as for all values of $\Omega$ there are four values of $\zeta$ which map to it, except at $\Omega=\pm \Omega_m$, where just two values of $\zeta$ map to it. $\Omega_m$ can be found explicitly by finding the extrema of $\Omega^2(\zeta)$. In the subluminal case, $\Omega^2(\zeta)$ reaches its maxima at 
\beq \zeta_m = \pm \sqrt{\frac{1-c}{4(1+c)}}, \;\; \Omega_m^2 = -\frac{1}{8}(1-c^2)(2-E), \eeq
and in the superluminal case, $\Omega^2(\zeta)$ reaches its maxima at 
\beq \zeta_m = \pm \sqrt{\frac{c-1}{4(1+c)}}, \;\; \Omega_m^2 = -\frac{1}{8}(c^2-1)E. \eeq
Applying (\ref{lambdaconnection}) we have $\left(-i \infty, -\lambda_1\right] \cup \left[\lambda_1,i \infty\right) \subset \sigma_{\mathcal{L}}$ where 
\beq \lambda_1 = i \sqrt{\frac{(1-c^2)(2-E)}{2}}, \eeq
in the subluminal case, and 
\beq \lambda_1 = i \sqrt{\frac{E(c^2-1)}{2}}, \eeq
in the superluminal case.

If $\zeta$ satisfies $\Omega^2(\zeta)=0$, then $\zeta$ must satisfy (\ref{intcond7}). This is due to the fact that the origin is always included in $\sigma_{\mathcal{L}}$ and hence in $S_\Omega$. The fact that there are four roots of $\Omega^2(\zeta)=0$ corresponds to the fact that $0\in \sigma_{\mathcal{L}}$ with multiplicity four. This is seen from the symmetries of (\ref{SGlab}) and by applying Noether's Theorem \cite{S07,KP}. For rotational waves, the roots of $\Omega^2(\zeta)$ lie on the imaginary axis. For the subluminal rotational case the roots are:
\beq \zeta_c = \left\{ \frac{\sqrt{1-c^2}}{2\sqrt{2}(c+1)} \left(\sqrt{-E}\pm \sqrt{2-E}\right) i, -\frac{\sqrt{1-c^2}}{2\sqrt{2}(c+1)} \left(\sqrt{-E}\pm \sqrt{2-E}\right) i\right\}, \eeq
and in the superluminal rotational case the roots are:
\beq \zeta_c = \left\{\frac{\sqrt{c^2-1}}{2 \sqrt{2}(c+1)} \left(\sqrt{E}\pm \sqrt{E-2} \right) i,- \frac{\sqrt{c^2-1}}{2 \sqrt{2}(c+1)} \left(\sqrt{E}\pm \sqrt{E-2} \right) i\right\}. \label{suprotrootsOmega} \eeq
We label the four roots $\zeta_1$, $\zeta_2$, $\zeta_3$, and $\zeta_4$ where $\text{Im}(\zeta_1)< \text{Im}(\zeta_2)< \text{Im}(\zeta_3)< \text{Im}(\zeta_4).$ They are labeled for reference in Figure~\ref{SupSubRotOmega}.
\vspace{2mm}
\begin{thm}\label{imagzetathm}
For rotational waves, the condition (\ref{intcond7}) is satisified for all $\zeta\in i \mathbb{R}$ such that $\text{Im}(\zeta_1)\le \text{Im}(\zeta) \le \text{Im}(\zeta_2)$ or $\text{Im}(\zeta_3)\le \text{Im}(\zeta) \le \text{Im}(\zeta_4)$.
\end{thm}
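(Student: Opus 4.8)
The plan is to show that $\operatorname{Re}[I(\zeta)]$ is \emph{constant} along each of the two imaginary segments in the statement, and that this constant is zero because the endpoints are the roots $\zeta_j$ of $\Omega^2(\zeta)$, which already satisfy (\ref{intcond7}). The guiding observation is that these two segments are exactly the portions of the imaginary axis on which $\Omega(\zeta)$ is purely imaginary, so that the machinery of Theorem~\ref{realzetathm} can be adapted through the derivative formula (\ref{derivintcond}) rather than through a direct evaluation.

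First I would establish that $\Omega(\zeta)\in i\mathbb{R}$ on the two segments. Writing $\zeta=is$ and setting $t=s^2>0$, (\ref{Omegaeqn}) gives $64\,\Omega^2 = (c-1)^2/t + 16(c+1)^2 t - 8(c^2-1)(E-1)$, a strictly convex function of $t$ on $(0,\infty)$ tending to $+\infty$ at both ends. In both rotational regimes ($c^2<1,\ E<0$ and $c^2>1,\ E>2$) its minimum is negative, so it has two positive roots $t_1<t_2$ with $\Omega^2\le 0$ exactly for $t\in[t_1,t_2]$. The four roots of $\Omega^2$ on the imaginary axis are $\pm i\sqrt{t_1},\ \pm i\sqrt{t_2}$; ordering by imaginary part identifies $\zeta_1=-i\sqrt{t_2}$, $\zeta_2=-i\sqrt{t_1}$, $\zeta_3=i\sqrt{t_1}$, $\zeta_4=i\sqrt{t_2}$, so both intervals $[\operatorname{Im}\zeta_1,\operatorname{Im}\zeta_2]$ and $[\operatorname{Im}\zeta_3,\operatorname{Im}\zeta_4]$ correspond to $t\in[t_1,t_2]$, i.e.\ to $\Omega^2\le0$, hence $\Omega=i\omega$ with $\omega\in\mathbb{R}$ there.

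Next I would use (\ref{derivintcond}) to show $\operatorname{Re}[I]$ is constant on each open segment. Substituting $\zeta=is$ and $\Omega=i\omega$, the numerator of (\ref{derivintcond}) becomes a real polynomial in $t=s^2$ (its coefficients involve only $\omega_1$, $\zeta_w(\omega_1)$, $c$, and $E$, all real for rotational waves), while the denominator $96\zeta^3\Omega=96\,s^3\omega$ is likewise real; hence $dI/d\zeta\in\mathbb{R}$ on each open segment, where $s\neq0$ and $\omega\neq0$. Parametrizing by $s$ gives $dI/ds=i\,(dI/d\zeta)$, which is purely imaginary, so $\tfrac{d}{ds}\operatorname{Re}[I]=0$ and $\operatorname{Re}[I]$ is constant on each of $(\zeta_1,\zeta_2)$ and $(\zeta_3,\zeta_4)$.

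Finally I would pin the constant to zero using the endpoints: at each root $\zeta_j$ of $\Omega^2$ we have $\lambda=2\Omega(\zeta_j)=0$ by (\ref{lambdaconnection}), and since the origin always lies in $\sigma_{\mathcal{L}}$ the corresponding $\zeta_j\in\sigma_L$, so $\operatorname{Re}[I(\zeta_j)]=0$ as already noted before the statement. By continuity of $I$ up to the endpoints, the constant interior value of $\operatorname{Re}[I]$ on each segment equals its endpoint value $0$, which establishes (\ref{intcond7}) throughout. The main obstacle is precisely this continuity step: the derivative formula (\ref{derivintcond}) is singular as $\Omega\to0$, so one must check that $I(\zeta)$ itself — including the branch choice $\nu$ and the quantities $\rho=\wp^{-1}(-C_4/C_5)$ and $\zeta_w(\rho)$ — extends continuously to each $\zeta_j$, and in particular that $\nu$ either does not jump inside a segment or, if it does, that $I$ nonetheless remains continuous. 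Verifying this branch behaviour is the delicate point, since (unlike the $\zeta\in\mathbb{R}$ case of Theorem~\ref{realzetathm}) here $-C_4/C_5$ is genuinely complex, so $\rho$ does not lie on a coordinate axis and the cancellation producing $\operatorname{Re}[I]=0$ is not visible term-by-term.
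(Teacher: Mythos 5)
Your proposal is correct and follows essentially the same route as the paper: both arguments show that on the imaginary segments where $\Omega^2\le 0$ the derivative $\textrm{d}I/\textrm{d}\zeta$ from (\ref{derivintcond}) is real (the paper phrases this as $\textrm{Im}[\textrm{d}I/\textrm{d}\zeta]=0$ via the tangent vector field, you phrase it as $\textrm{d}\,\textrm{Re}[I]/\textrm{d}s=0$ under the parametrization $\zeta=is$), so $\textrm{Re}[I]$ is constant there, and the constant is zero because the endpoints are roots of $\Omega^2$, which lie in $\sigma_L$ since $0\in\sigma_{\mathcal{L}}$. Your explicit verification that $\Omega^2\le0$ exactly on $[t_1,t_2]$ and your flagging of the continuity/branch issue as $\Omega\to0$ are points the paper passes over silently, but they do not change the substance of the argument.
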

\begin{proof}
The level curve (\ref{intcond-simple}), is exactly the condition (\ref{intcond7}). We examine the tangent vector field to (\ref{intcond-simple}). If we let $\zeta = \zeta_r+i \zeta_i$, then
\beq I(\zeta) = I(\zeta_r+i \zeta_i ) = \frac{2\omega_1\left(i (\zeta_r+i \zeta_i) +\Omega(\zeta_r+i \zeta_i) \right)}{c-1}+ 2\nu\left(\zeta_w(\rho) \omega_1-\zeta_w(\omega_1) \rho\right). \eeq
Taking derivatives with respect to $\zeta_r$ and $\zeta_i$ gives a normal vector field to level curves of the general condition $\text{Re}\left[I(\zeta)\right]=C$ for any constant $C$, specifically, the normal vector is given by 
$$ \left(\frac{\textrm{d}\textrm{Re} \left[I(\zeta_r+i \zeta_i)\right]}{\textrm{d} \zeta_r},\frac{\textrm{d} \textrm{Re}\left[I(\zeta_r+i \zeta_i)\right]}{\textrm{d} \zeta_i} \right).$$
Thus, the tangent vector field is
$$ \left(-\frac{\textrm{d} \textrm{Re}\left[I(\zeta_r+i \zeta_i)\right]}{\textrm{d} \zeta_i},\frac{\textrm{d}\textrm{Re} \left[I(\zeta_r+i \zeta_i)\right]}{\textrm{d} \zeta_r} \right).$$
By applying the chain rule and using $\textrm{Re}[i z] = -\textrm{Im}[z],$ we have that the tangent vector field to the level curves is
\beq \left(\textrm{Im}\left[ \frac{\textrm{d} I}{\textrm{d}\zeta} \right],\textrm{Re} \left[ \frac{\textrm{d} I}{\textrm{d} \zeta}\right] \right). \label{tangentvectorfield} \eeq
Where $ \frac{\textrm{d} I}{\textrm{d}\zeta}$ is given in (\ref{derivintcond}). We note that the numerator of (\ref{derivintcond}) is strictly real for $\zeta\in i \mathbb{R}$, thus
\beq \text{Im}\left[\frac{\textrm{d} I}{\textrm{d} \zeta}\right] = \left(3 (c-1)\omega_1-48(c+1)\zeta^4 \omega_1-8 \zeta^2\left(3(c^2-1)\zeta_w(\omega_1)+(1-E)\omega_1\right)\right) \text{Im}\left[\frac{1}{96\zeta^3\Omega(\zeta)}\right].
\eeq
Since $\Omega^2(\zeta) \le 0$ for $\zeta\in [\zeta_1,\zeta_2]\subset i \mathbb{R}$ and for $\zeta \in [\zeta_3,\zeta_4]\subset i \mathbb{R}$ we have that
\beq \text{Im}\left[\frac{\textrm{d} I}{\textrm{d} \zeta}\right] = 0, \eeq
and thus $\text{Re}\left[I(\zeta)\right] = C$ on these intervals. Since $\text{Re}\left[I(\zeta)\right] = 0$ at the endpoints $\zeta_c$, $C=0$ and (\ref{intcond7}) is satisfied.
\end{proof}

\begin{figure}
\begin{tabular}{cc}
  \includegraphics[width=75mm]{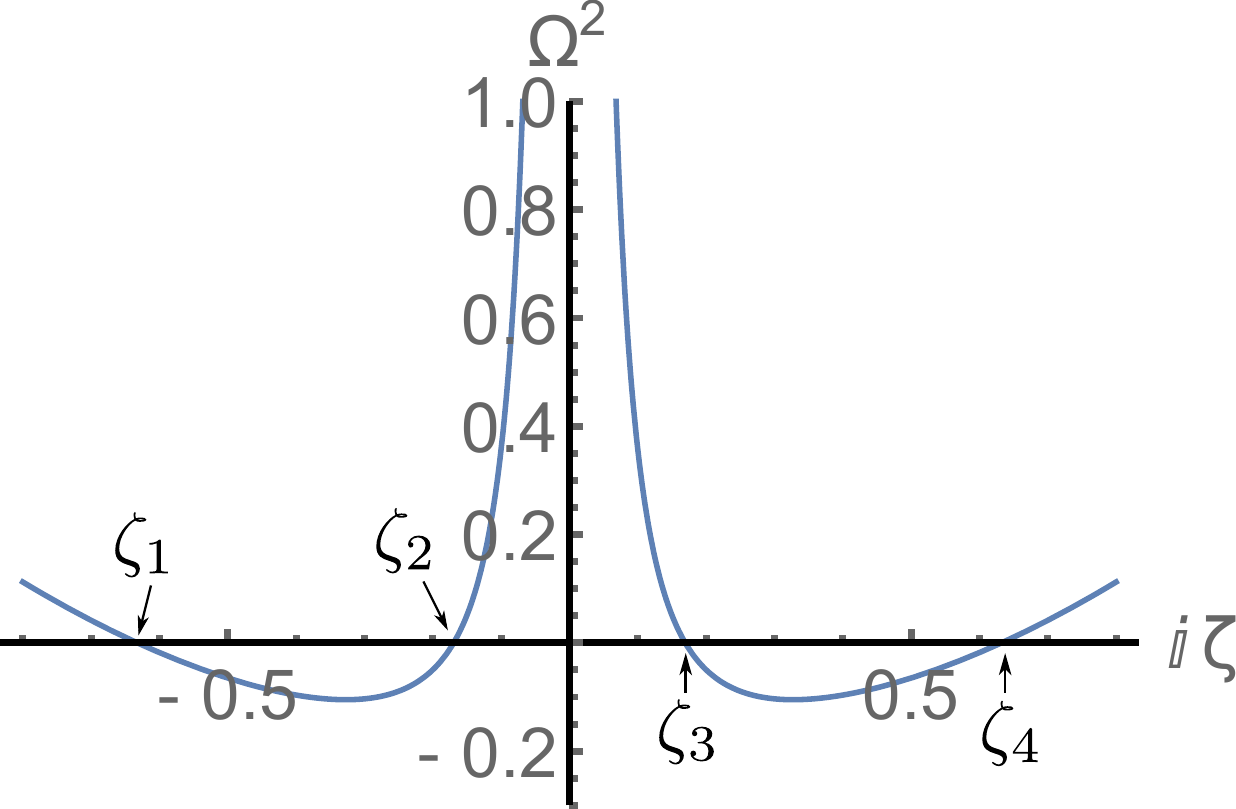} & \includegraphics[width=75mm]{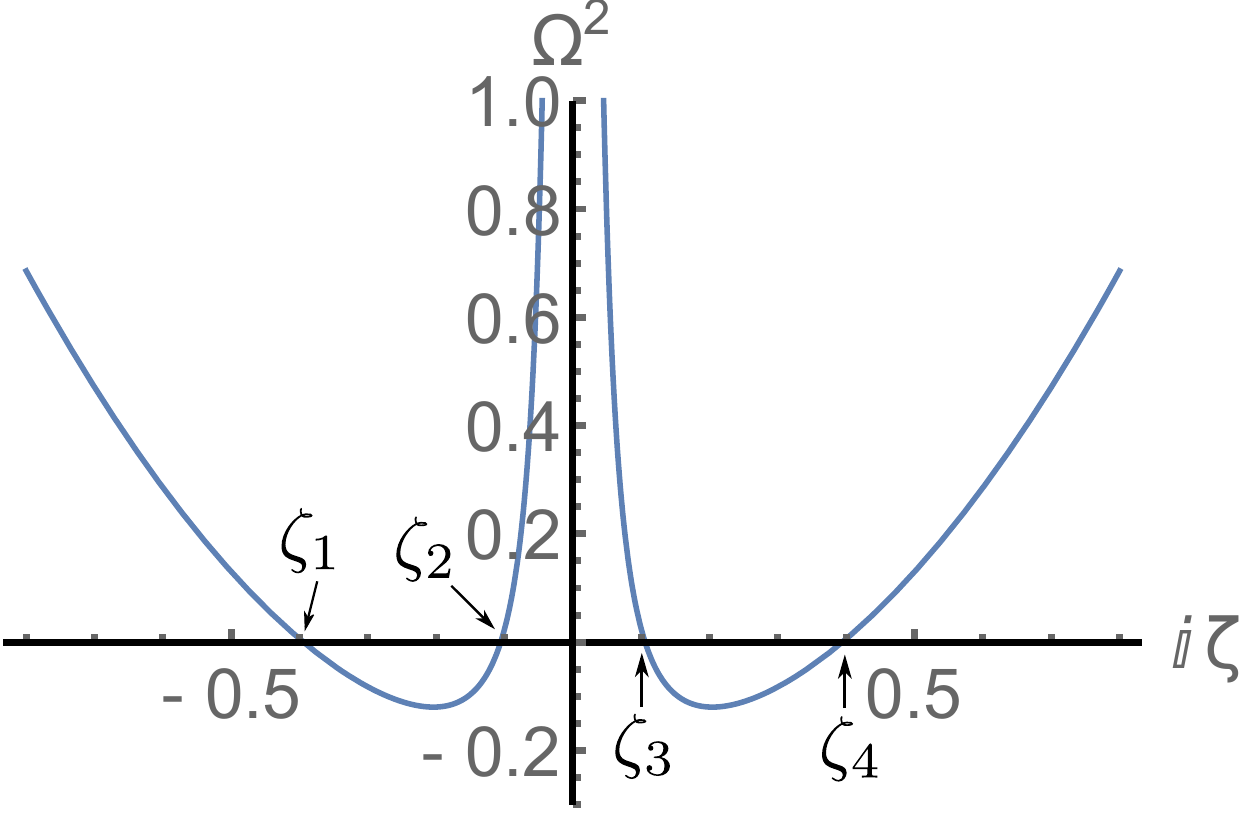} \\
(a) Subluminal: $c^2<1$ & (b)  Superluminal: $c^2>1$ 
\end{tabular}
\caption{$\Omega^2$ as a function of $i \zeta$, $\zeta\in \mathbb{R}$ for subluminal and superluminal rotational waves: (a) subluminal rotational: $c=0.4$ and $E=-1,$ and (b) superluminal rotational: $c=1.4$ and $E=3$. }
\label{SupSubRotOmega}
\end{figure}

At this point we know that $[\zeta_1,\zeta_2]\cup [\zeta_3,\zeta_4]\subset \sigma_L$. We wish to see what this corresponds to for $\sigma_{\mathcal{L}}$. 
Representative plots of $\Omega^2(i \zeta)$, $\zeta\in\mathbb{R}$ are shown in Figure~\ref{SupSubRotOmega}. The subset of $S_\Omega$ corresponding to $\zeta\in [\zeta_1,\zeta_2]\cup [\zeta_3,\zeta_4]$ consists of $[- i |\Omega_n|,0]$, where $\Omega_n^2$ is the minimum value of $\Omega^2$. The set $S_\Omega$ corresponding to $\zeta\in [\zeta_1,\zeta_2]\cup [\zeta_3,\zeta_4]$ is quadruple covered, except at the points $\pm \Omega_n$, where the set is double-covered.  $\Omega_n$ can be found explicitly by finding the extrema of $\Omega^2(i \zeta)$. In the subluminal rotational case, $\Omega^2(\zeta)$ reaches its minima at 
\beq \zeta_n =\pm \sqrt{\frac{1-c}{4(c+1)}} i,\;\; \Omega_n^2(\zeta) = \frac{1}{8}(1-c^2)E,\eeq
and in the superluminal rotational case, $\Omega^2(\zeta)$ reaches its minima at 
\beq \zeta_n = \pm \sqrt{\frac{c-1}{4(c+1)}} i,\;\; \Omega_n^2(\zeta) = \frac{1}{8}(c^2-1)(E-2).\eeq
Applying (\ref{lambdaconnection}) we have $\left[-\lambda_2,\lambda_2\right]\subset \sigma_{\mathcal{L}}$ where 
\beq \lambda_2 = \sqrt{\frac{(1-c^2)E}{2}} i ,\eeq
in the subluminal rotational case, and
\beq \lambda_2 = \sqrt{\frac{(c^2-1)(E-2)}{2}} i, \eeq
in the superluminal rotational case.

\begin{figure}
\centering
\includegraphics[width=10cm]{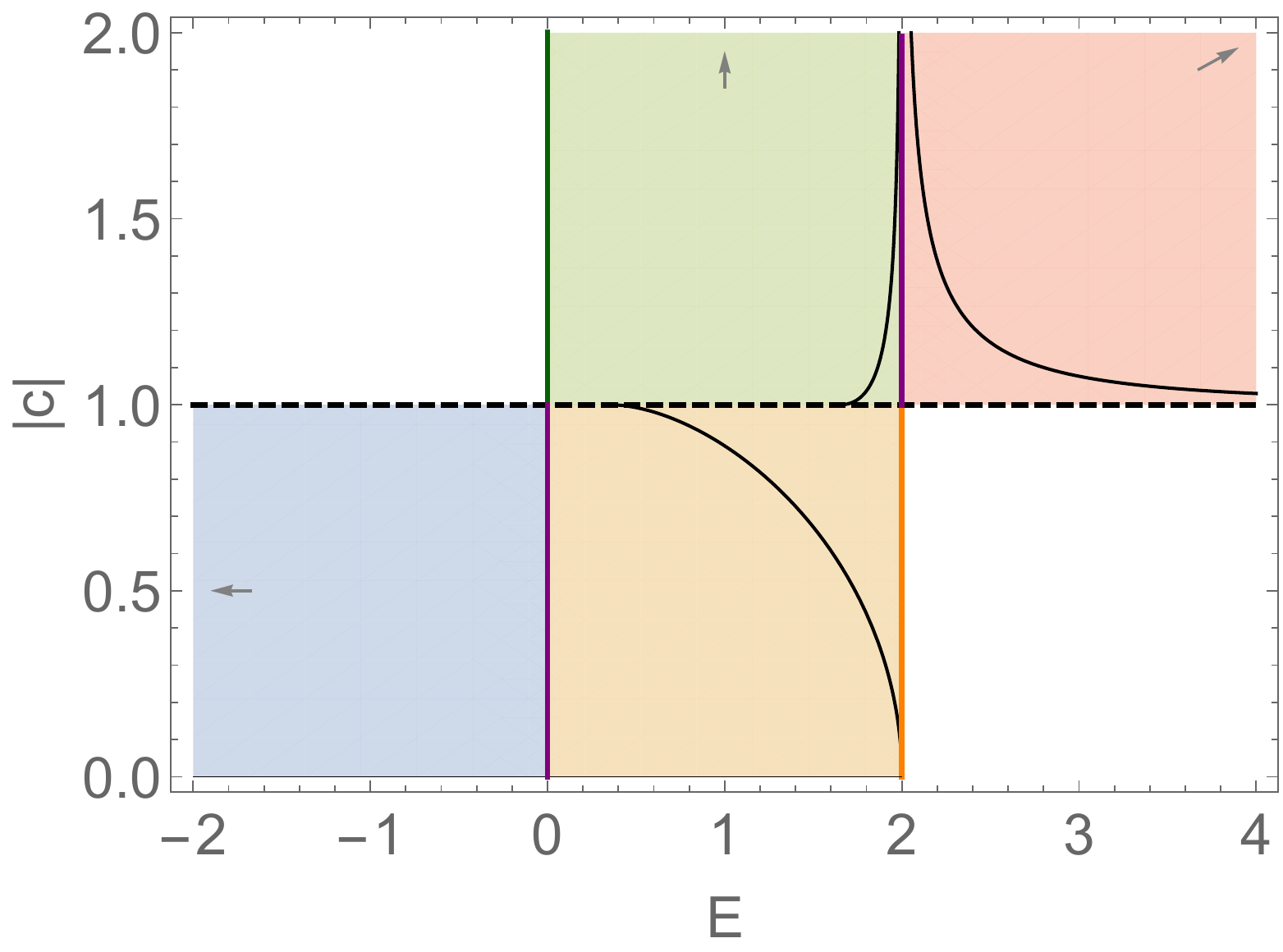} 
\caption{Parameter space with regions corresponding to different qualitative behavior in the linear stability spectrum separated by black curves. Colors correspond to solutions in Figure~\ref{SupSub}. Blue: subluminal rotational ($0\le |c| \, <1,\,E<0$), orange: subluminal librational ($0\le |c| \, <1,\,0<E \le 2$), green: superluminal librational ($|c| \, >1,\,0\le E< 2$), red: superluminal rotational ($0\le|c| \, >1,\,E>2$).}
\label{SGregion-split}
\end{figure}

\section{Qualitatively different parts of the spectrum} \label{regions}

Up to this point we have discussed only the subset of $\sigma_{\mathcal{L}}$ that is on the imaginary axis. In this section we discuss the rest of the spectrum. Except in the subluminal, rotational case, a part of $\sigma_{\mathcal{L}}$ is in the right-half plane (corresponding to unstable modes). For each of the other three regions we split parameter space into two subregions where $\overline{\sigma_{\mathcal{L}}\setminus i \mathbb{R}}$ is qualitatively different. Here $\overline{\sigma_{\mathcal{L}}\setminus i \mathbb{R}}$ is the closure of $\sigma_{\mathcal{L}}$ not on the imaginary axis.

We refer to Figure~\ref{SGregion-split}, which shows $(E,c)$ parameter space with curves that split it into subregions where $\overline{\sigma_{\mathcal{L}}\setminus i \mathbb{R}}$ is qualitatively different. The exact curves splitting up the regions, and their derivations, are given below. In Figure~\ref{spectrumcases} we show representative plots of $\sigma_{\mathcal{L}}$ for all qualitatively different spectra, and in Figure~\ref{laxspectrumcases} we show the corresponding $\sigma_L$ spectrum. 

\begin{itemize}
\item The spectral stability of subluminal rotational solutions is well known \cite{JMMP,JMMP2} and $\sigma_{\mathcal{L}}\subset i \mathbb{R}$. A representative plot of $\sigma_{\mathcal{L}}$ is seen in Figure~\ref{spectrumcases}(e).  
\item For the subluminal librational solutions, $\overline{\sigma_{\mathcal{L}}\setminus i \mathbb{R}}$ consists of either a double-covered infinity symbol, see Figure~\ref{spectrumcases}(f), or a double-covered figure 8 inset inside a double-covered ellipse-like curve, see Figure~\ref{spectrumcases}(g). The boundary between these regions is given explicitly below and a representative plot of $\sigma_{\mathcal{L}}$ on this boundary is seen in Figure~\ref{spectrumspecialcases}(3a).
\item For the superluminal librational solutions, $\overline{\sigma_{\mathcal{L}}\setminus i \mathbb{R}}$ consists of either a double-covered figure 8, see Figure~\ref{spectrumcases}(a), or a double-covered infinity symbol inset inside a double-covered ellipse-like curve, see Figure~\ref{spectrumcases}(b). The boundary between these regions is given below and a representative plot of $\sigma_{\mathcal{L}}$ on this boundary is seen in Figure~\ref{spectrumspecialcases}(1a). 
\item For the superluminal rotational solutions, $\overline{\sigma_{\mathcal{L}}\setminus i \mathbb{R}}$ consists of either a double-covered ellipse-like curve surrounding the origin, see Figure~\ref{spectrumcases}(c), or a double-covered ellipse-like curve in the upper- and lower-half plane, see Figure~\ref{spectrumcases}(d). The boundary between these regions is given explicitly below and a representative plot of $\sigma_{\mathcal{L}}$ on this boundary is seen in Figure~\ref{spectrumspecialcases}(2a). 
\end{itemize}

For all these cases, much can be proven and quantified explicitly, {\em i.e.}, not in terms of special functions. Specifically, we calculate explicit expressions for $\sigma_{\mathcal{L}} \cap i \mathbb{R}$ and in the librational case we find explicit expressions for the tangents to $\sigma_{\mathcal{L}}$ around the origin. In fact, we are able to approximate the spectrum at the origin and around all points $\sigma_{\mathcal{L}} \cap i \mathbb{R}$ using a Taylor series to arbitrary order. These series give good approximations to the greatest real part of $\sigma_{\mathcal{L}}$ using only a few terms. They are not given in this paper, but follow from the same procedure as outlined in \cite{DS17}.

A method for determining $\sigma_L$ is to take known points satisfying (\ref{intcond7}) and to follow the tangent vector field (\ref{tangentvectorfield}) from those points. We apply this technique from $\zeta\in \mathbb{R}$ which we know to satisfy (\ref{intcond7}) from Theorem~\ref{realzetathm} as well as from the points $\zeta$ satisfying $\Omega^2(\zeta) = 0$.

\subsection{Subluminal librational solutions}\label{sublibsubsection}
The roots of $\Omega^2(\zeta)=0$ are given by
\beq \zeta_c = \left\{\frac{\sqrt{1-c^2}\sqrt{E}}{2\sqrt{2}(c+1)}\pm \frac{\sqrt{1-c^2}\sqrt{2-e}}{2\sqrt{2}(c+1)} i,-\frac{\sqrt{1-c^2}\sqrt{E}}{2\sqrt{2}(c+1)}\pm \frac{\sqrt{1-c^2}\sqrt{2-e}}{2\sqrt{2}(c+1)} i \right\}, \eeq
seen as red crosses in Figure~\ref{laxspectrumcases}(f,g). 
For convenience, we label these four roots $\zeta_1,\,\zeta_2,\,\zeta_3,\,\zeta_4,$ where the subscript corresponds to the quadrant on the real and imaginary plane the root is in. 
In this case, (\ref{derivintcond}) is
\beq \frac{\text{d}I(\zeta)}{\text{d}\zeta} = \sqrt{1-c^2} \frac{16 \zeta^2 \mathcal{E}(k)+\left(c-1-8 \zeta^2-16(c+1)\zeta^4\right)\mathcal{K}(k)}{32 \zeta^3 \Omega(z)}. \label{derivintcondsublib}\eeq

\begin{figure}
\centering
\begin{tabular}{cccc}
  \includegraphics[height=36mm]{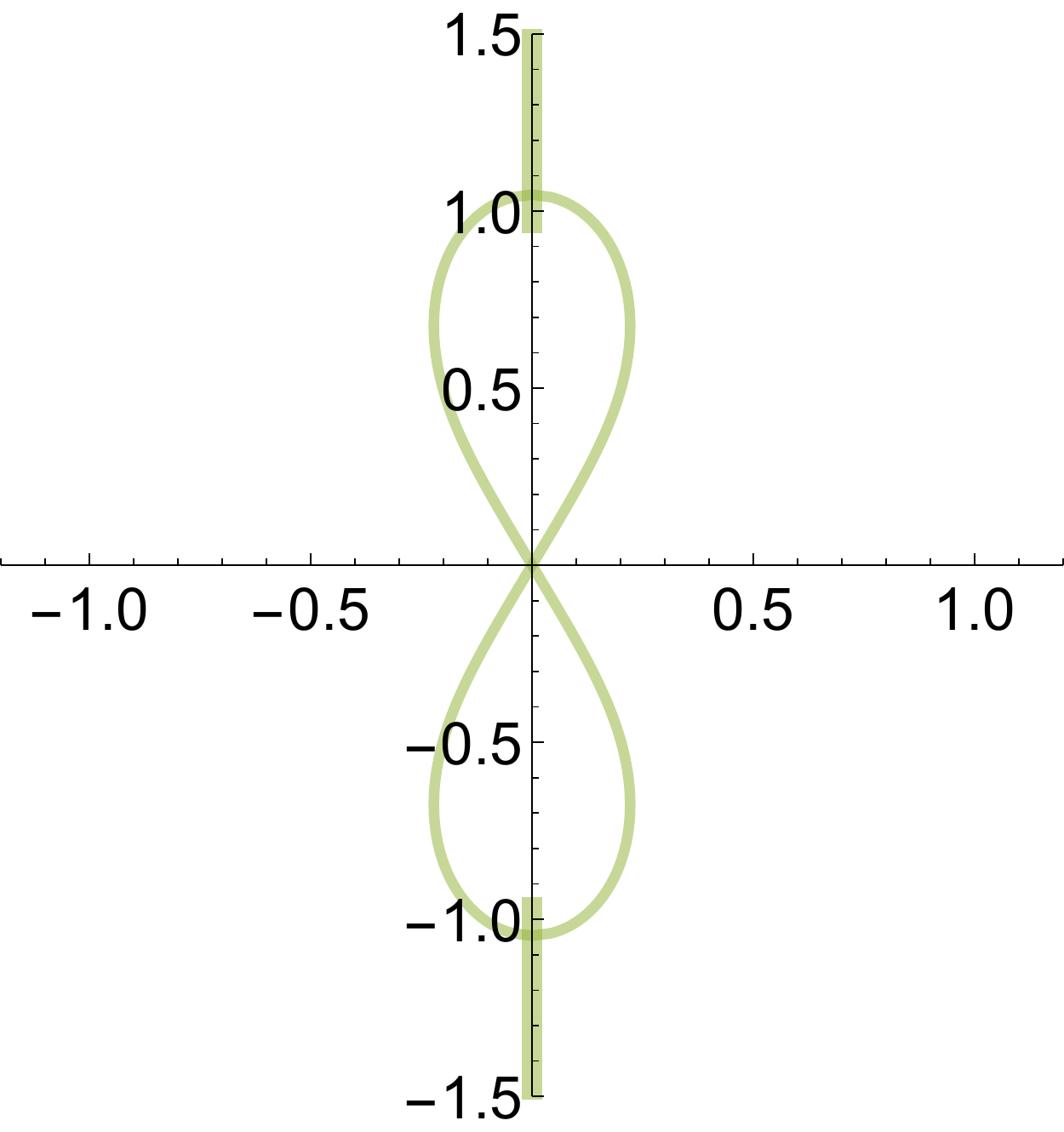} & \includegraphics[height=36mm]{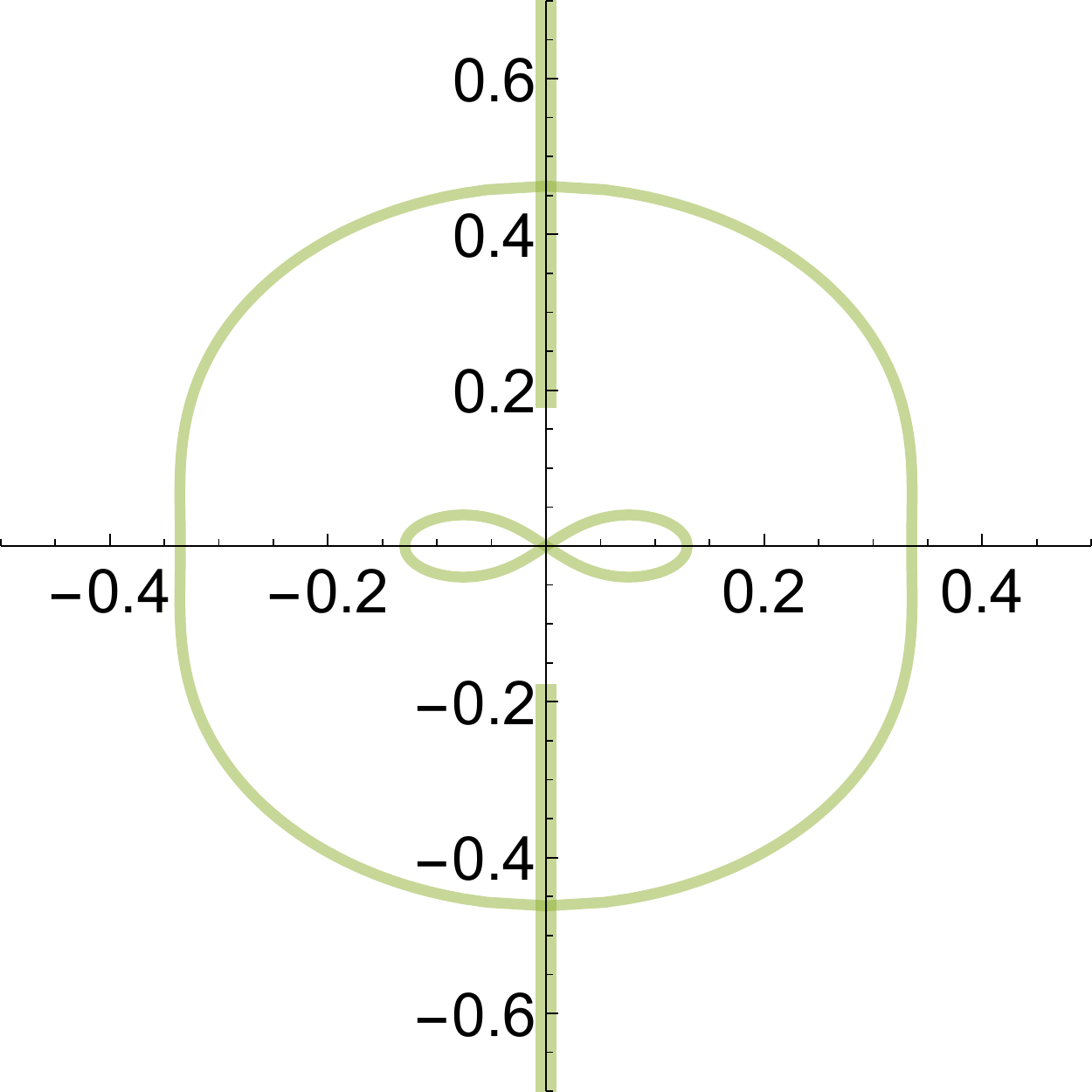} & \includegraphics[height=36mm]{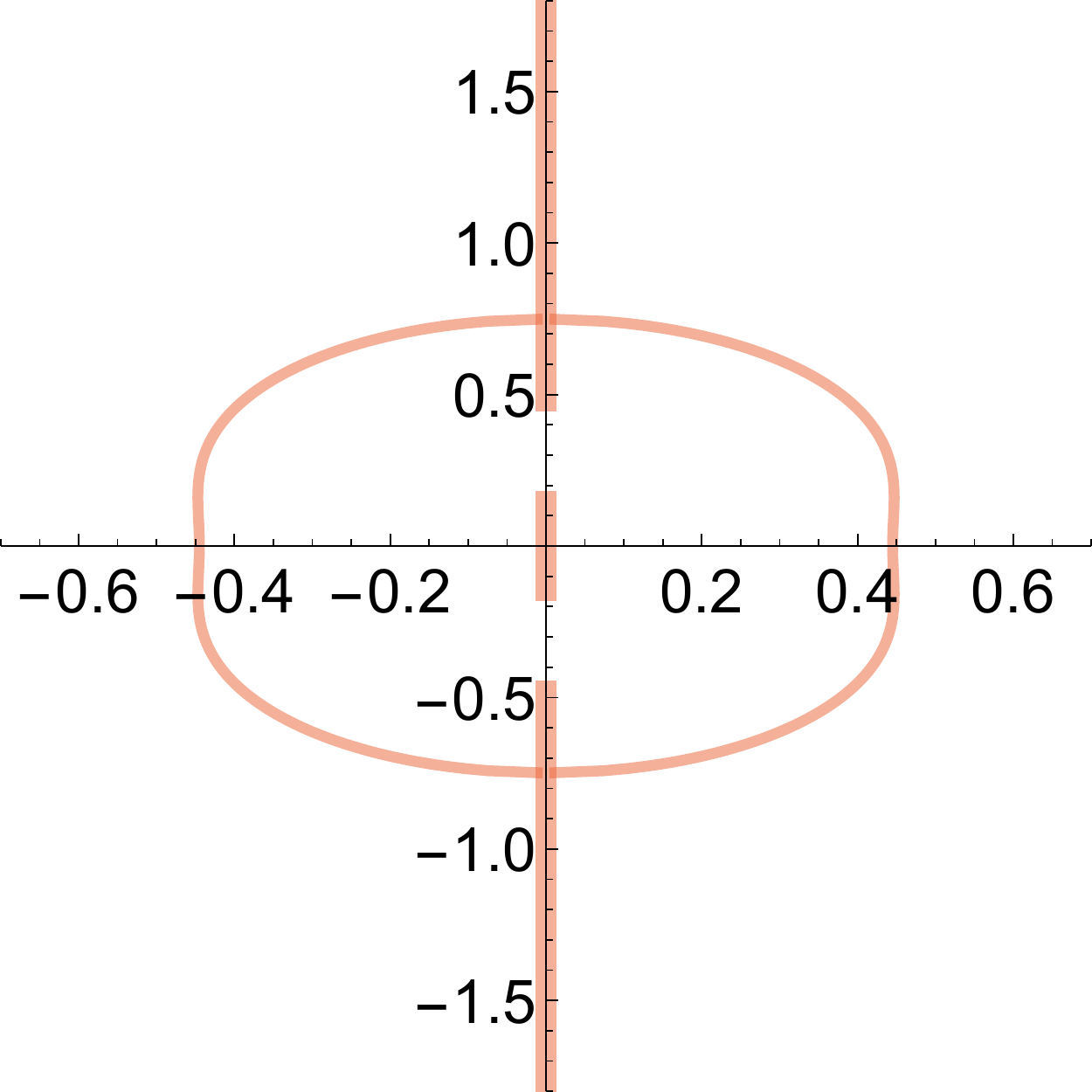} & \includegraphics[height=36mm]{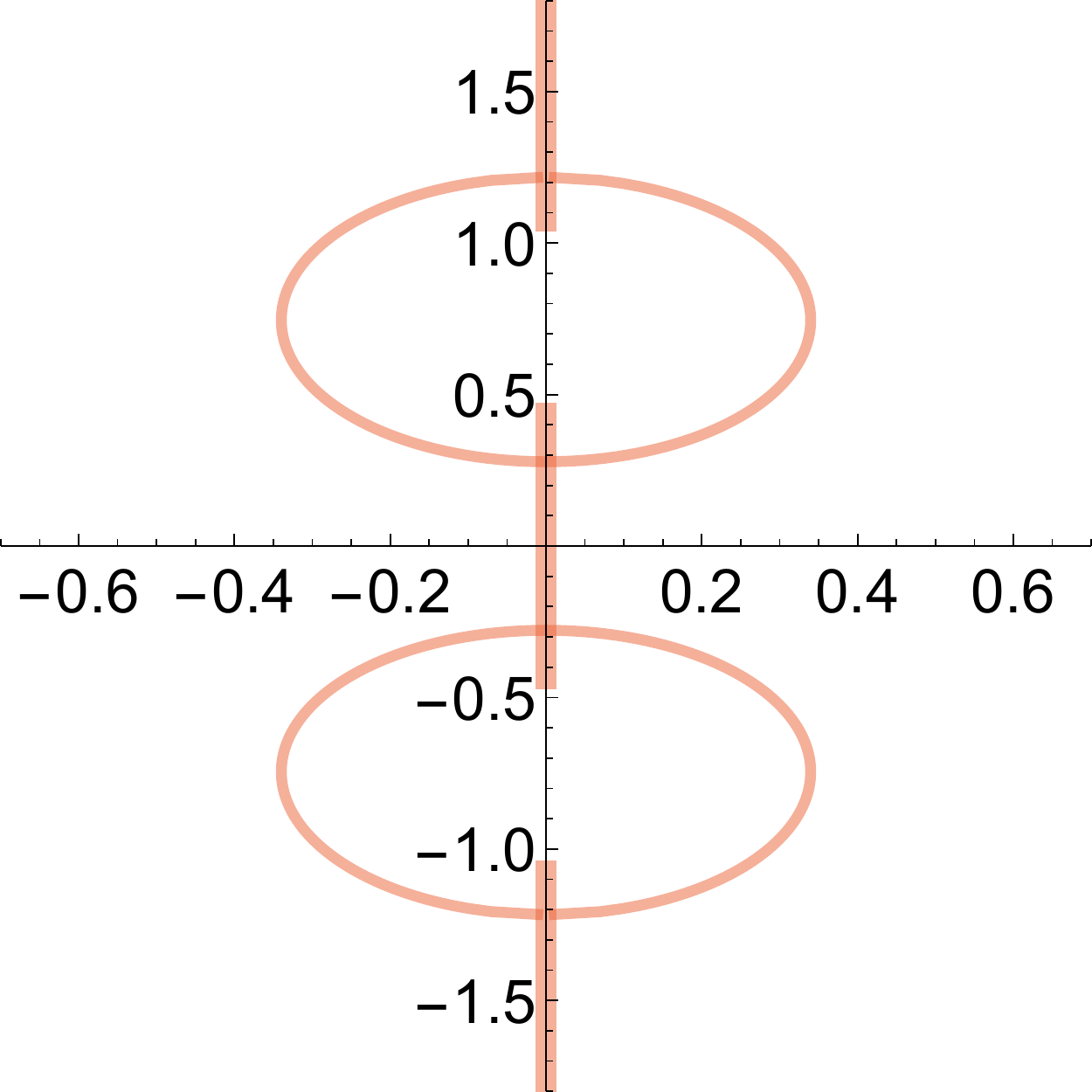} \\
(a) & (b) & (c) & (d) 
\end{tabular}
\begin{tabular}{ccc}
  \includegraphics[height=36mm]{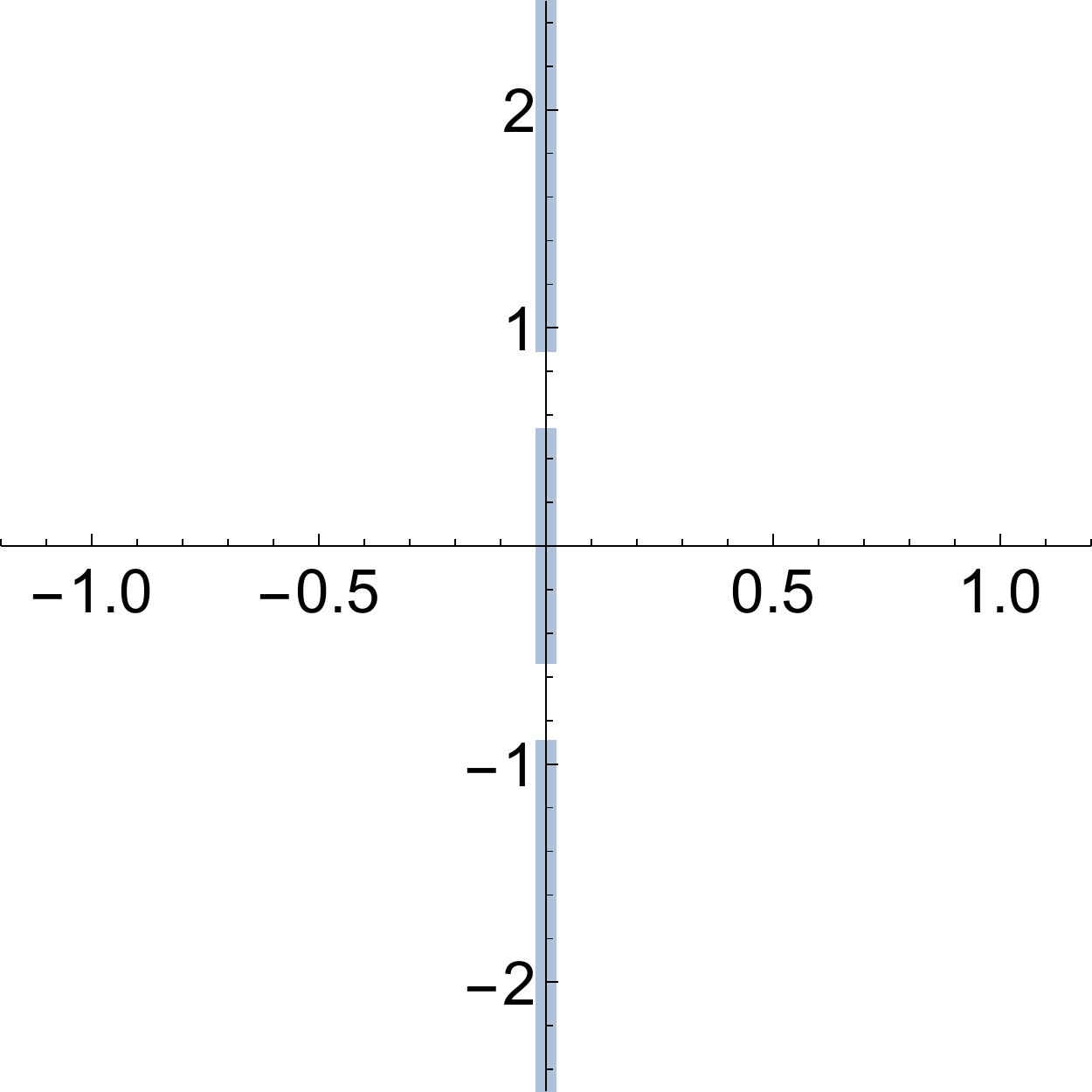} & \includegraphics[height=36mm]{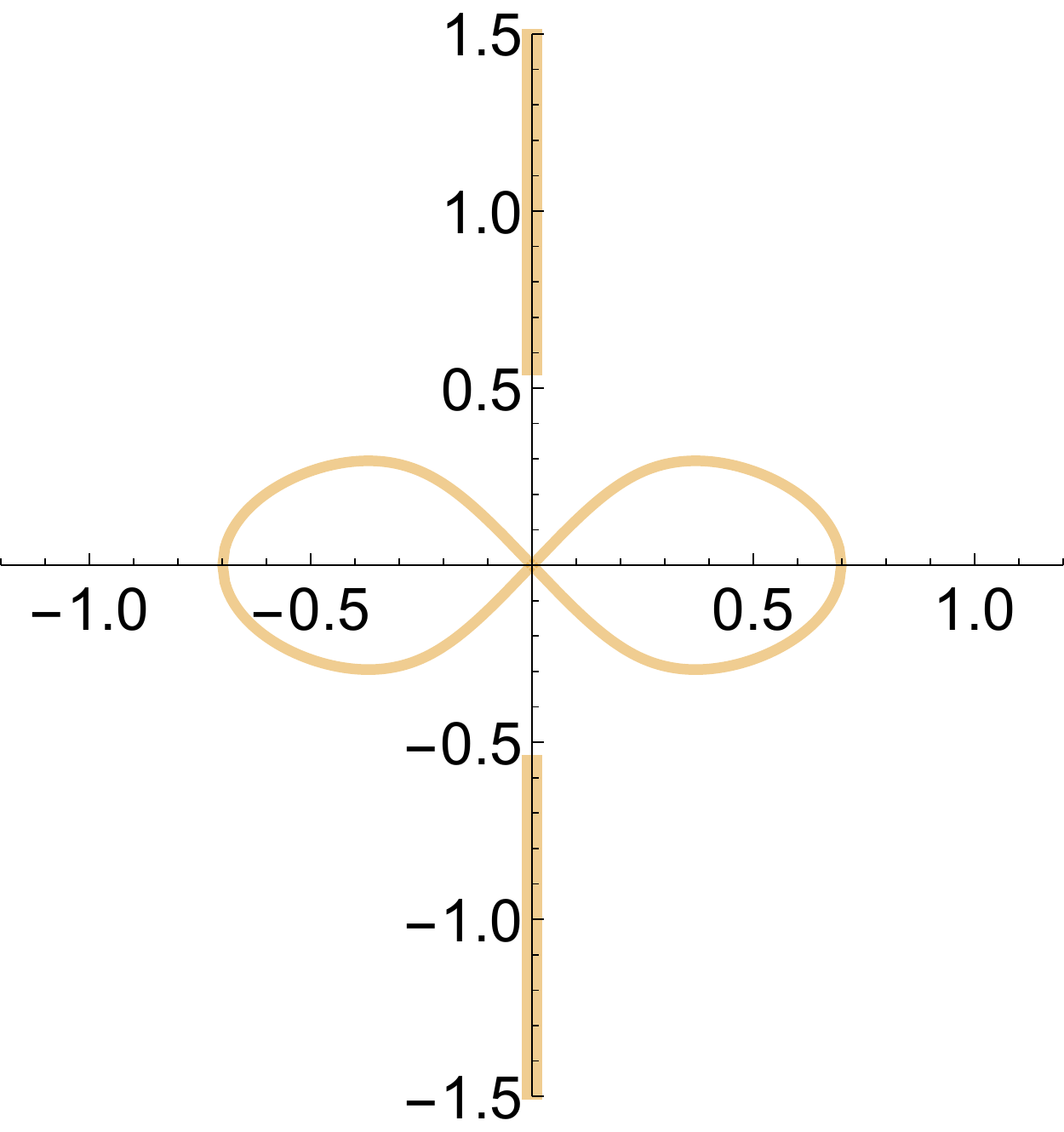} & \includegraphics[height=36mm]{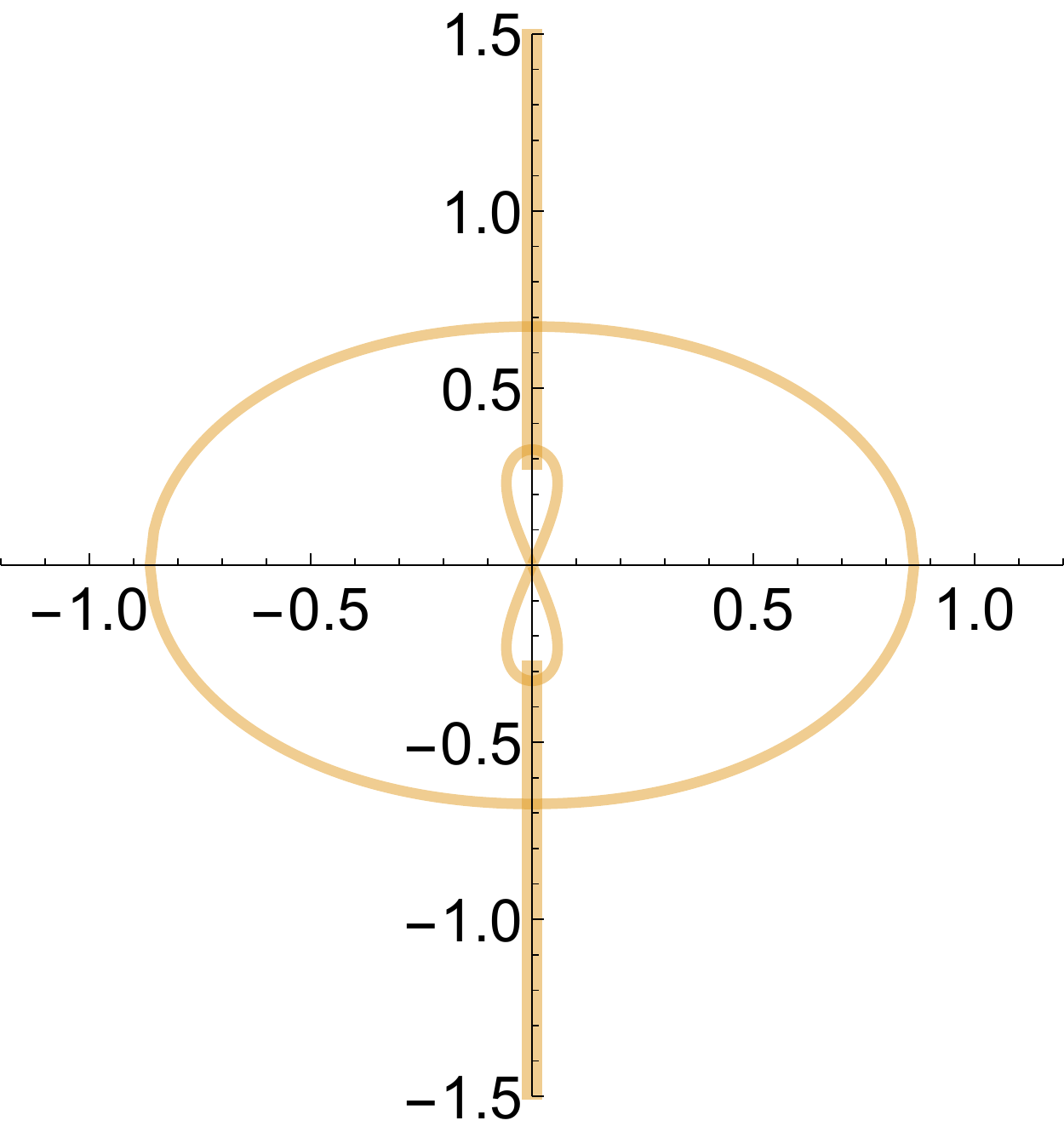}  \\
(e) & (f) & (g) 
\end{tabular}
\vspace{-2mm}
\caption{The stability spectrum for superluminal (a-d), subluminal (e-g), librational (a,b,f,g) and rotational (c,d,e) waves. (a) $c=1.5,\,E=1.5;$ (b) $c=1.02,\,E=1.8;$ (c) $c=1.1,\,E=2.2;$ (d) $c=1.4,\,E=2.4;$ (e) $c=0.6,\,E=-0.75;$ (f) $c=0.6,\,E=1.0;$ (g) $c=0.8,\,E=1.5;$   Colors correspond to Figure~\ref{SGregion}, thickness of lines corresponds to double or quadruple covering of spectrum.}
\label{spectrumcases}
\end{figure} 

Examining (\ref{tangentvectorfield}) for $\zeta\in \mathbb{R}$, for a vertical tangent in $\sigma_L$ to occur, we need the numerator of (\ref{derivintcondsublib}) to be zero. Using the discriminant of $16 \zeta^2 \mathcal{E}(k)+\left(c-1-8 \zeta^2-16(c+1)\zeta^4\right)\mathcal{K}(k)$ as a function of $\zeta,$ we find the condition
\beq c = \frac{2 \sqrt{-\mathcal{E}^2(k)+\mathcal{E}(k)\mathcal{K}(k)}}{\mathcal{K}(k)}, \label{sublibcond} \eeq
for a vertical tangent to occur on the real axis. This condition is plotted as the black curve in the subluminal rotational region of Figure~\ref{SGregion-split}, and defines the split between qualitatively different spectra. Representative spectral plots for $E$ and $c$ on this boundary are seen in Figure~\ref{spectrumspecialcases}(3). For solutions satisfying (\ref{sublibcond}) we have the following 
\beq \zeta_{t1} =\pm \frac{1}{2} \sqrt{\frac{1}{(1+c)\mathcal{K}(k)} \left(2\mathcal{E}(k)-\mathcal{K}(k)+\sqrt{4\mathcal{E}^2(k)-4\mathcal{E}(k)\mathcal{K}(k)+c^2\mathcal{K}^2(k)}\right)}, \eeq

\begin{figure}
\centering
\begin{tabular}{cccc}
  \includegraphics[height=36mm]{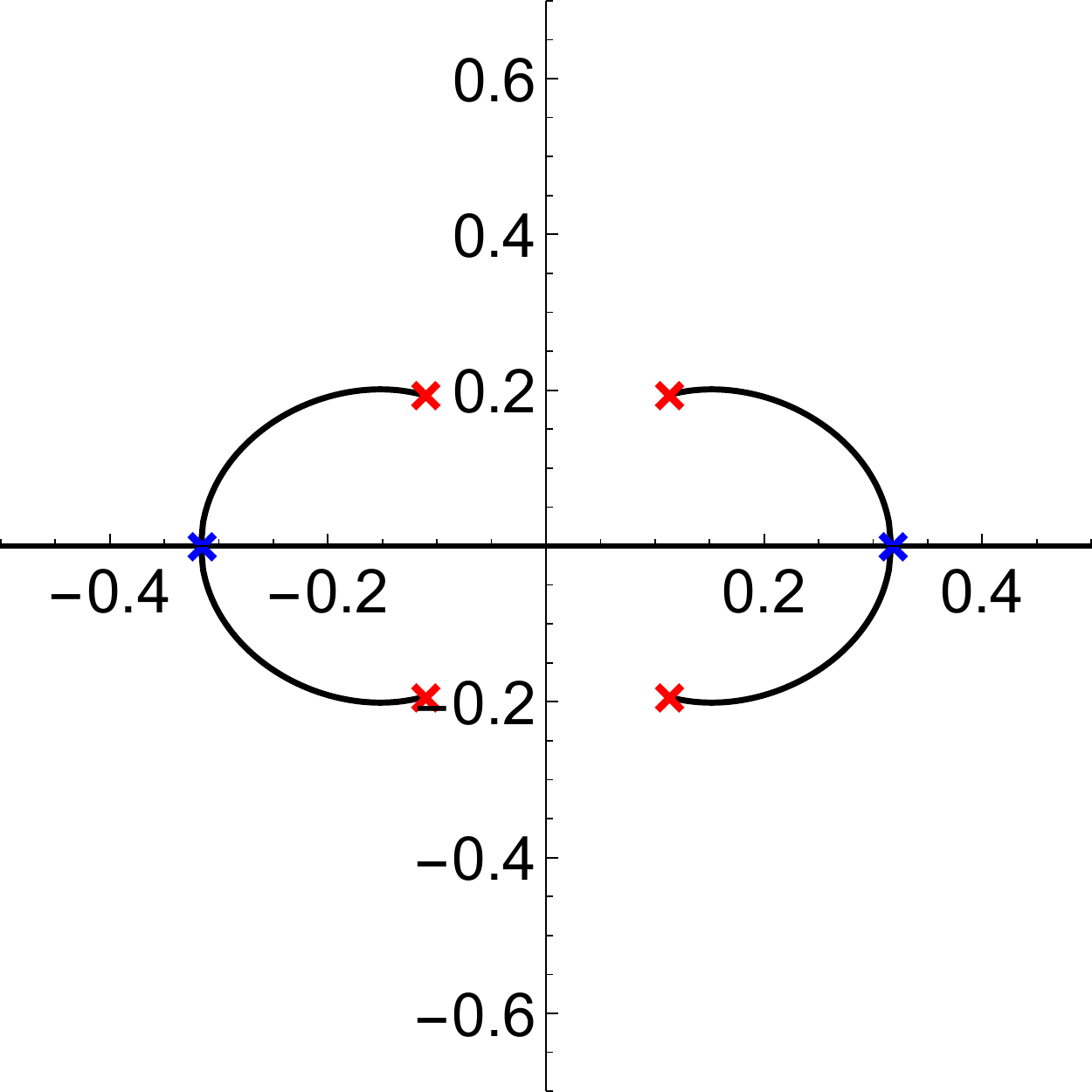} & \includegraphics[height=36mm]{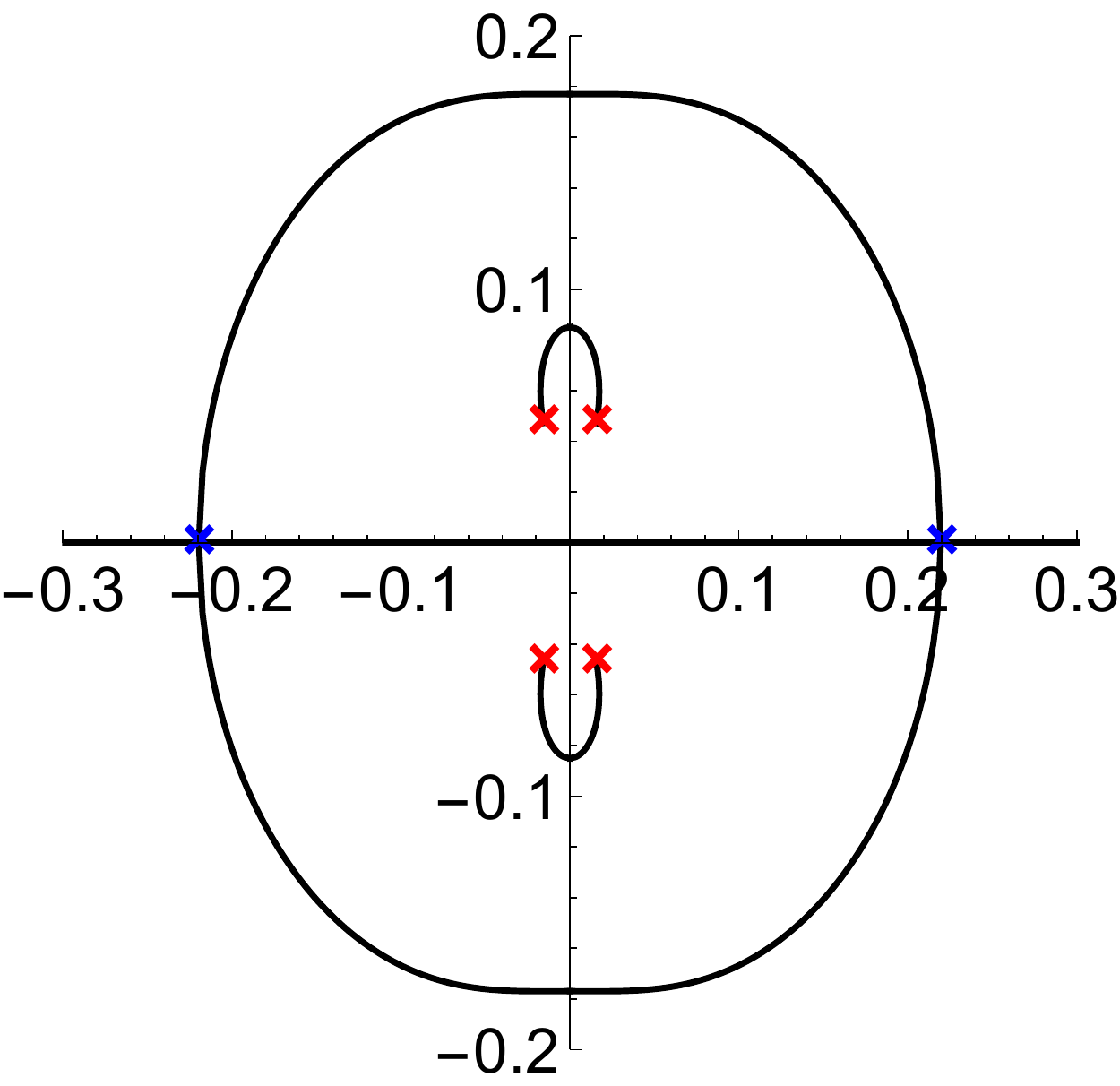} & \includegraphics[height=36mm]{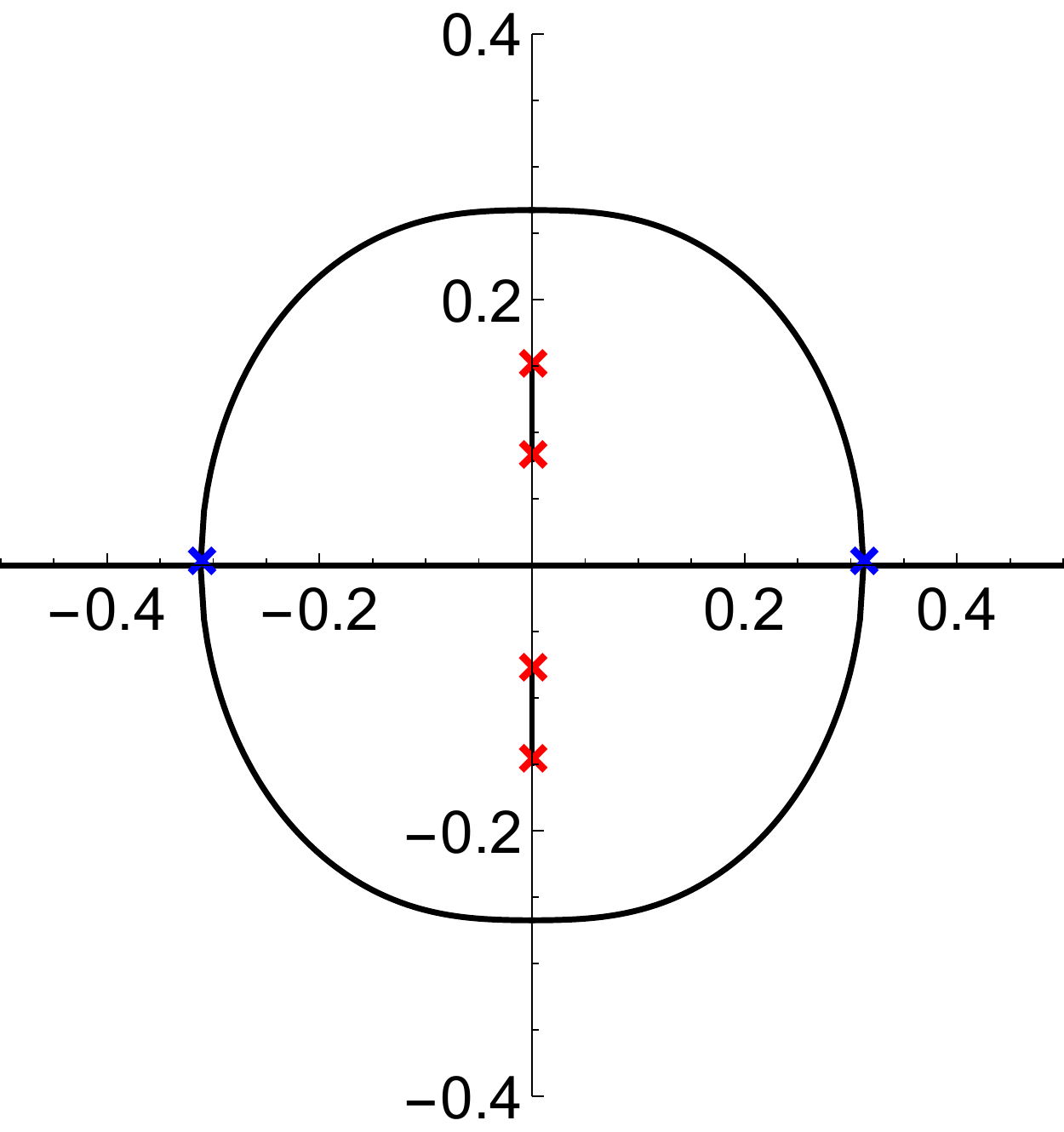} & \includegraphics[height=36mm]{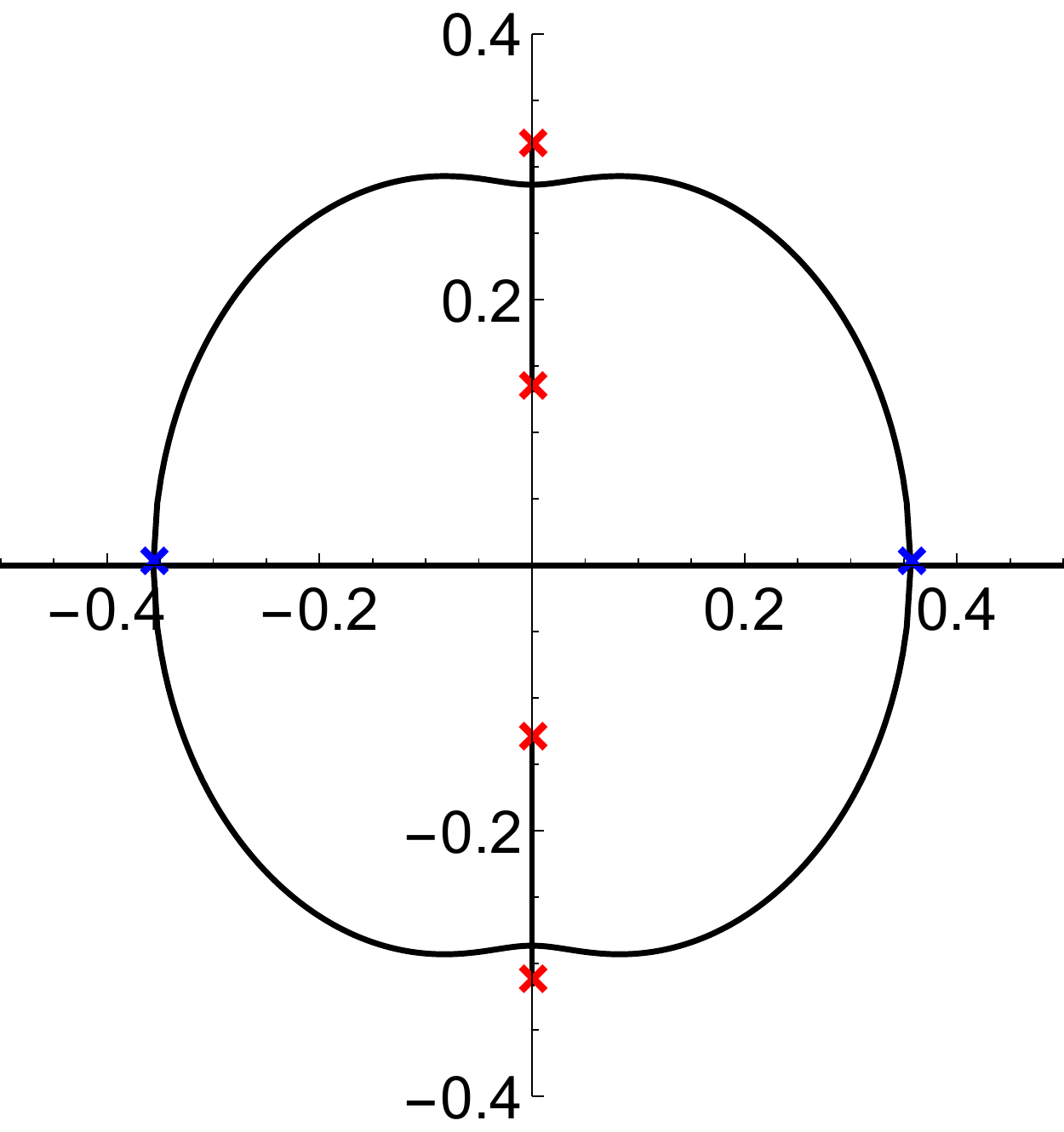} \\
(a) & (b) & (c) & (d) 
\end{tabular}
\begin{tabular}{ccc}
  \includegraphics[height=36mm]{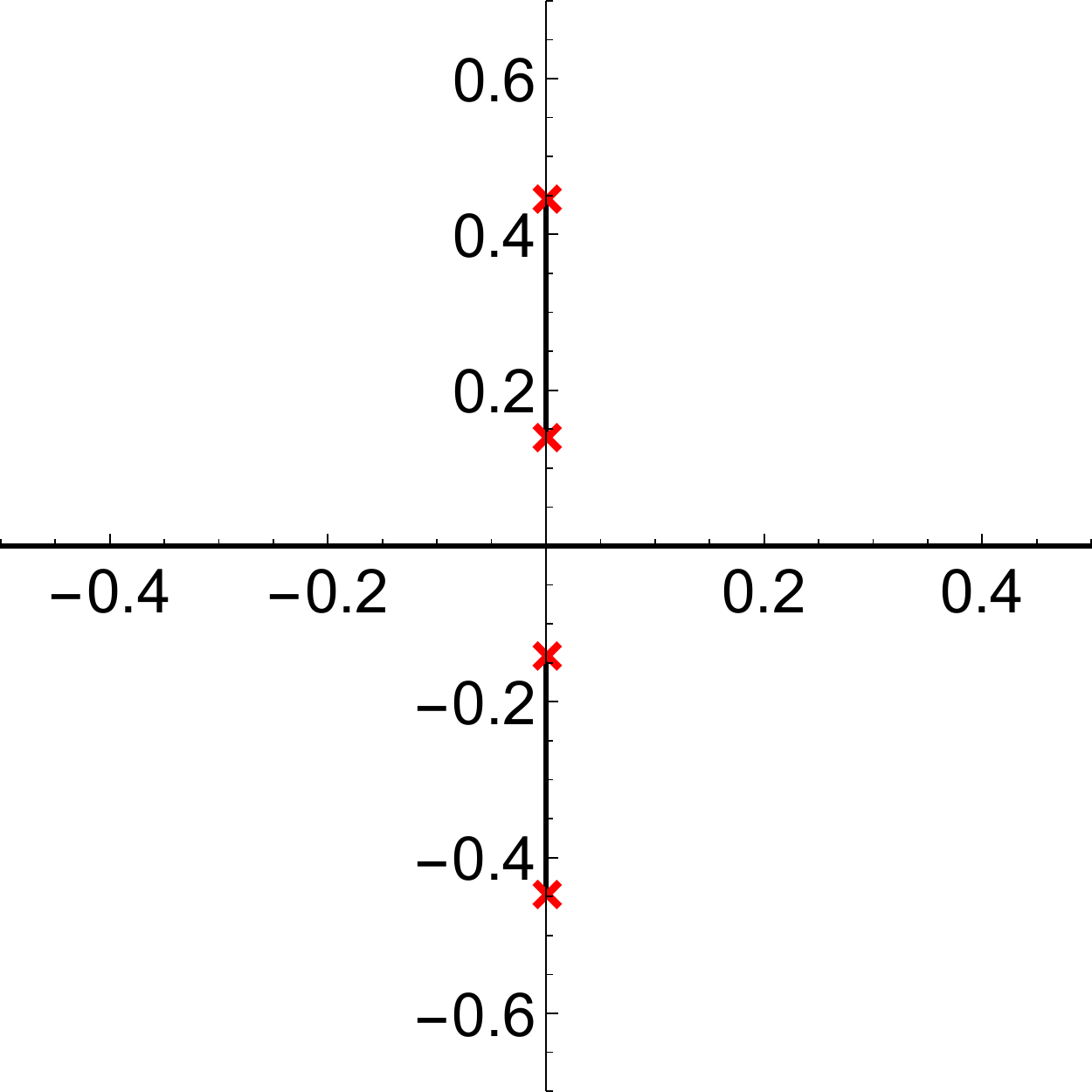} & \includegraphics[height=36mm]{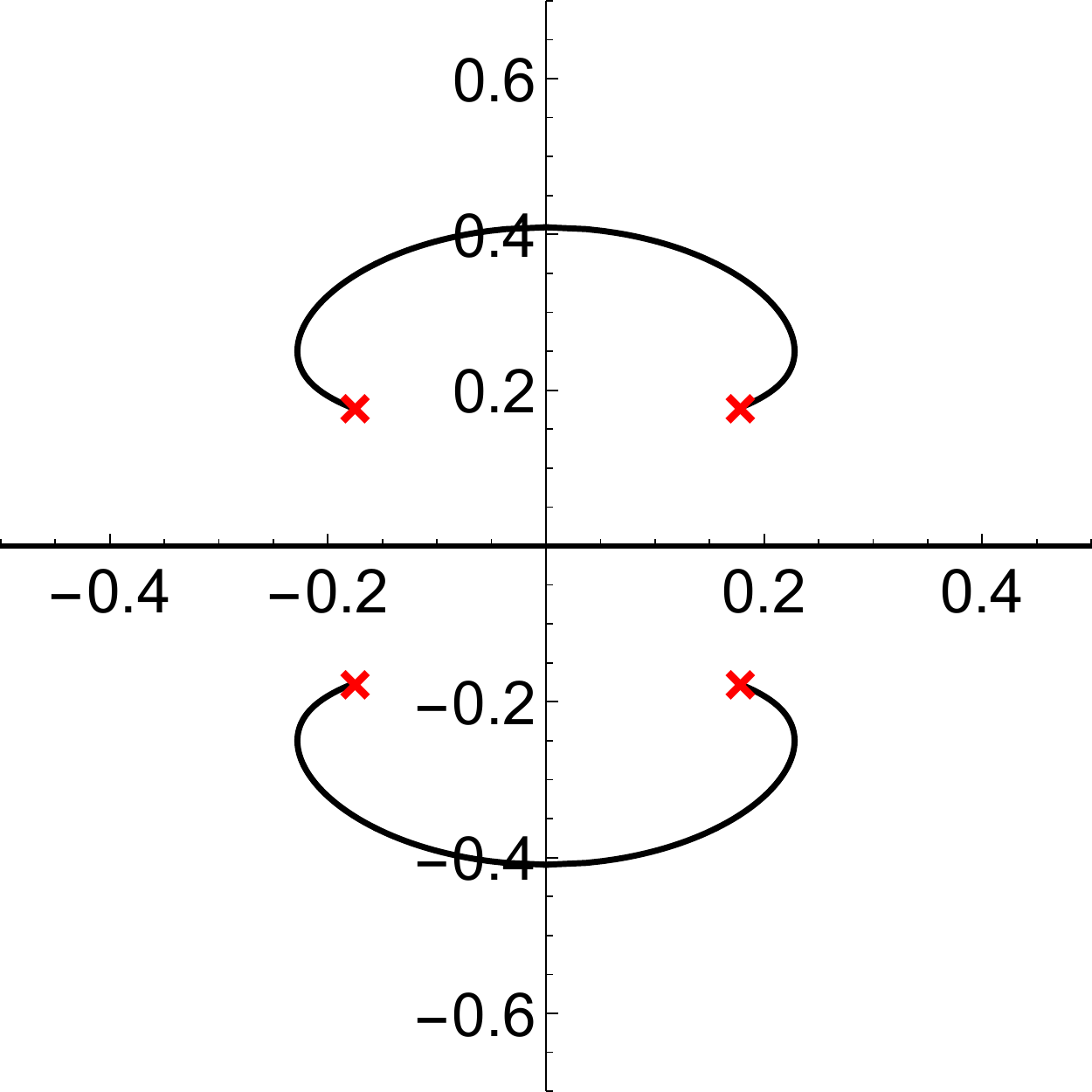} & \includegraphics[height=36mm]{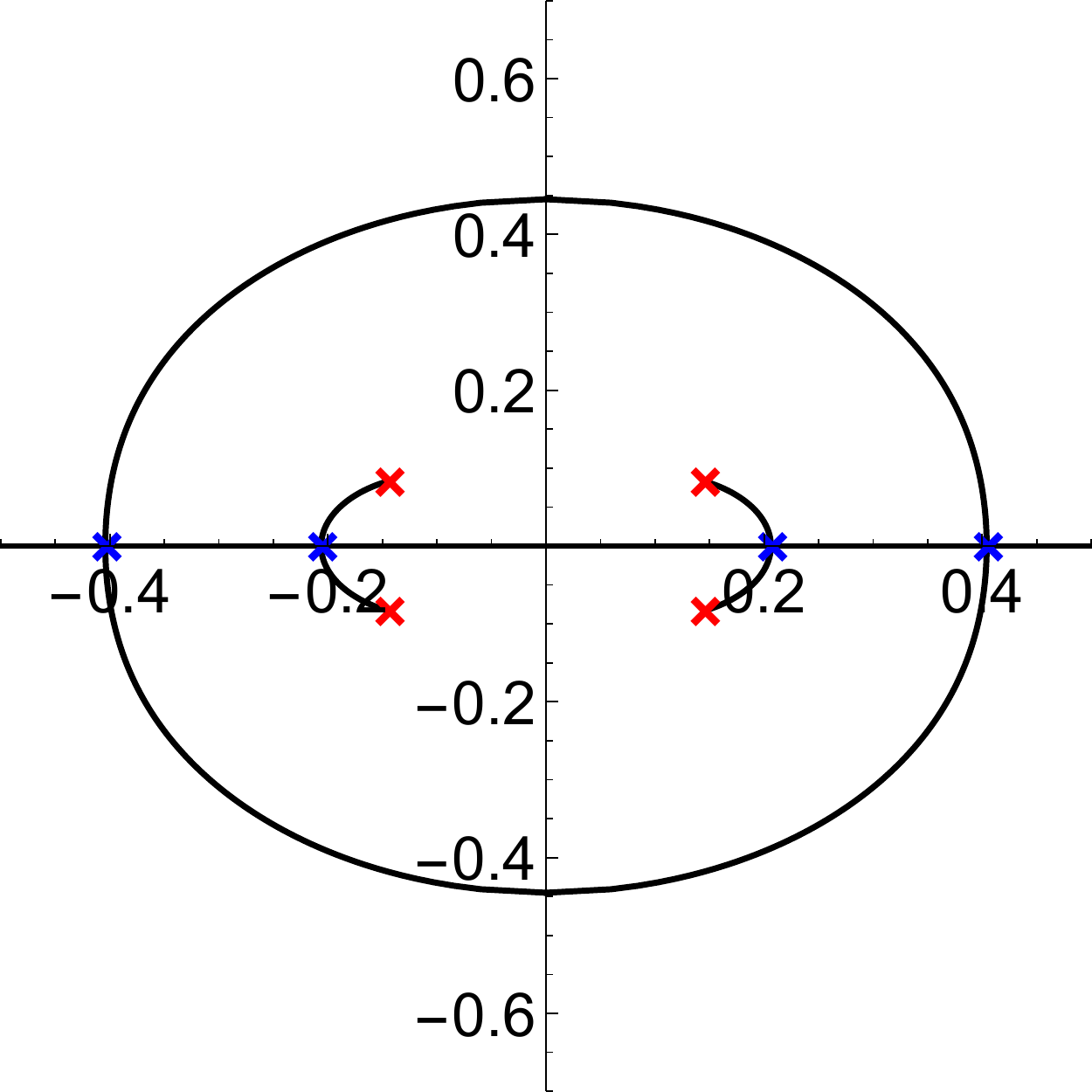}  \\
(e) & (f) & (g) 
\end{tabular}
\vspace{-2mm}
\caption{The Lax spectrum (black curves) for superluminal (a-d), subluminal (e-g), librational (a,b,f,g) and rotational (c,d,e) waves. (a) $c=1.5,\,E=1.5;$ (b) $c=1.02,\,E=1.8;$ (c) $c=1.1,\,E=2.2;$ (d) $c=1.4,\,E=2.4;$ (e) $c=0.6,\,E=-0.75;$ (f) $c=0.6,\,E=1.0;$ (g) $c=0.8,\,E=1.5.$  Red crosses signify values of $\zeta$ for which $\Omega^2(\zeta) = 0$. Blue crosses signify values of $\zeta\in \mathbb{R}$ for which $\sigma_L$ has a vertical tangent.}
\label{laxspectrumcases}
\end{figure}

and 
\beq \zeta_{t2} =\pm \frac{1}{2} \sqrt{\frac{1}{(1+c)\mathcal{K}(k)} \left(2\mathcal{E}(k)-\mathcal{K}(k)-\sqrt{4\mathcal{E}^2(k)-4\mathcal{E}(k)\mathcal{K}(k)+c^2\mathcal{K}^2(k)}\right)}, \eeq
shown as blue crosses in Figure~\ref{laxspectrumcases}(g). Mapping these points back to $\sigma_{\mathcal{L}}$ these points correspond to the top (or bottom) of the inset figure 8 in Figure~\ref{spectrumcases}(g):
\beq \lambda_{t1} = \pm \sqrt{\frac{E-c^2(E-1)}{2}+\frac{-2\mathcal{E}(k)+c \sqrt{4\mathcal{E}^2(k)-4\mathcal{E}(k)\mathcal{K}(k)+c^2\mathcal{K}^2(k)}}{2 \mathcal{K}(k)}}, \label{lambdasublib1}\eeq
and the ellipse-like curve in Figure~\ref{spectrumcases}(g):
\beq \lambda_{t2} = \pm \sqrt{\frac{E-c^2(E-1)}{2}+\frac{-2\mathcal{E}(k)-c \sqrt{4\mathcal{E}^2(k)-4\mathcal{E}(k)\mathcal{K}(k)+c^2\mathcal{K}^2(k)}}{2 \mathcal{K}(k)}} \label{lambdasublib2}.\eeq

\begin{figure}
\centering
\begin{tabular}{ccc}
  \includegraphics[height=36mm]{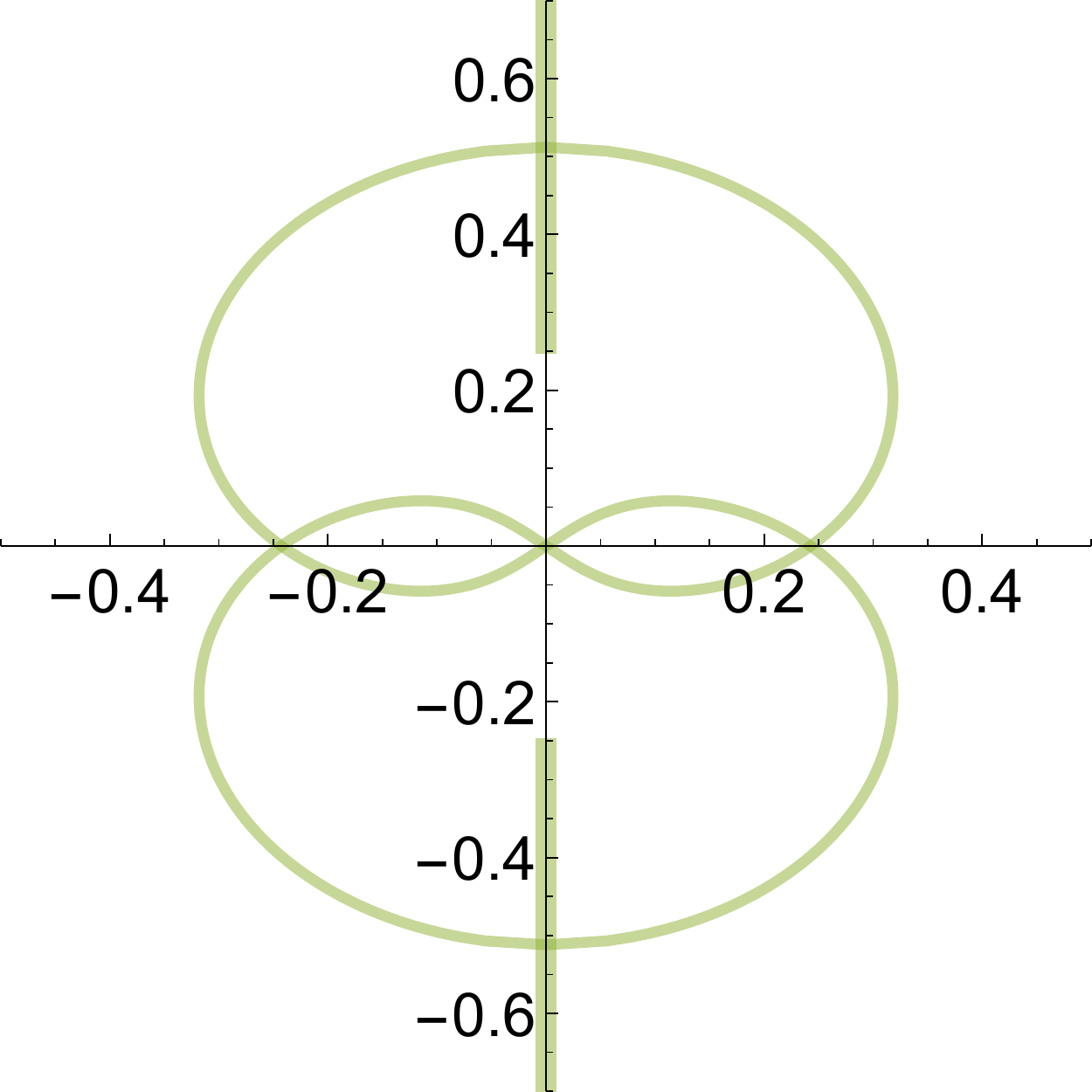} & \includegraphics[height=36mm]{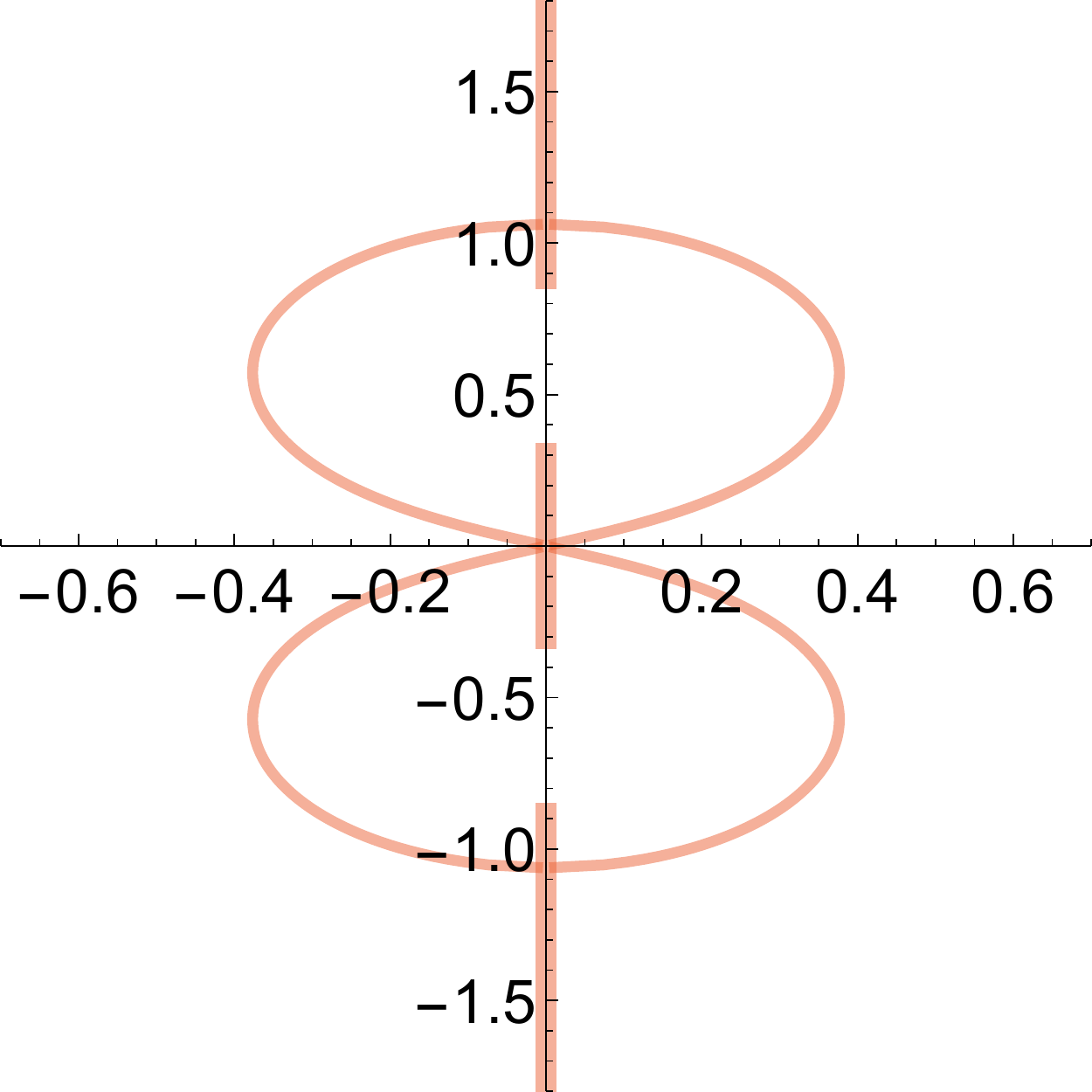} & \includegraphics[height=36mm]{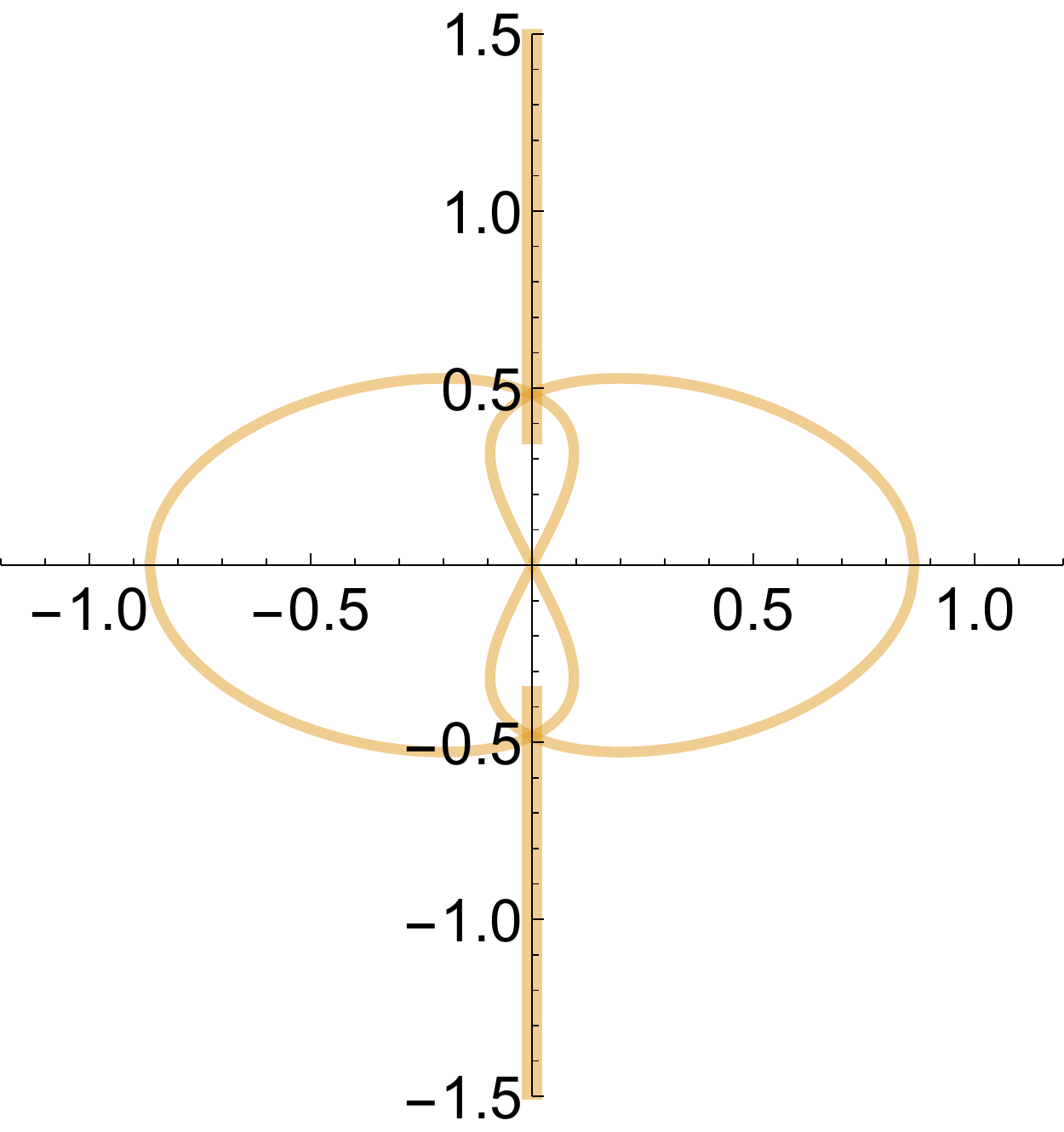} \\
(1a) & (2a) & (3a)
\end{tabular}
\begin{tabular}{ccc}
  \includegraphics[height=36mm]{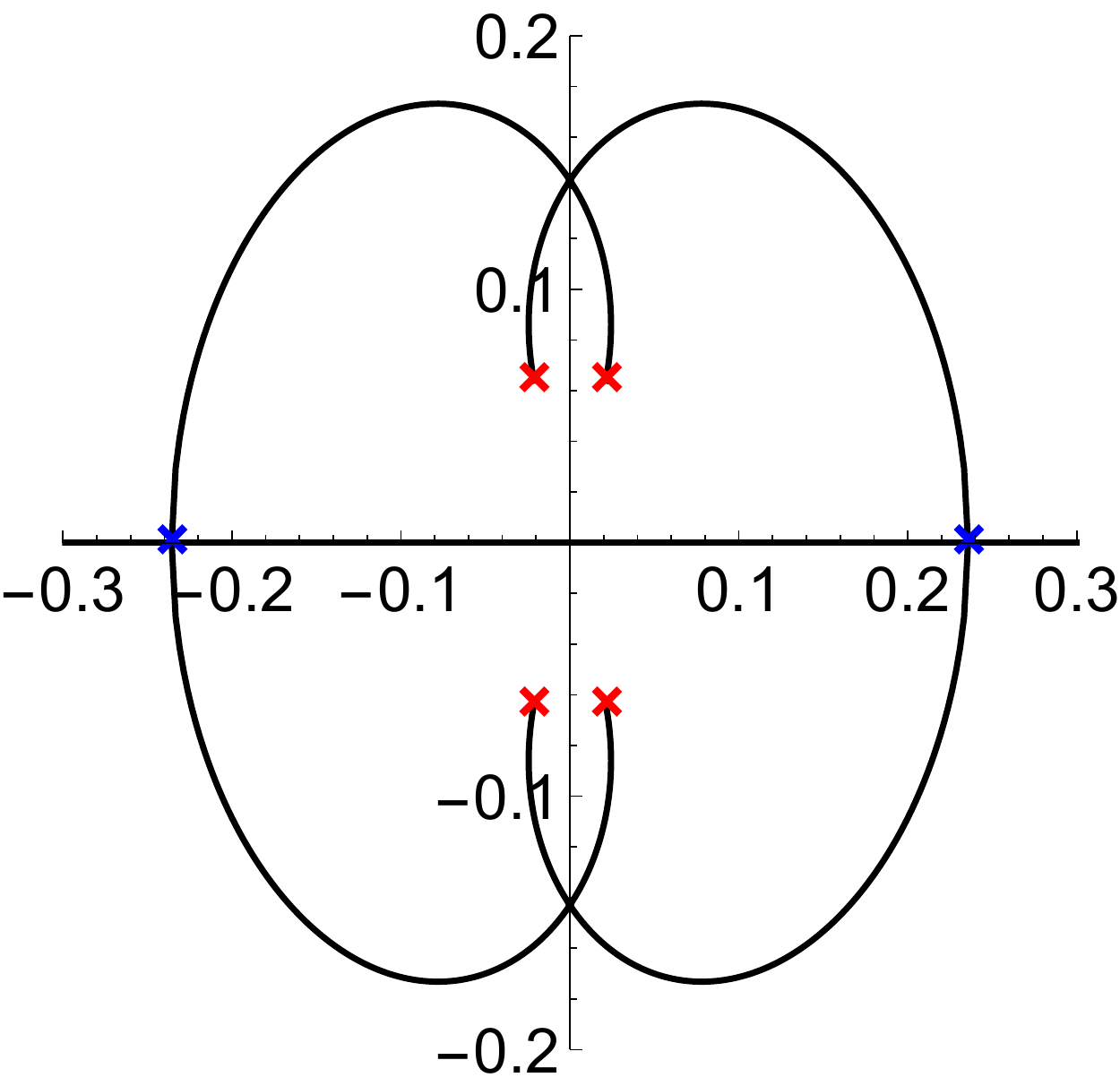} & \includegraphics[height=36mm]{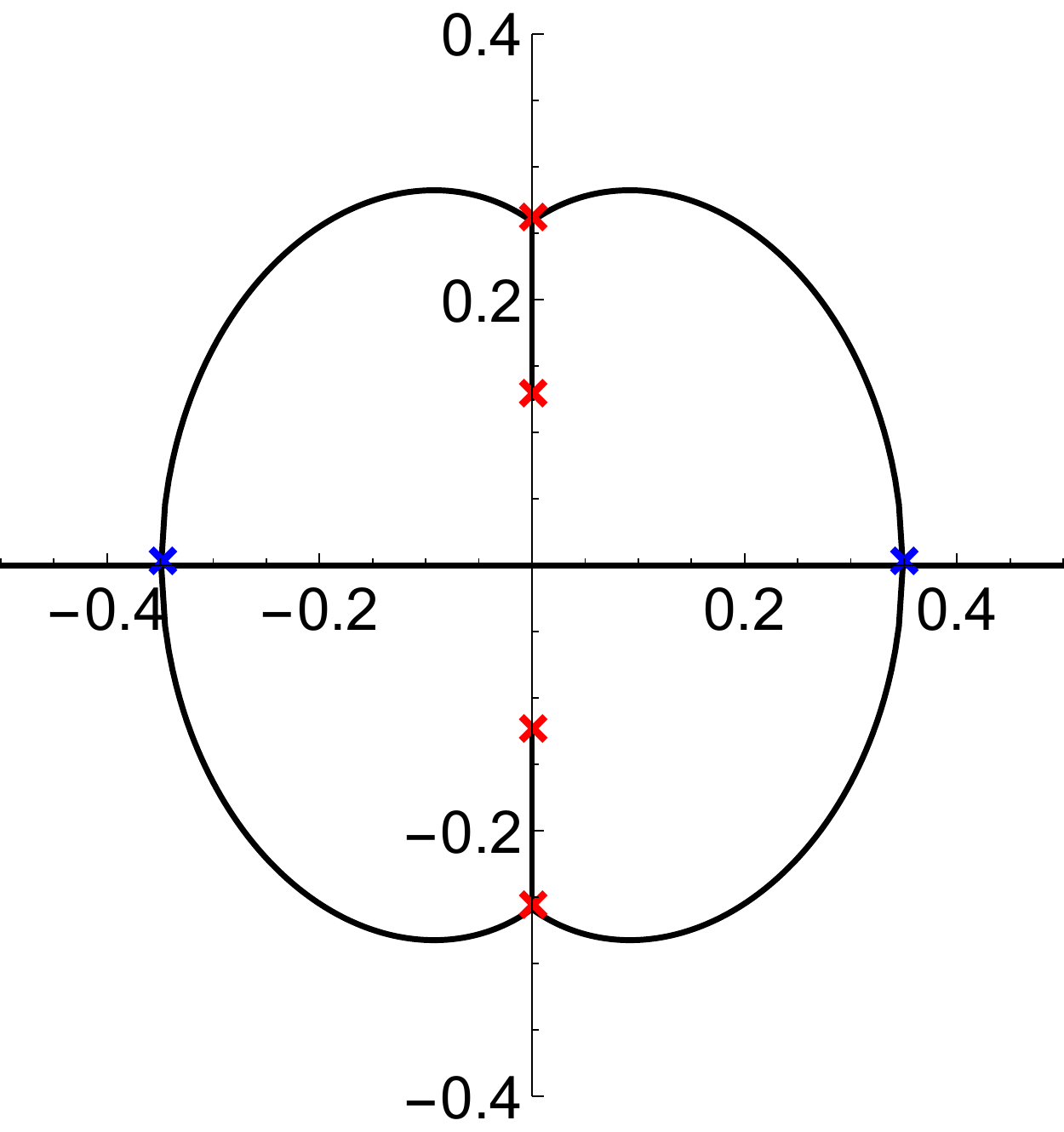} & \includegraphics[height=36mm]{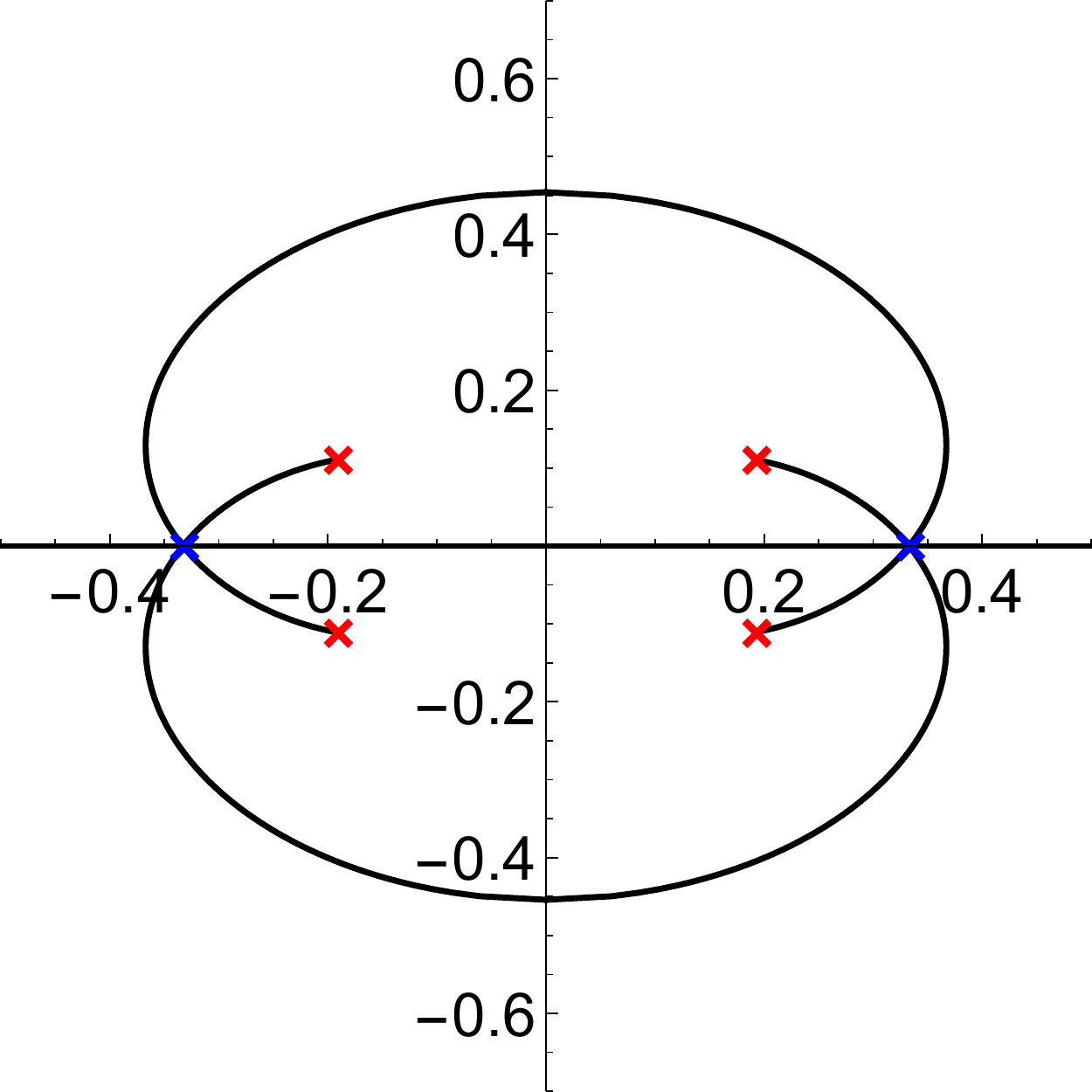}  \\
(1b) & (2b) & (3b) 
\end{tabular}
\vspace{-2mm}
\caption{(1) The stability spectrum for the cases separating subregions and (2) the corresponding Lax spectrum (black curves). Red crosses signify values of $\zeta$ for which $\Omega^2(\zeta) = 0$. Blue crosses signify values of $\zeta\in \mathbb{R}$ for which $\sigma_L$ has a vertical tangent. (a) Superluminal librational: $c=1.03702,\,E=1.8$, (b) superluminal rotational: $c=1.3,\,E=2.27060$ (c) subluminal librational: $c=0.67148,\,E=1.5$.}
\label{spectrumspecialcases}
\end{figure}

Next we examine the slopes of $\sigma_{\mathcal{L}}$ at the origin. Because $\sigma_{\mathcal{L}}= 2 S_\Omega$ it suffices to examine the slopes for the set $S_\Omega$. We let $\Omega = \Omega_r+i \Omega_i,$ and we consider $\zeta_i$ as a function of $\zeta_r$ so that $\Omega\left(\zeta_r,\zeta_i(\zeta_r)\right)$. Applying the chain rule we have that the slope at any point in the set $S_\Omega$ is
\beq \label{derivO} \frac{ \textrm{d}\Omega_i}{\textrm{d} \Omega_r} = \frac{\textrm{d}\Omega_i/\textrm{d}\zeta_r}{\textrm{d}\Omega_r/\textrm{d}\zeta_r} = \frac{ \frac{\textrm{d}\Omega_i}{\textrm{d} \zeta_r}+\frac{\textrm{d} \Omega_i}{\textrm{d} \zeta_i} \frac{ \textrm{d} \zeta_i}{\textrm{d} \zeta_r}}{\frac{\textrm{d} \Omega_r}{\textrm{d}\zeta_r}+\frac{\textrm{d} \Omega_r}{\textrm{d} \zeta_i} \frac{\textrm{d} \zeta_i}{\textrm{d} \zeta_r}},\eeq
where
\beq \label{derivZ} \frac{\textrm{d} \zeta_i}{\textrm{d} \zeta_r} = - \frac{\textrm{d}\text{Re}(I)/\textrm{d}\zeta_r}{\textrm{d}\text{Re}(I)/\textrm{d} \zeta_i}. \eeq
We examine (\ref{derivO}) near where $\Omega=0$ and $\zeta=\zeta_c$.
The slopes at the origin are 
\beq \frac{ \textrm{d}\Omega_i}{\textrm{d} \Omega_r} = \pm \frac{c \sqrt{E(2-E)} \mathcal{K}(k)}{-2\mathcal{E}(k)+E \mathcal{K}(k)}. \eeq
Further application of the chain rule can yield expressions for derivatives around the origin of any order, and the same technique can be applied around (\ref{lambdasublib1}) and (\ref{lambdasublib2}). In doing this we can obtain Taylor series approximations of $\sigma_{\mathcal{L}}$ to any order.

\subsection{Superluminal librational solutions}
The roots of $\Omega^2(\zeta)=0$ are given by
\beq \zeta_c = \left\{\frac{\sqrt{c^2-1}\sqrt{2-E}}{2\sqrt{2}(c+1)}\pm \frac{\sqrt{c^2-1}\sqrt{E}}{2\sqrt{2}(c+1)} i,-\frac{\sqrt{c^2-1}\sqrt{2-E}}{2\sqrt{2}(c+1)}\pm \frac{\sqrt{c^2-1}\sqrt{E}}{2\sqrt{2}(c+1)} i \right\}, \eeq
seen as red crosses in Figure~\ref{laxspectrumcases}(f,g). 
As in Section~\ref{sublibsubsection}, we label these four roots $\zeta_1,\,\zeta_2,\,\zeta_3,\,\zeta_4,$ where the subscript corresponds to the quadrant on the real and imaginary plane the root is in. 
In this case, (\ref{derivintcond}) is
\beq \frac{\text{d}I(\zeta)}{\text{d}\zeta} = \sqrt{c^2-1} \frac{-16 \zeta^2 \mathcal{E}(k)+\left(c-1+8 \zeta^2-16(c+1)\zeta^4\right)\mathcal{K}(k)}{32 \zeta^3 \Omega(z)}. \label{derivintcondsuplib}\eeq
Examining (\ref{tangentvectorfield}) for $\zeta\in\mathbb{R}$, for a vertical tangent in $\sigma_L$ to occur, we need the numerator of (\ref{derivintcondsuplib}) to be zero. In this case, there are always two real values of $\zeta$ for which vertical tangents in $\sigma_L$ occur: 
\beq \zeta_t =\pm \frac{1}{2} \sqrt{\frac{1}{(1+c)\mathcal{K}(k)} \left(-2\mathcal{E}(k)-\mathcal{K}(k)+\sqrt{4\mathcal{E}^2(k)-4\mathcal{E}(k)\mathcal{K}(k)+c^2\mathcal{K}^2(k)}\right)}, \eeq
shown as blue crosses in Figure~\ref{laxspectrumcases}(a,b). 
Mapping these points back to $\sigma_{\mathcal{L}}$ these points correspond to the top (or bottom) of the figure 8 in Figure~\ref{spectrumcases}(a) or the top (or bottom) of the ellipse-like curve in Figure~\ref{spectrumcases}(b):
\beq \lambda_t = \pm \sqrt{\frac{E-2-c^2(e-1)}{2}+\frac{2\mathcal{E}(k)-c \sqrt{4\mathcal{E}^2(k)-4\mathcal{E}(k)\mathcal{K}(k)+c^2\mathcal{K}^2(k)}}{2 \mathcal{K}(k)}}. \label{lambdasuplib}\eeq
In the subluminal librational case in Section~\ref{sublibsubsection}, the qualitative change in the spectrum occurred when there was a bifurcation in the real values of $\zeta$ with vertical tangents. In this case, there is no such bifurcation. The qualitative change in the spectrum occurs when there is a bifurcation in imaginary values of $\zeta$. The imaginary roots of the numerator of (\ref{derivintcondsuplib}) are 
\beq \zeta_p =\pm \frac{1}{2} i \sqrt{\frac{1}{(1+c)\mathcal{K}(k)} \left(2\mathcal{E}(k)+\mathcal{K}(k)+\sqrt{4\mathcal{E}^2(k)-4\mathcal{E}(k)\mathcal{K}(k)+c^2\mathcal{K}^2(k)}\right)}. \label{imagrootsofderivintsuplib} \eeq
The qualitative change occurs for $E$ and $c$ such that $\zeta_p$ satisfies (\ref{intcond7}). This defines the curve seen in the superluminal librational region of Figure~\ref{SGregion-split}. Representative spectral plots for $E$ and $c$ on this boundary are seen in Figure~\ref{spectrumspecialcases}(1).
The slopes of $\sigma_{\mathcal{L}}$ at the origin are computed using the method described in Section~\ref{sublibsubsection}. They are
\beq \frac{ \textrm{d}\Omega_i}{\textrm{d} \Omega_r} = \pm \frac{c \sqrt{E(2-E)} \mathcal{K}(k)}{2\mathcal{E}(k)+(E-2) \mathcal{K}(k)}. \eeq
As with the subluminal librational solutions, expressions for derivatives of any order around the origin and around (\ref{lambdasuplib}) can be computed. 

\subsection{Superluminal rotational solutions}
The roots of $\Omega^2(\zeta)=0$ are given in (\ref{suprotrootsOmega}), seen as red crosses in Figure~\ref{laxspectrumcases}(c,d). In this case, (\ref{derivintcond}) is
\beq \frac{\text{d}I(\zeta)}{\text{d}\zeta} = \sqrt{c^2-1} \frac{-8 E \zeta^2 \mathcal{E}(k)+\left(c-1+8(E-1) \zeta^2-16(c+1)\zeta^4\right)\mathcal{K}(k)}{16\sqrt{2} \zeta^3 \Omega(z)}. \label{derivintcondsuprot}\eeq
Examining (\ref{tangentvectorfield}) for $\zeta\in \mathbb{R}$, for a vertical tangent in $\sigma_L$ to occur, we need the numerator of (\ref{derivintcondsuprot}) to be zero. In this case, again, there are always two real values of $\zeta$ for which vertical tangents in $\sigma_L$ occur: 
\beq \footnotesize \zeta_t =\pm \frac{1}{2} \sqrt{\frac{1}{(1+c)\mathcal{K}(k)} \left(-E \mathcal{E}(k)+(E-1)\mathcal{K}(k)+\sqrt{E^2 \mathcal{E}^2(k)-2(E-1)E \mathcal{E}(k)\mathcal{K}(k)+(c^2+(E-2)E)\mathcal{K}^2(k)}\right)}, \eeq
shown as blue crosses in Figure~\ref{laxspectrumcases}(a,b). 
Mapping these points back to $\sigma_{\mathcal{L}}$ these points correspond to the top (or bottom) of the ellipse-like curve in Figure~\ref{spectrumcases}(c) and the top of the ellipse-like curve in the upper-half plane and the bottom of the ellipse-like curve in the lower-half plane in Figure~\ref{spectrumcases}(d):
\beq \lambda_t = \pm \sqrt{\frac{E \mathcal{E}(k)-c^2(E-1)\mathcal{K}(k)-c \sqrt{E^2 \mathcal{E}^2(k)-2(E-1)E \mathcal{E}(k)\mathcal{K}(k)+(c^2+(E-2)E)\mathcal{K}^2(k)}}{2 \mathcal{K}(k)}}. \label{lambdasuplib}\eeq
As in the superluminal librational case above, we do not have a bifurcation in the real values of $\zeta$ with vertical tangents. The qualitative change in the spectrum occurs when there is a bifurcation in imaginary values of $\zeta$. The imaginary roots of the numerator of (\ref{derivintcondsuprot}) are
\beq \footnotesize \zeta_p =\pm \frac{1}{2} i \sqrt{\frac{1}{(1+c)\mathcal{K}(k)} \left(E \mathcal{E}(k)+(1-E)\mathcal{K}(k)+\sqrt{E^2 \mathcal{E}^2(k)-2(E-1)E \mathcal{E}(k)\mathcal{K}(k)+(c^2+(E-2)E)\mathcal{K}^2(k)}\right)}. \label{imagrootsofderivintsuprot} \eeq
The qualitative change occurs for $E$ and $c$ such that $|\zeta_p| = \zeta_4$ and $-|\zeta_p|=\zeta_1$ where $\zeta_1$ and $\zeta_4$ are the smallest and largest roots of $\Omega^2(\zeta)=0$ respectively. This condition is seen in Figure~\ref{spectrumspecialcases}(2b) and is
\beq c=\frac{\mathcal{E}(k)}{\mathcal{K}(k)}\sqrt{\frac{E}{E-2}}. \eeq
For $c>\sqrt{\frac{E}{E-2}}\frac{\mathcal{E}(k)}{\mathcal{K}(k)}$, we map $\zeta_p$ to $\sigma_{\mathcal{L}}$ and find these points corresponding to the bottom of the ellipse-like curve in the upper-half plane and the top of the ellipse-like curve in the lower-half plane in Figure~\ref{spectrumcases}(d):
\beq \lambda = \pm \sqrt{\frac{E \mathcal{E}(k)-c^2(E-1)\mathcal{K}(k)+c \sqrt{E^2 \mathcal{E}^2(k)-2(E-1)E \mathcal{E}(k)\mathcal{K}(k)+(c^2+(E-2)E)\mathcal{K}^2(k)}}{2 \mathcal{K}(k)}}. \label{lambdasuplib}\eeq

\section{Floquet theory and subharmonic perturbations}\label{subharmonic}

We examine $\sigma_{\mathcal{L}}$ using a Floquet parameter description. We use this to prove spectral {\em stability} results with respect to perturbations of an integer multiple of the fundamental period of the solution, {\em i.e.}, subharmonic perturbations.

We write the eigenfunctions from (\ref{spectralproblem}) using a Floquet-Bloch decomposition
\beq \left(\ba{c} W_1(z) \\ W_2(z) \ea \right) = e^{i \mu z} \left(\ba{c} \hat{W}_1(z) \\ \hat{W}_2(z) \ea \right),\;\; \hat{W}_1(z+T(k))=\hat{W}_1(z),\;\; \hat{W}_2(z+T(k)) = \hat{W}_2(z), \label{FBE} \eeq
with $\mu \in \left[-\pi/T(k),\pi/T(k)\right)$ \cite{DK,DS17}. Here $T(k) = 2 \mathcal{K}(k)$ for all solutions. From Floquet's Theorem \cite{DK}, all bounded solutions of (\ref{spectralproblem}) are of this form, and our analysis includes perturbations of an arbitrary period. Specifically, $\mu = 2m \pi /T(k)$ for $m\in \mathbb{Z}$ corresponds to perturbations of the same period $T(k)$ of the solutions, and in general
\beq \mu = \frac{2m \pi}{P T(k)},\;\; m,P \in \mathbb{Z}, \label{muequals} \eeq
corresponds to perturbations of period $P T(K)$. The choice of the specific range of $\mu$ is arbitrary as long as it is of length $2\pi / T(k)$. For added clarity in this section, we plot figures using the larger ranges $\left[-2\pi / T(k),2\pi/T(k)\right)$, periodically extending $\mu$ beyond the basic region.

In the previous sections $\sigma_{\mathcal{L}}$ is parameterized in terms of $\zeta$. We wish to re-parameterize $\sigma_{\mathcal{L}}$ in terms of $\mu$. We examine the eigenfunction $W_1$ from (\ref{FBE}). From the periodicity of $\hat{W}_1$ we have
\beq e^{i \mu T(k)} = \frac{W_1(z+T(k))}{W_1(z)}. \eeq
Using (\ref{W12eqns}), (\ref{varphieqn}), and (\ref{gammaeqn}), we find 
\beq e^{i\mu T(k)} = \exp\left(-2\int_0^{T(k)} \frac{-BC+D(A-\Omega)+B_z}{B}\text{d}z\right), \eeq
where we have used the periodicity properties
\beq A\left(z+T(k)\right) = A(z),\;\; B\left(z+T(k)\right) = B(z). \eeq
Using (\ref{intcond7}),
\beq \mu(\zeta) =-\frac{2 i I(\zeta)}{T(k)}+\frac{2 \pi n}{T(k)}, \label{muzetaeqn} \eeq
where $I(\zeta)$ is given in (\ref{Ieqn}) and $n\in \mathbb{Z}$. 

In what follows we discuss the stability of solutions with respect to perturbations of integer multiples of their fundamental periods, so-called subharmonic perturbations \cite{GCT}. The expression (\ref{muzetaeqn}) gives an easy way to do this. Specifically, from (\ref{muequals}) we know which values of $\mu$ correspond to perturbations of what type. 
For stability with respect to perturbations of period $2\pi m/\mu =P T(k)$, we need all spectral elements associated with a given $\mu$ value to have zero real part. In Figure~\ref{mufigures} we plot the real part of $\sigma_{\mathcal{L}}$ as a function of $\mu T(k)$ using (\ref{Omegaeqn}), (\ref{lambdaconnection}), and (\ref{muzetaeqn}). We rescale $\mu$ by $T(k)$ for consistency in our figures. Here
\beq \mu T(k) = \frac{2\pi m}{P}, \eeq
corresponds to perturbations of $P T(k)$ for any integer $m$. 

The following results are obtained in each region of parameter space:
\begin{itemize}
\item For the subluminal rotational case, all solutions are spectrally stable \cite{JMMP,JMMP2}.
\item For the subluminal librational case, all solutions are spectrally unstable with respect to all subharmonic perturbations. This is shown in Section~\ref{sublibsubsectionmu}.
\item For the superluminal librational case, all solutions are spectrally unstable, but all solutions left of curve 2 in Figure~\ref{SGregionSupLibNums} are stable with respect to perturbations of twice the period and the same period, all solutions left of curve 4 are stable with respect to perturbations of four times the period, all solutions left of curve 6 are stable with respect to perturbations of six times the period, as well as three times the period, etc. This is shown in Section~\ref{suplibsubsectionmu}.
\item For the superluminal rotational case, all solutions are spectrally unstable, but there are regions of stability with respect to subharmonic perturbations, see Figure~\ref{muregionsuprot} and Section~\ref{suprotsubsectionmu} for details.
\end{itemize}

\begin{figure}
\centering
\begin{tabular}{cccc}
  \includegraphics[width=36mm]{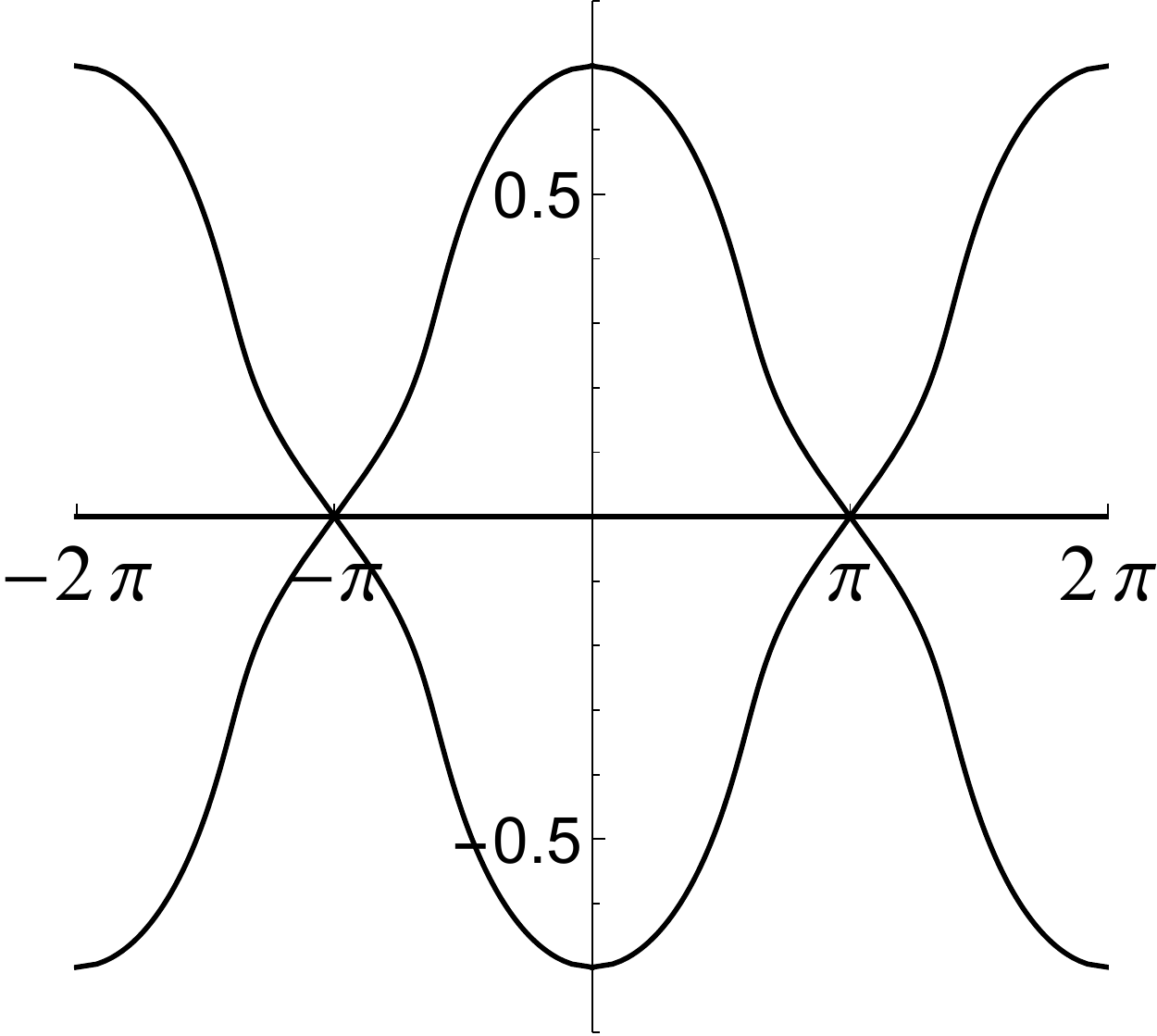} & \includegraphics[width=36mm]{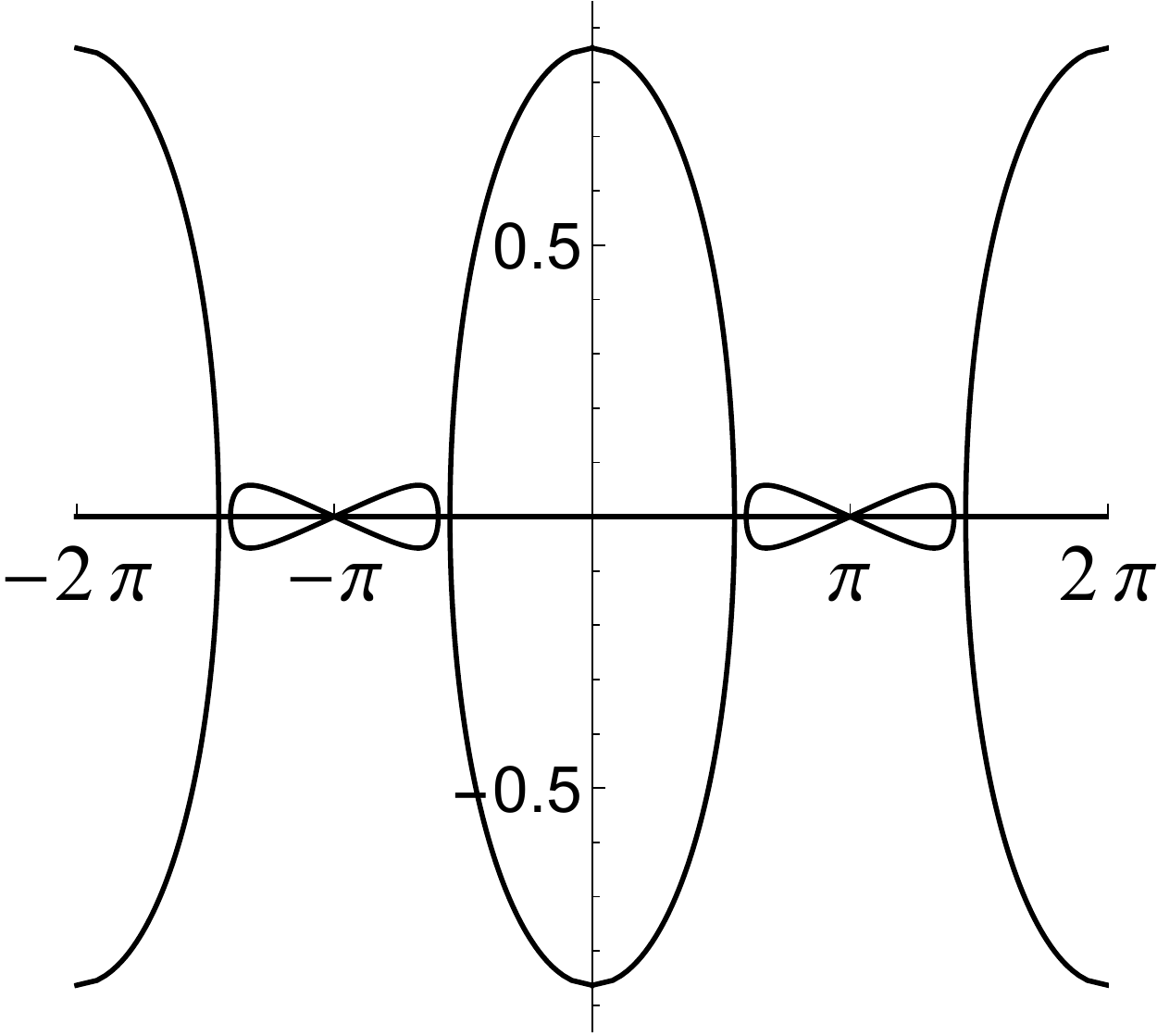} & \includegraphics[width=36mm]{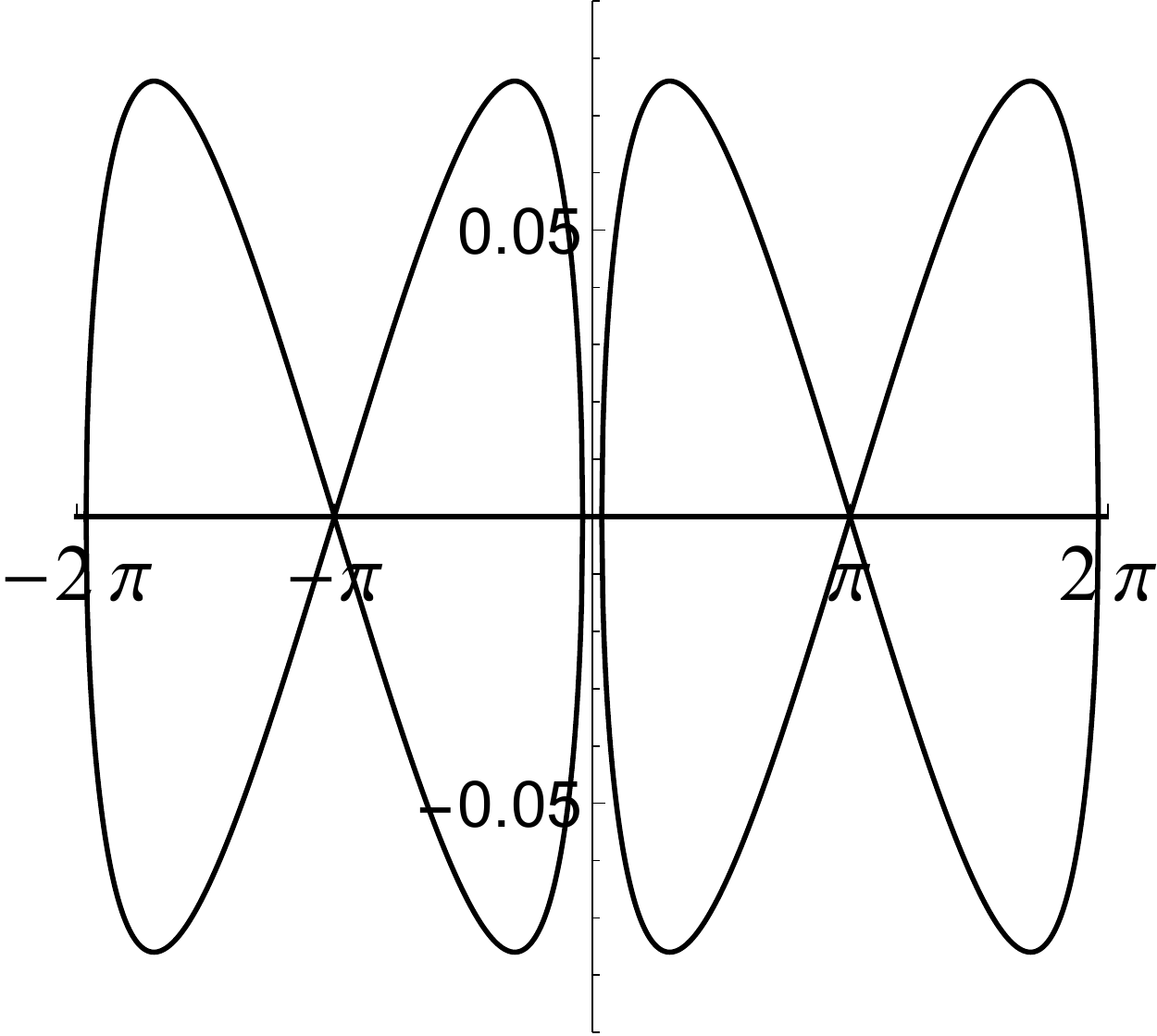} & \includegraphics[width=36mm]{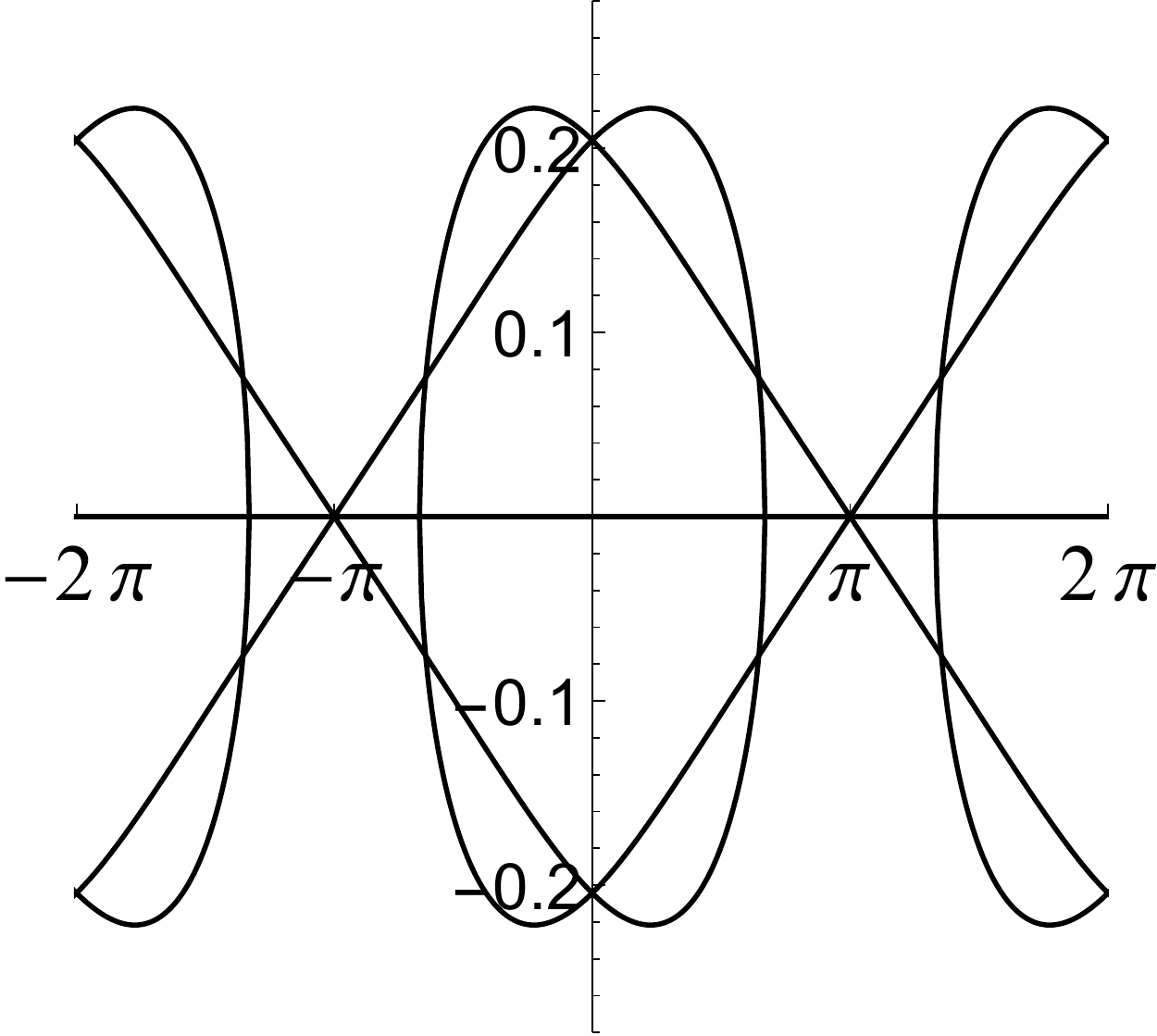} \\
(a) & (b) & (c) & (d) 
\end{tabular}
\begin{tabular}{cccc}
  \includegraphics[width=36mm]{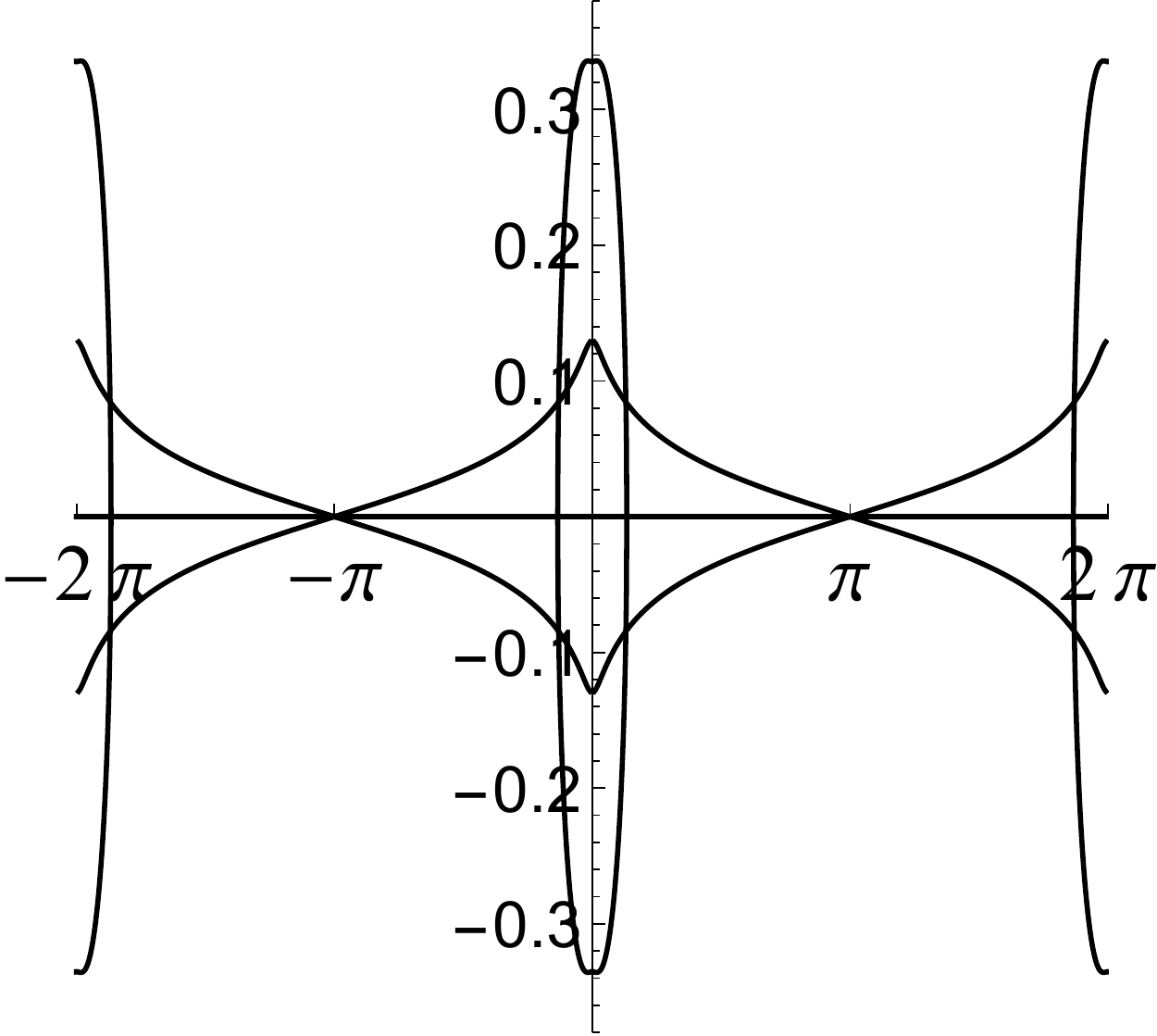} & \includegraphics[width=36mm]{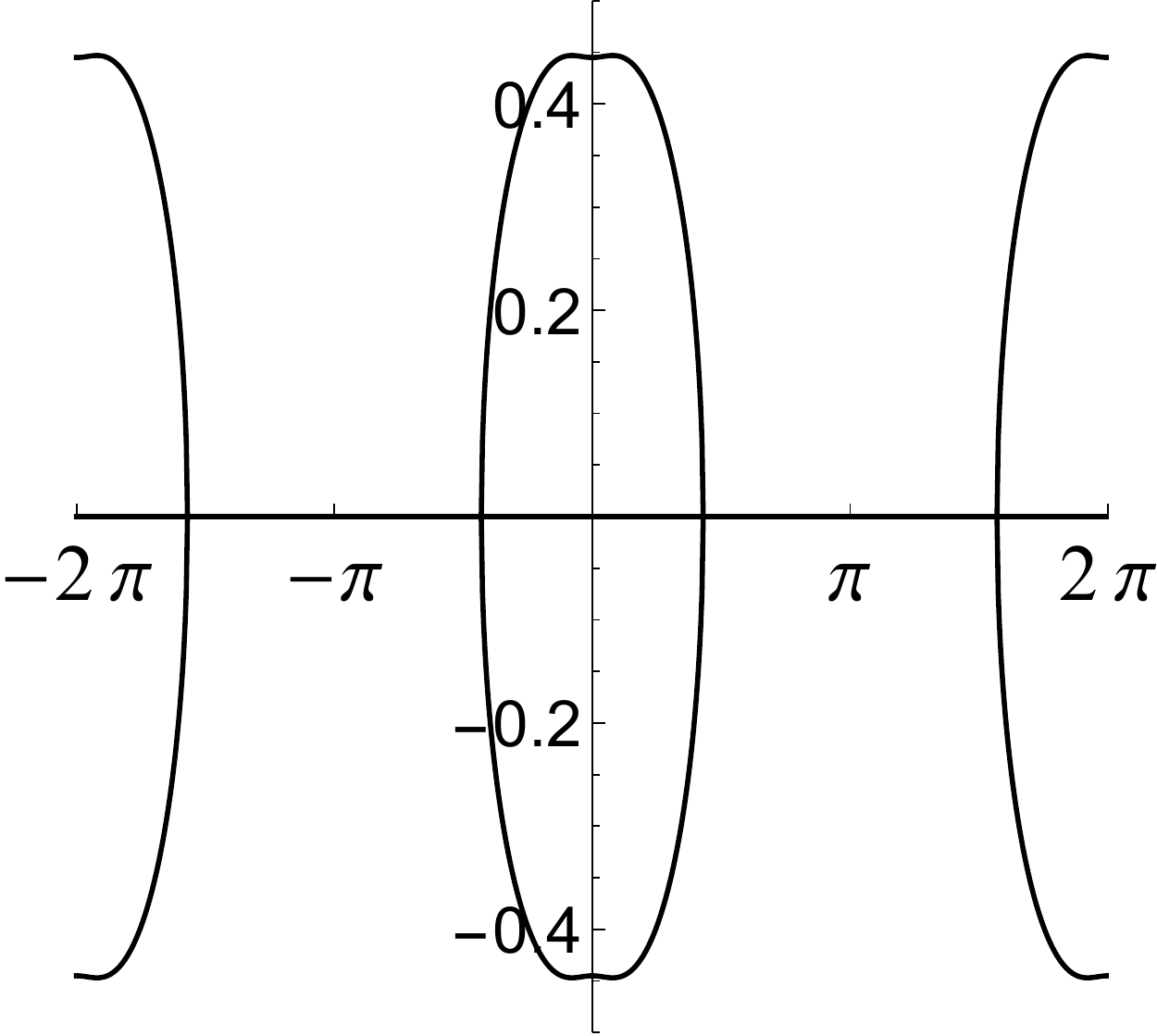} & \includegraphics[width=36mm]{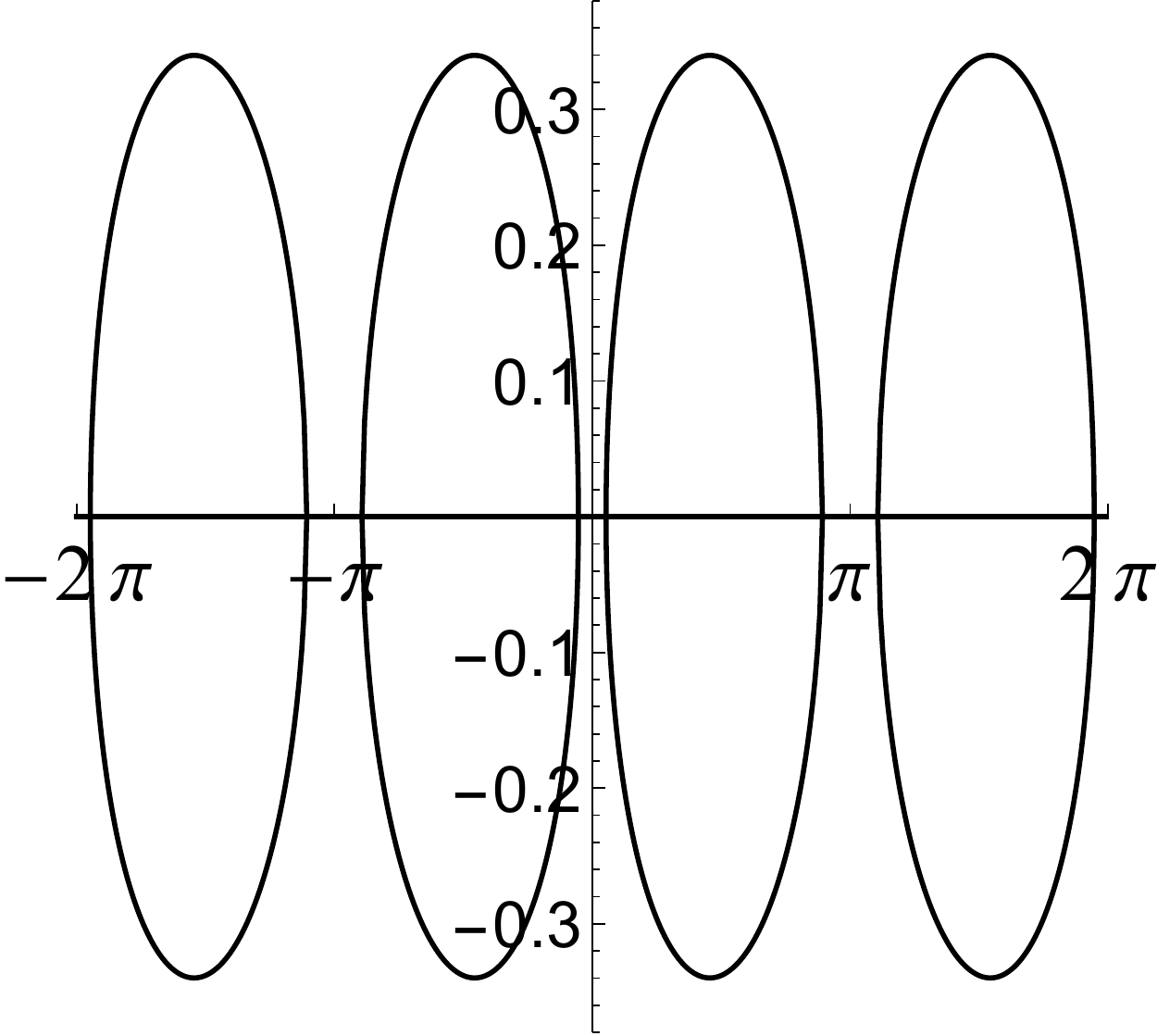} & \includegraphics[width=36mm]{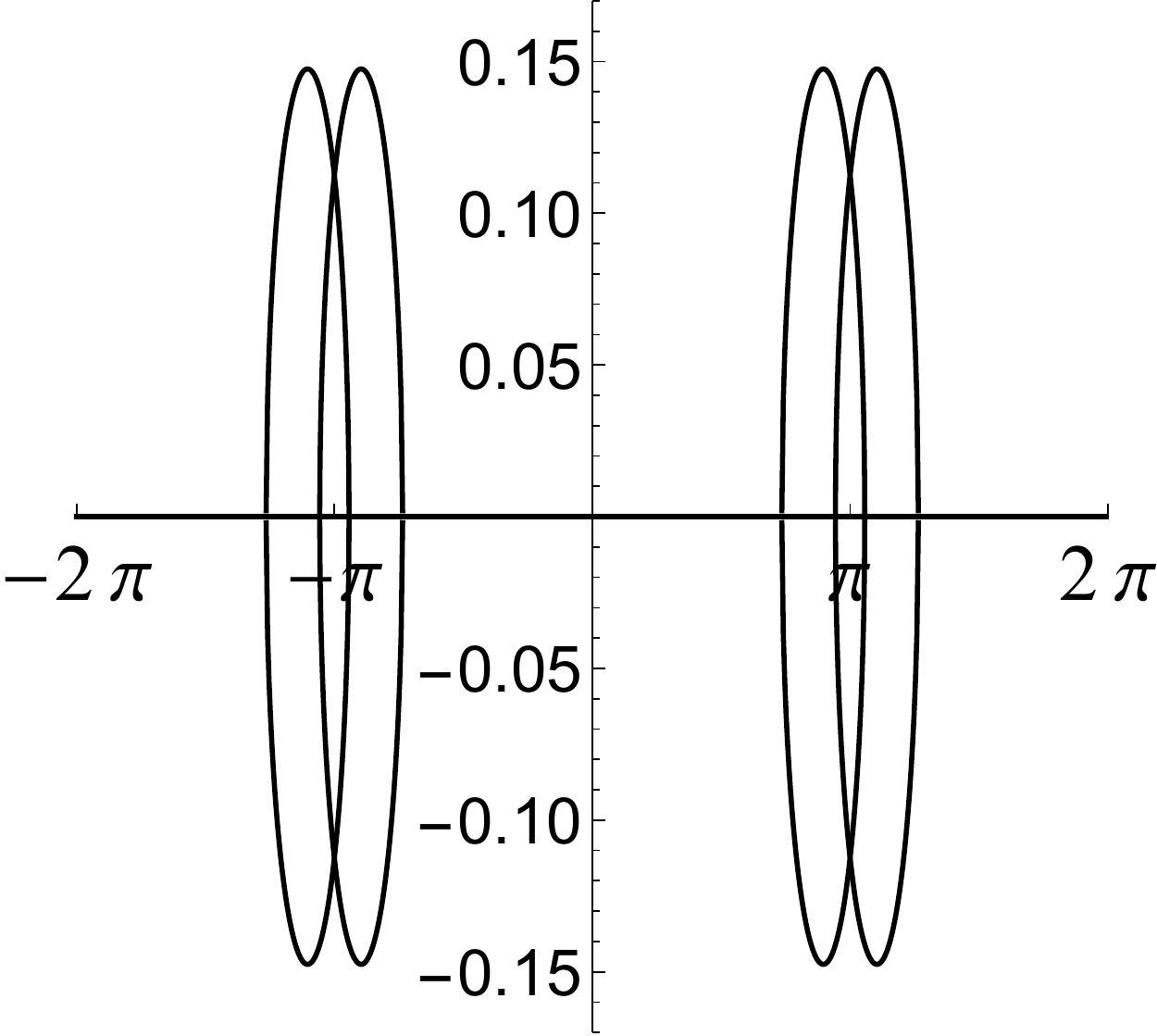}  \\
(e) & (f) & (g) & (h) 
\end{tabular}
\vspace{-2mm}
\caption{The real part of the spectrum $\text{Re}(\lambda)$ (vertical axis) as a function of $\mu T(k)$ (horizontal axis): for subluminal librational (a-b), superluminal librational (c-e), and superluminal rotational (f-h) solutions. (a) $c=0.6,\,E=1.0;$ (b) $c=0.8,\,E=1.5;$ (c) $c=1.5,\,E=0.7;$ (d) $c=1.5,\,E=1.5;$ (e) $c=1.02,\,E=1.8;$ (f) $c=1.3,\,E=2.9;$ (g) $c=1.4,\,E=2.4;$ (h) $c=2.1,\,E=6.8.$ }
\label{mufigures}
\end{figure}

We provide the following useful lemma:
\vspace{2mm}
\begin{lemma}\label{monotonelemma}For any analytic function $f(z) = u(x,y) +i v(x,y),$ on a contour where $u(x,y)=$ constant, $v(x,y)$ is strictly monotone, provided the contour does not traverse a saddle point. Similarly, on a contour where $v(x,y)=$ constant, $u(x,y)$ is strictly monotone, provided the contour does not traverse a saddle point.
\end{lemma}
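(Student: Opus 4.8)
The plan is to reduce the statement to the Cauchy--Riemann equations together with a single arc-length computation. Write $f = u + iv$ with $u,v$ harmonic, and recall that the ``saddle points'' in the statement are precisely the zeros of $f'(z)$: since $f'(z) = u_x - i\,u_y$, we have $f'(z) = 0$ if and only if $\nabla u = (u_x,u_y) = 0$, and at every other point $|f'(z)|^2 = u_x^2 + u_y^2 = |\nabla u|^2 > 0$. I would first parametrize the level contour $\Gamma = \{u = \text{const}\}$ by arc length $s$, obtaining a smooth path $(x(s),y(s))$ on any arc avoiding a saddle point; the implicit function theorem applies there exactly because $\nabla u \neq 0$. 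Differentiating the constraint $u(x(s),y(s)) = \text{const}$ gives $u_x x' + u_y y' = 0$, so the unit tangent $(x',y')$ is orthogonal to $\nabla u$, whence $(x',y') = \tfrac{\varepsilon}{|\nabla u|}(-u_y,\,u_x)$ for a fixed sign $\varepsilon = \pm 1$ fixed by the orientation.

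The key step is to evaluate the arc-length derivative of $v$ along $\Gamma$. Using the chain rule and then the Cauchy--Riemann relations $v_x = -u_y$, $v_y = u_x$,
\beq \frac{dv}{ds} = v_x x' + v_y y' = -u_y x' + u_x y' = \frac{\varepsilon}{|\nabla u|}\left(u_y^2 + u_x^2\right) = \varepsilon\,|\nabla u| = \varepsilon\,|f'(z)|. \eeq
Thus $dv/ds = \pm\,|f'(z)|$, a quantity that is continuous and strictly nonzero on any arc of $\Gamma$ that avoids the zeros of $f'$. Since a continuous, nowhere-vanishing function on a connected contour cannot change sign, $dv/ds$ retains a fixed sign along such an arc, and therefore $v$ is strictly monotone on it. The companion claim for contours $\{v = \text{const}\}$ follows by the identical argument applied to $if$, or equivalently by interchanging the roles of $u$ and $v$: one then finds $du/ds = \mp\,|f'(z)|$, again of one sign away from saddle points.

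The only genuine subtlety, and hence the main obstacle, is the careful exclusion of the saddle points. Where $f'(z) = 0$ the level set $\{u = \text{const}\}$ is singular (two or more branches cross), the unit tangent is undefined, and $v$ has a bona fide critical point, so monotonicity can and does fail precisely there. I would therefore make the hypothesis precise by taking $\Gamma$ to be a smooth arc on which $f' \neq 0$; this is exactly what the phrase ``does not traverse a saddle point'' encodes, and it simultaneously legitimizes the arc-length parametrization and the constant-sign conclusion. With that caveat in place the argument is complete, and no special-function machinery from the earlier sections is needed.
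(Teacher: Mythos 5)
Your proof is correct and takes the same route as the paper, which simply declares the lemma ``an immediate consequence of the Cauchy--Riemann relations''; you have merely written out the details (arc-length parametrization of the level curve, the computation $dv/ds = \pm\,|f'(z)|$, and the sign-constancy argument away from zeros of $f'$). The expanded version is a faithful and complete elaboration of the paper's one-line argument.
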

\begin{proof} This is an immediate consequence of the Cauchy-Riemann relations \cite{BORN}.
\end{proof}
Thus along contours where $\textrm{Re}(I(\zeta))=0,$ if there are no saddle points, then $\textrm{Im}(I(\zeta))$ is monotone. If we fix $c$ and $E$, using (\ref{muzetaeqn}) we see that $\mu(\zeta)T(k)=2\pi n+2 \text{Im}(I(\zeta))-2 i \text{Re}(I(\zeta))$ is also monotone along curves with $\textrm{Re}\left(I(\zeta)\right) = 0$.
In what follows, we omit $\sigma_{\mathcal{L}}\cap i \mathbb{R}$. 

\subsection{Subluminal librational solutions}\label{sublibsubsectionmu}
There are two cases to consider for subluminal librational solutions, corresponding to the two qualitatively different stability spectra seen in Figure~\ref{spectrumcases}(f,g), and their corresponding Lax spectra in Figure~\ref{laxspectrumcases}(f,g). Representative plots of $\mu T(k)$ {\em vs}. $\text{Re}(\lambda)$ for these cases are shown in Figure~\ref{mufigures}(a,b). We prove the following theorem:
\vspace{2mm}
\begin{thm}
The subluminal librational solutions to (\ref{SGlab}) are unstable with respect to all subharmonic perturbations.
\end{thm}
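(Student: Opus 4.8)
The plan is to show that for every admissible subharmonic perturbation---that is, for every rational Floquet exponent of the form $\mu T(k) = 2\pi m/P$ with $m,P\in\mathbb{Z}$---there exists a spectral element $\lambda\in\sigma_{\mathcal{L}}$ with $\mathrm{Re}(\lambda)>0$ associated to that $\mu$ value. By the definition of subharmonic stability given just before the statement, instability with respect to perturbations of period $P\,T(k)$ is established once we exhibit a single $\zeta$ satisfying the integral condition (\ref{intcond7}) whose associated $\mu(\zeta)$ equals $2\pi m/(P\,T(k))$ for some $m$, and whose $\lambda=2\Omega(\zeta)$ has positive real part. The key object is the portion of $\sigma_L$ forming the double-covered figure 8 (or the figure 8 inset in the ellipse-like curve) that was constructed in Section~\ref{sublibsubsection}; this component leaves the imaginary axis and hence maps under (\ref{lambdaconnection}) into the open right-half $\lambda$-plane, so it carries the instability. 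The task reduces to verifying that this component sweeps out \emph{every} rational value of $\mu T(k)$ in a full period $[0,2\pi)$.

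First I would restrict attention to the off-axis component $\overline{\sigma_{\mathcal{L}}\setminus i\mathbb{R}}$, equivalently the level curve $\mathrm{Re}(I(\zeta))=0$ emanating from the roots $\zeta_c$ of $\Omega^2(\zeta)=0$ listed in Section~\ref{sublibsubsection}. Next I would invoke Lemma~\ref{monotonelemma}: along any arc of this curve containing no saddle point of $I$, the Cauchy--Riemann relations force $\mathrm{Im}(I(\zeta))$ to be strictly monotone, and by the remark following the lemma this makes $\mu(\zeta)T(k)=2\pi n+2\,\mathrm{Im}(I(\zeta))$ strictly monotone along the arc. The saddle points of $I$ are precisely the zeros of $\mathrm{d}I/\mathrm{d}\zeta$ in (\ref{derivintcondsublib}), i.e.\ the points $\zeta_{t1},\zeta_{t2}$ (where the curve has vertical tangents) together with the branch points $\zeta=\zeta_c$ where $\Omega=0$. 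I would then track $\mu T(k)$ as $\zeta$ traverses the closed figure-8 component from one root $\zeta_c$ around to the next: by monotonicity on each saddle-free arc and continuity across the saddles, $\mu T(k)$ is monotone and changes by a full $2\pi$ over one complete circuit of the closed curve. Since a continuous monotone function covering an interval of length $2\pi$ attains every value in that interval, it attains in particular every rational multiple $2\pi m/P$, and at each such $\zeta$ the corresponding $\lambda=2\Omega(\zeta)$ lies off the imaginary axis with positive real part for an appropriate sheet. This yields, for each $P$, an unstable mode of period $P\,T(k)$, giving instability with respect to all subharmonic perturbations.

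The main obstacle I anticipate is establishing that $\mu T(k)$ genuinely increases by exactly $2\pi$---no more and no less---over a single circuit of the closed Lax-spectrum component, and that the monotonicity survives across the saddle points. The subtlety is twofold: one must confirm that the figure-8 (or inset figure-8) component is a single closed loop in the $\zeta$-plane whose image under $I$ winds once around, rather than a curve that doubles back, and one must control the sign changes of $\nu$ and of $\Omega$ in (\ref{Ieqn}) as the curve crosses the locus where $\Omega=0$ at the branch points $\zeta_c$. I would handle this by computing the net change $\oint \mathrm{d}(\mathrm{Im}\,I)=\oint \mathrm{Im}(\mathrm{d}I/\mathrm{d}\zeta\,\mathrm{d}\zeta)$ around the loop using the explicit derivative (\ref{derivintcondsublib}), exploiting the known quadruple-covering structure and the reality properties from Theorem~\ref{realzetathm} to pin down the period, and verifying directly from the residue of $\mathrm{d}I/\mathrm{d}\zeta$ at the simple poles $\zeta=0$ and at $\zeta=\zeta_c$ that the total increment of $\mu T(k)$ over the loop is the required $2\pi$. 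The remaining verification, that $\mathrm{Re}(\lambda)>0$ on the relevant arc, follows immediately from the fact that this component lies off the imaginary axis, which was already established geometrically in Section~\ref{sublibsubsection}.
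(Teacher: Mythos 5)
Your overall strategy---track $\mu T(k)$ along the off-axis level curve $\mathrm{Re}(I(\zeta))=0$ using Lemma~\ref{monotonelemma} and conclude by an intermediate-value argument---is the same mechanism the paper uses, but you aim at a much stronger covering statement than is needed, and the step that carries that stronger statement is not established. You claim that $\mu T(k)$ increases by exactly $2\pi$ over one circuit of the closed figure-8 component, so that every rational multiple $2\pi m/P$ is attained with $\mathrm{Re}(\lambda)>0$. This is the gap. Because the off-axis component is invariant under $\zeta\mapsto\zeta^*$ and is double covered, the upper and lower arcs retrace the same interval of $\mu T(k)$ in opposite senses, so the net change of $\mathrm{Im}(I)$ around the closed loop is zero rather than $2\pi$: the curve does ``double back,'' which is exactly the obstacle you flagged, and the residue computation you propose will not rescue it. What one actually gets is that $\mu T(k)$ is monotone on a single arc between two saddle/branch points and sweeps some interval; in the infinity-symbol case that interval happens to be $[-\pi,\pi]$, but on the ellipse-like curve of the inset case the paper only establishes a symmetric interval $\bigl[2iI(-|\zeta_{t2}|),\,-2iI(-|\zeta_{t2}|)\bigr]$ with no claim that its length is $2\pi$. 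Your argument also asserts $\mathrm{Re}(\lambda)>0$ at \emph{every} rational $\mu$, which fails at the arc endpoints where $\Omega=0$ and $\lambda=0$ (e.g.\ $\mu T(k)=\pm\pi$ in the infinity-symbol case).

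The fix is also the observation that makes the whole covering argument unnecessary: $m=0$ is admissible for every $P$, so perturbations of period $PT(k)$ always include the co-periodic value $\mu=0$. Hence it suffices to exhibit a single $\zeta\in\sigma_L$ with $\mu(\zeta)=0$ and $\mathrm{Re}(\lambda)>0$. That follows from your monotonicity argument applied to one arc, once you note that the endpoint values of $\mu T(k)$ are symmetric about $0$ (by the symmetry $I(-\zeta)=-I(\zeta)$ on the relevant sets), so $0$ lies in the interior of the swept interval where $\mathrm{Re}(\lambda)\neq 0$. This is precisely the paper's proof, carried out on the infinity-symbol arc from $\zeta_2$ to $\zeta_1$ in one case and on the outer ellipse-like curve from $-|\zeta_{t2}|$ to $|\zeta_{t2}|$ (not the inset figure 8, whose $\mu$-range is not controlled in the same way) in the other.
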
 
\begin{proof}
It suffices to show that for some $\zeta\in\sigma_L$, $\mu = 0$ and $\text{Re}(\lambda)>0$. We split into cases with qualitatively different spectra:
\begin{enumerate}
\item In the case where the stability spectrum looks qualitatively like an infinity symbol, we examine $\zeta\in\sigma_L$, see Figure~\ref{laxspectrumcases}(f). The infinity symbol spectrum is double covered, so without loss of generality, we consider only values of $\zeta$ in the upper-half plane. Specifically, we consider values of $\zeta$ ranging from $\zeta_2$ to $\zeta_1$, moving from the red cross in the second quadrant to the red cross in the first quadrant of Figure~\ref{laxspectrumcases}(f). At $\zeta_2$, $\mu T(k) = -\pi$ and $\text{Re}(\lambda) = 0$. As $\zeta$ moves from $\zeta_2$ to $\zeta_1$, $\mu T(k)$ monotonically increases (Lemma~\ref{monotonelemma}) until it reaches $\mu T(k) = \pi$ at $\zeta=\zeta_1$, where $\text{Re}(\lambda) = 0$, see Figure~\ref{mufigures}(a). Along this curve $\text{Re}(\lambda) \ne 0$ so by the intermediate value theorem at some point between $\zeta_2$ and $\zeta_2$, $\mu T(k) =0$ with $\text{Re}(\lambda) >0$.

\item In the case where the stability spectrum looks qualitatively like a figure 8 inset in an ellipse-like curve, examine $\zeta\in\sigma_L$, see Figure~\ref{laxspectrumcases}(g). The $\zeta$ spectrum has two components, $\zeta$ corresponding to the figure 8, and $\zeta$ corresponding to the ellipse-like curve. 
For instability, we only need to examine $\zeta$ corresponding to the ellipse-like curve. 
Again, we consider only values of $\zeta$ in the upper-half plane. Specifically, we consider values of $\zeta$ ranging from $-|\zeta_{t2}|$ to $|\zeta_{t2}|$, moving from the blue cross in the second quadrant to the blue cross in the first quadrant of Figure~\ref{laxspectrumcases}(g). At $-|\zeta_{t2}|$, $\mu T(k) = -2 i I(-|\zeta_{t2}|)$, and $\text{Re}(\lambda) = 0$. 
As $\zeta$ moves from $-|\zeta_{t2}|$ to $|\zeta_{t2}|$, $\mu T(k)$ monotonically increases (Lemma~\ref{monotonelemma}) until it reaches $\mu T(k) = -2 i I(|\zeta_{t2}|)$ at $\zeta = |\zeta_{t2}|$, with $\text{Re}(\lambda) = 0$, see the ellipse-like curve in Figure~\ref{mufigures}(b). 
Because of the symmetries of $I(\zeta)$ for $\zeta\in \mathbb{R}$ we have that $\mu T(k) = -2 i I(|\zeta_{t2}|)=2 i I(-|\zeta_{t2}|)$. Along this curve $\text{Re}(\lambda) \ne 0$ so again by the intermediate value theorem at some point between $-|\zeta_{t2}|$ and $|\zeta_{t2}|$, $\mu T(k) =0$ with $\text{Re}(\lambda) >0$.
\end{enumerate}
\vspace{-2mm}
\end{proof}

\subsection{Superluminal librational solutions}\label{suplibsubsectionmu}
\vspace{2mm}
\begin{thm}\label{suplibthm}
The superluminal librational solutions to (\ref{SGlab}) are stable with respect to subharmonic perturbations of period $PT(k)$ if they satisfy the condition
\beq -2i I(-|\zeta_t|) \ge \frac{(P-1)\pi}{P}, \label{suplibmucond1} \eeq
for $P$ odd, and 
\beq -2i I(-|\zeta_t|) \ge \frac{(P-2)\pi}{P}, \label{suplibmucond2} \eeq
for $P$ even.
\end{thm}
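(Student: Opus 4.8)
The plan is to translate the stability question entirely into the Floquet parameterization (\ref{muzetaeqn}). On $\sigma_L$ one has $\text{Re}(I(\zeta)) = 0$, so (\ref{muzetaeqn}) collapses to $\mu(\zeta)T(k) = 2\,\text{Im}(I(\zeta)) + 2\pi n$ with $n \in \mathbb{Z}$. Stability with respect to perturbations of period $PT(k)$ holds if and only if none of the grid values $\mu T(k) = 2\pi m/P$, $m \in \mathbb{Z}$, is carried by a spectral element with $\text{Re}(\lambda) > 0$. Thus the whole problem reduces to locating, on the circle $\mu T(k) \in [0, 2\pi)$, the arc of Floquet exponents supporting the unstable part of $\sigma_{\mathcal{L}}$, and then asking the elementary question of when the arithmetic progression $\{2\pi m/P\}$ avoids that arc.

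First I would isolate the unstable spectrum. By Theorem~\ref{realzetathm} the real $\zeta$-axis maps into $\sigma_{\mathcal{L}} \cap i\mathbb{R}$ and carries only marginal elements, so every mode with $\text{Re}(\lambda) > 0$ lies on the off-axis component of $\sigma_L$ (the figure 8 of Figure~\ref{laxspectrumcases}(a) or the ellipse-like curve of Figure~\ref{laxspectrumcases}(b)). I would then evaluate $\mu T(k)$ at the marginal points bounding this component. At the real vertical-tangent points $\pm|\zeta_t|$ one has $\text{Re}(\lambda) = 0$, and, using the oddness $I(-\zeta) = -I(\zeta)$ on the real axis already exploited in Section~\ref{sublibsubsectionmu}, $\mu T(k) = \mp a$ where $a := -2iI(-|\zeta_t|)$. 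At the origin $\lambda = 0$, reached along the unstable branch where the spectrum leaves the imaginary axis with the nonzero slope computed in Section~\ref{suplibsubsectionmu}, the Floquet exponent is the anti-periodic value $\mu T(k) = \pi$. Granting these two evaluations, Lemma~\ref{monotonelemma} applies along the off-axis contour (where $\text{Re}(I) \equiv 0$ and no saddle point is crossed): $\text{Im}(I)$, hence $\mu T(k)$, is strictly monotone between consecutive marginal points. Consequently the unstable Floquet exponents fill precisely the open arc $(a, 2\pi - a)$, symmetric about $\mu T(k) = \pi$, with $\text{Re}(\lambda) > 0$ everywhere on it except at the endpoints $a, 2\pi - a$ and at the midpoint $\pi$.

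With the unstable set identified as $(a, 2\pi - a) \setminus \{\pi\}$, the conclusion is purely combinatorial. In $[0, 2\pi)$ the period-$PT(k)$ grid consists of the points $2\pi m/P$, $m = 0, \dots, P-1$, spaced $2\pi/P$ apart and symmetric about $\pi$. For $P$ even, $\pi$ itself is a grid point but coincides with the marginal midpoint and is therefore harmless; the nearest interior grid points are $\pi \pm 2\pi/P$, and avoiding the unstable arc forces $\pi - 2\pi/P \le a$, that is $a \ge (P-2)\pi/P$, which is (\ref{suplibmucond2}). For $P$ odd the grid straddles $\pi$ with nearest points $\pi \pm \pi/P$, and the same requirement becomes $\pi - \pi/P \le a$, that is $a \ge (P-1)\pi/P$, which is (\ref{suplibmucond1}). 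The remaining grid points, lying in the stable window $[0, a] \cup [2\pi - a, 2\pi)$ around $\mu = 0$, together with all of $\sigma_{\mathcal{L}} \cap i\mathbb{R}$, contribute no unstable elements since $\text{Re}(\lambda) = 0$ there.

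The step I expect to be hardest is the second endpoint evaluation: that the off-axis component meets $\lambda = 0$ at $\mu T(k) = \pi$ rather than at $\mu = 0$. The translation mode $f'(z)$ does sit at the origin with $\mu = 0$, but the origin has multiplicity four, and one must show that the branch threading the figure 8 (or the inset infinity symbol in the ellipse case) is the anti-periodic one; concretely this means evaluating $I$ at a root $\zeta_c$ of $\Omega^2$ in (\ref{Omegaeqn}) and verifying $\text{Im}(I(\zeta_c)) = \pi/2 \pmod \pi$. A secondary difficulty is handling the two geometric sub-cases uniformly: in the ellipse-inset case one must confirm that the inner infinity symbol introduces no unstable Floquet exponents beyond the arc $(a, 2\pi - a)$ already accounted for, which I expect to follow because it too is swept monotonically between the same marginal values $a$ and $\pi$. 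The final technical point is the hypothesis of Lemma~\ref{monotonelemma} itself, namely strict monotonicity of $\mu T(k)$ in the absence of interior saddle points of $\text{Re}(I)$ along the contour; this must be checked so that the arc is covered exactly once and the endpoints are genuinely the extreme Floquet exponents.
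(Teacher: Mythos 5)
Your proposal follows essentially the same route as the paper's proof: evaluate $\mu T(k)$ at the marginal endpoints ($\pi$ at the roots of $\Omega^2$, $-2iI(-|\zeta_t|)$ at the real vertical-tangent points), invoke Lemma~\ref{monotonelemma} to conclude the unstable Floquet exponents fill the arc centered at $\pi$, and then impose that the grid $\{2\pi m/P\}$ avoids its interior, with the parity split arising exactly because $\pi$ is a grid point only for $P$ even. The points you flag as delicate (that the origin of $\sigma_{\mathcal{L}}$ is reached at the anti-periodic value $\mu T(k)=\pi$, and the uniform treatment of the two geometric sub-cases) are asserted rather than proved in the paper as well, so your account is faithful to, and if anything slightly more explicit than, the published argument.
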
 
\begin{proof}
For stability with respect to perturbations of period $PT(k)$ we need that for $\mu T(k) = 2\pi m/P$, the spectral elements $\lambda \in \sigma_{\mathcal{L}}$ have zero real part, {\em i.e.}, for $\mu T(k) = 0,\frac{2\pi}{P},\ldots,\frac{2\pi (P-1)}{P},$ $\text{Re}(\lambda)=0$.

\begin{figure}
\centering
\includegraphics[width=10cm]{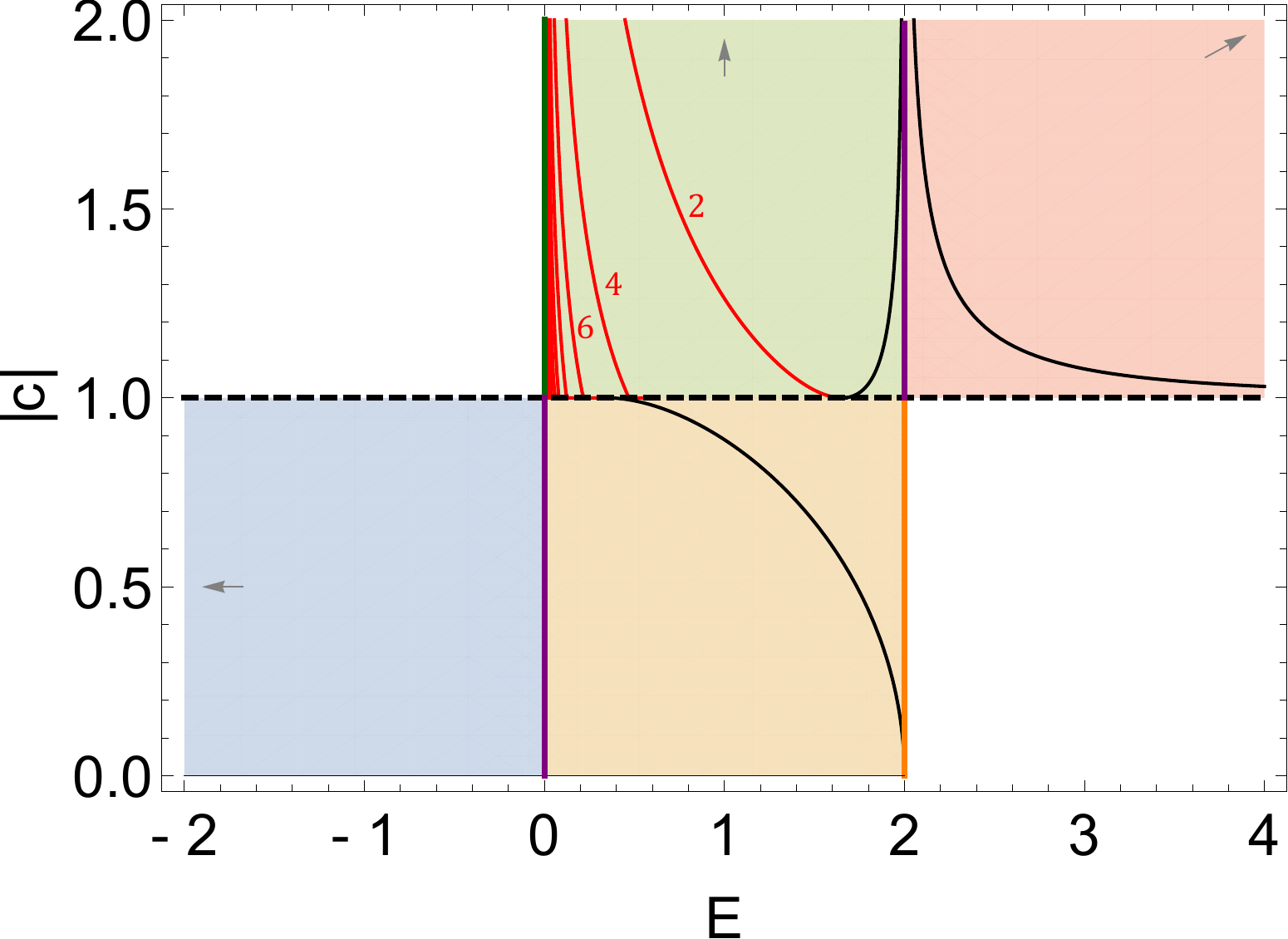} 
\caption{A plot of parameter space showing the spectral stability of superluminal librational solutions with respect to various subharmonic perturbations. Within the superluminal librational region, all solutions left of curve 2 are stable with respect to perturbations of twice the period as well as perturbations of the same period, all solutions left of curve 4 are stable with respect to perturbations of four times the period, all solutions left of curve 6 are stable with respect to perturbations of six times the period as well as perturbations of three times the period, etc.}
\label{SGregionSupLibNums}
\end{figure}

We examine $\zeta\in \sigma_L$, in the figure 8 case, see Figure~\ref{laxspectrumcases}(a). The figure 8 spectrum is double covered, so, without loss of generality, we consider only values of $\zeta$ in the left-half plane. Specifically we consider values of $\zeta$ ranging from $\zeta_3$ to $\zeta_2$ passing along the level curve through $\zeta = -|\zeta_t|$. At $\zeta_3$, $\mu T(k)=\pi$ and $\text{Re}(\lambda) = 0$. As $\zeta$ moves from $\zeta_3$ to $-|\zeta_t|$, $\mu T(k)$ monotonically decreases (Lemma~\ref{monotonelemma}) until it reaches $\mu_t T(k) = -2i I(-|\zeta_t|)$ at $\zeta=-|\zeta_t|$. At $-|\zeta_t|$, $\text{Re}(\lambda)=0$. Note that we are only considering the lower-left quarter plane. The analysis for $\zeta$ ranging from $\zeta_1$ to $|\zeta_t|$ is symmetric in $\mu T(k)$.

Qualitatively, we have figure 8s centered at $\mu T(k)=\pi+ 2\pi n$ and extending over $[\mu_t T(k) +2\pi n,\pi +(\pi - \mu_t T(k)) +2\pi n]$, see Figure~\ref{mufigures}(c,d). Relevant to the interval $[0,2\pi)$ is the figure 8 centered at $\pi$. For stability, we need the left-most edge of the figure 8 to be to the right of $\frac{(P-1)\pi}{P}$ for $P$ odd and to the right of $\frac{(P-2)\pi}{P}$ for $P$ even. Similarly, we need the right-most edge of the figure 8 to be to the left of $\frac{(P+1)\pi}{P}$ for $P$ odd and to the left of $\frac{(P+2)\pi}{P}$ for $P$ even. These conditions are for $P$ odd:
\beq \mu_t T(k) \ge \frac{(P-1)\pi}{P} \text{ and } \pi +(\pi - \mu_t T(k))\le \frac{(P+1)\pi}{P}, \eeq
and for $P$ even:
\beq \mu_t T(k) \ge \frac{(P-2)\pi}{P} \text{ and } \pi +(\pi - \mu_t T(k))\le \frac{(P+2)\pi}{P}. \eeq
These conditions simplify to give (\ref{suplibmucond1}) and (\ref{suplibmucond2}) respectively.
\end{proof}

\begin{figure}
\begin{tabular}{cccc}
  \includegraphics[width=70mm]{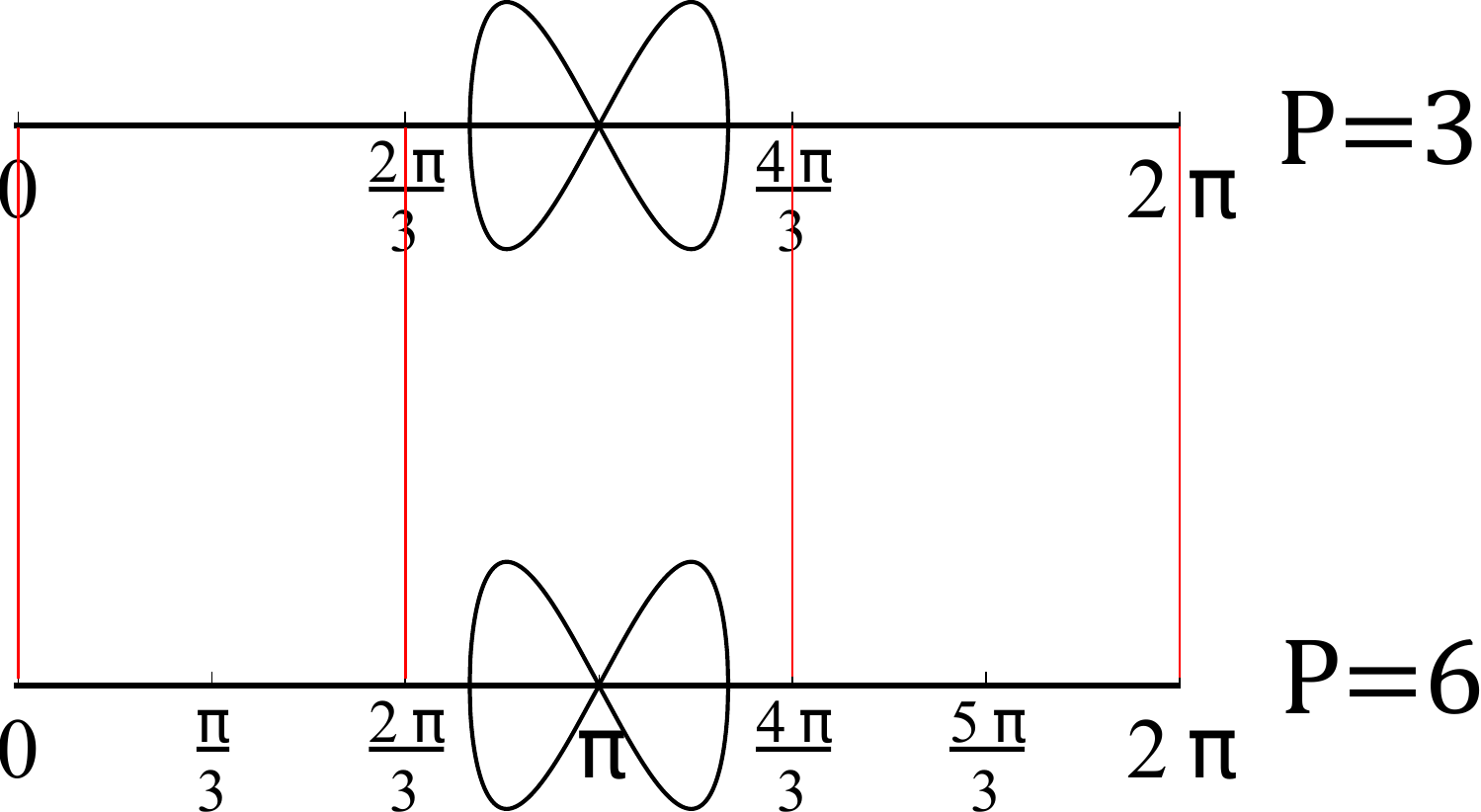} & & & \includegraphics[width=70mm]{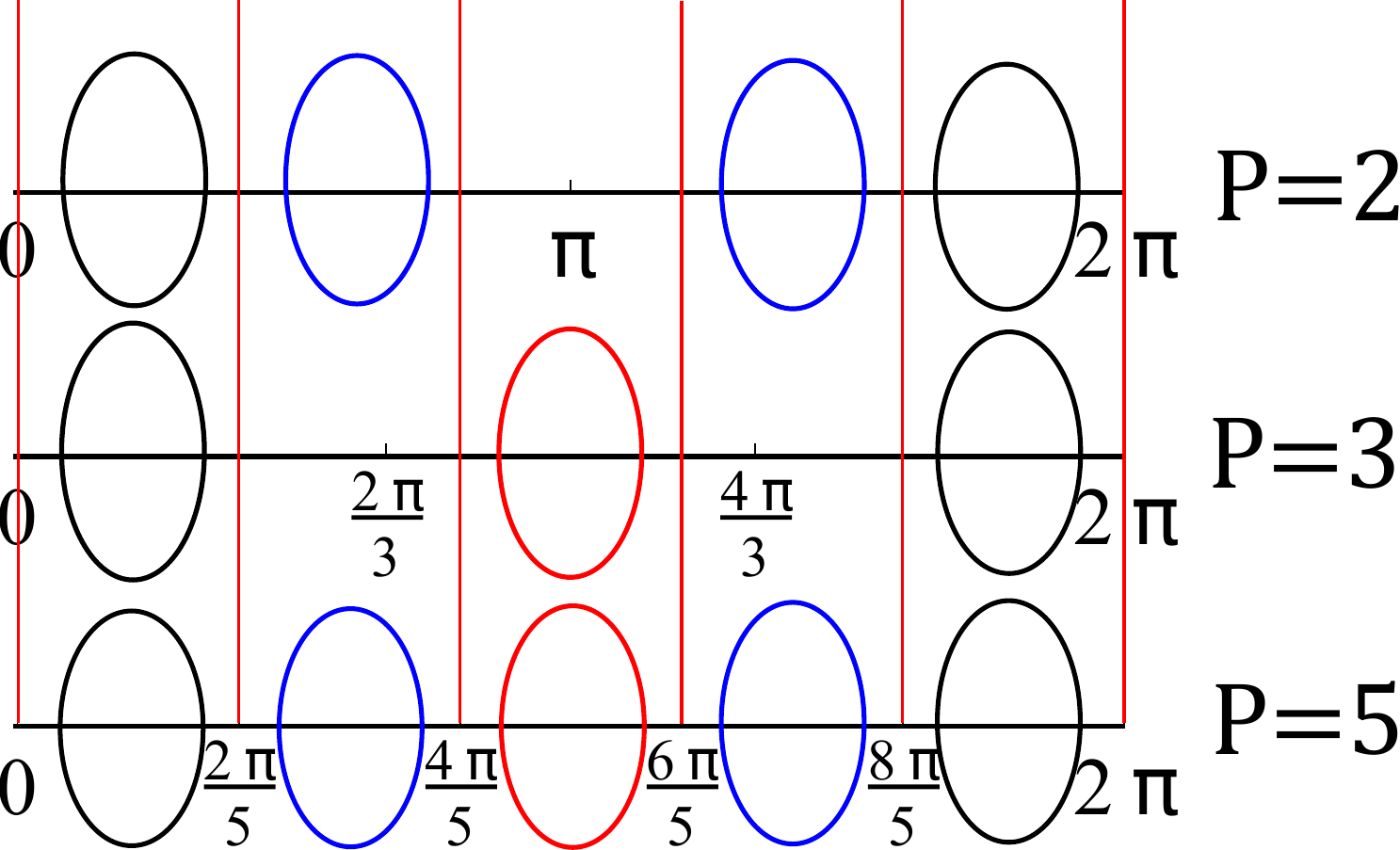} \\
(a) Superluminal librational  & & & (b) Superluminal rotational
\end{tabular}
\caption{The real part of the spectrum $\text{Re}(\lambda)$ (vertical axes) as a function of $\mu T(k)$ (horizontal axes): $\mu T(k) = 2m\pi /P$ for integers $m$ and $P$ corresponds to perturbations of period $P$ times the period of the underlying solution. (a) The superluminal solution is stable with respect to perturbations of three times its period is necessarily stable with respect to perturbations of six times its period. (b) If a superluminal rotational solution is stable with respect to perturbations of five times its period, it is stable with respect to perturbations of three times its period or perturbations of two times its period. 
(i) If the ellipse-like curves are in $(4\pi/5,6\pi/5)$ they are necessarily in $(2\pi/3,4\pi/3)$ (red),
(ii) if the ellipse-like curves are in $(2\pi/5,4\pi/5)$ and $(6\pi/5,8\pi/5)$ they are necessarily in $(0,\pi)$ and $(\pi,2\pi)$ respectively (blue), 
(iii) if the ellipse-like curves are in $(0,2\pi/5)$ and $(8\pi/5,2\pi)$ they are necessarily in both $(0,\pi)$ and $(\pi,2\pi)$ respectively as well as $(0,2\pi/3)$ and $(2\pi/3,4\pi/3)$ respectively (black).}
\label{proofpics}
\end{figure}

We remark that for a given odd $P$ the condition (\ref{suplibmucond1}) is the same as the condition (\ref{suplibmucond2}) for $2P$. Thus, for superluminal librational waves if we have stability with respect to perturbations of some odd multiple $P$ of the period $T(k)$ we also have stability with respect to perturbations of $2P T(k)$. This is shown in the case when $P=3$ in Figure~\ref{proofpics}(a). 
These results are summarized in Figure~\ref{SGregionSupLibNums} where we plot only the condition (\ref{suplibmucond2}).
We remark that it is possible for solutions to be stable with respect to perturbations of four times the period but not with respect to three times the period. Solutions of this type would lie to the left of curve 4 but to the right of curve 6 in Figure~\ref{SGregionSupLibNums}. More generally it is possible to have solutions which are stable with respect to $p$ times the period but not with respect to $q$ times the period where $p$ is even and less than $q<p<2q$.

\subsection{Superluminal rotational solutions}\label{suprotsubsectionmu}
\vspace{2mm}
\begin{thm}
The superluminal rotational solutions to (\ref{SGlab}) are stable with respect to subharmonic perturbations of period $PT(k)$ if they simultaneously satisfy the conditions
\beq 2\pi n -2i I(-|\zeta_t|) \le \frac{2\pi (m+1)}{P}, \label{suprotmucond1} \eeq
\beq 2\pi n -2i I(|\zeta_p| i ) \ge \frac{2\pi m}{P}, \label{suprotmucond2} \eeq
for some $n\in\mathbb{Z}$ and some $m\in \{0,1,\ldots, P-1\}$. Note that $\text{Re}\left(I(-|\zeta_t|)\right) = 0$ and $\text{Re}\left(I(|\zeta_p| i )\right)=0$.
\end{thm}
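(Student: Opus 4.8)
The plan is to mirror the structure of the superluminal librational case (Theorem~\ref{suplibthm}), but accounting for the fact that in the rotational case the relevant component of $\sigma_{\mathcal{L}}\setminus i\mathbb{R}$ is an ellipse-like curve (either surrounding the origin or split into upper/lower halves) rather than a figure~8 or infinity symbol. The key tool is again the monotonicity of $\mu(\zeta)T(k)$ along level curves of $\textrm{Re}(I(\zeta))=0$, guaranteed by Lemma~\ref{monotonelemma}, together with the reparameterization (\ref{muzetaeqn}). The strategy is to identify the $\mu T(k)$-extent of one double-covered ellipse-like curve and then impose that this entire interval fits strictly between two consecutive subharmonic lattice points $2\pi m/P$ and $2\pi(m+1)/P$, so that no perturbation of period $PT(k)$ samples a $\lambda$ with positive real part.

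First I would restrict, without loss of generality, to $\zeta$ in (say) the right-half plane, using the double covering of the ellipse-like curve. I would track $\zeta$ along the level curve $\textrm{Re}(I(\zeta))=0$ as it passes between the two real-axis turning points, identified in the previous section as $\pm|\zeta_t|$ (the blue crosses, where $\sigma_L$ has a vertical tangent and $\textrm{Re}(\lambda)=0$), and the purely imaginary turning point $|\zeta_p|i$ from (\ref{imagrootsofderivintsuprot}) (where likewise $\textrm{Re}(\lambda)=0$). By Lemma~\ref{monotonelemma}, $\textrm{Im}(I(\zeta))$, and hence $\mu T(k)=2\pi n+2\,\textrm{Im}(I(\zeta))$ along this curve, is strictly monotone, so the ellipse-like curve maps onto a single $\mu T(k)$-interval whose endpoints are exactly $2\pi n-2iI(-|\zeta_t|)$ and $2\pi n-2iI(|\zeta_p|i)$ (using $\textrm{Re}[iz]=-\textrm{Im}[z]$ so that $-2iI=2\,\textrm{Im}(I)$ on this set, as already noted in the theorem statement). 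The two boundary values thus bracket the horizontal extent of one lobe of $\overline{\sigma_{\mathcal{L}}\setminus i\mathbb{R}}$ in the $\mu T(k)$ picture of Figure~\ref{mufigures}(f--h).

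With the endpoints pinned down, stability with respect to period $PT(k)$ requires that none of the sampled Floquet values $\mu T(k)=2\pi m/P$, $m\in\{0,\ldots,P-1\}$ (extended periodically by $2\pi n$), land inside this instability interval. I would phrase this as: the whole interval $\bigl[\,2\pi n-2iI(|\zeta_p|i),\,2\pi n-2iI(-|\zeta_t|)\,\bigr]$ must be contained in some gap $\bigl(2\pi m/P,\,2\pi(m+1)/P\bigr)$ between consecutive lattice points. The left endpoint lying at or above $2\pi m/P$ gives (\ref{suprotmucond2}), and the right endpoint lying at or below $2\pi(m+1)/P$ gives (\ref{suprotmucond1}); requiring both simultaneously for some admissible $n,m$ is exactly the claimed condition. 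I would invoke the symmetry of $I(\zeta)$ under $\zeta\mapsto-\zeta$ (already used in the librational proofs to equate $-2iI(|\zeta_t|)$ with $2iI(-|\zeta_t|)$) to reduce the two symmetric lobes to a single condition.

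The main obstacle I anticipate is bookkeeping the two geometrically distinct subcases cleanly within one set of inequalities: when $c<\sqrt{E/(E-2)}\,\mathcal{E}(k)/\mathcal{K}(k)$ the ellipse-like curve surrounds the origin (Figure~\ref{spectrumcases}(c)), whereas above that threshold it splits into separate upper- and lower-half-plane curves (Figure~\ref{spectrumcases}(d)), changing which turning point is the ``inner'' versus ``outer'' edge and hence which of $-2iI(-|\zeta_t|)$, $-2iI(|\zeta_p|i)$ is the larger. I would handle this by allowing the freedom in the integer $n$ (the additive $2\pi n$ from periodic extension) to absorb the reorientation, so that after choosing $n$ appropriately the ordered pair of endpoints always reads as in (\ref{suprotmucond1})--(\ref{suprotmucond2}); verifying that such an $n$ exists in both subcases, and that the interval never wraps past a full period $2\pi$ (which would force instability for all $P$), is the delicate step and should be checked against the representative plots in Figure~\ref{mufigures}(f--h) and Figure~\ref{proofpics}(b).
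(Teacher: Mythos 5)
Your proposal follows essentially the same route as the paper's proof: restrict to one quarter of the double-covered ellipse-like curve by symmetry, use Lemma~\ref{monotonelemma} to show $\mu T(k)$ is monotone between the turning points $-|\zeta_t|$ and $|\zeta_p|i$, identify the resulting $\mu T(k)$-interval with endpoints $2\pi n - 2iI(-|\zeta_t|)$ and $2\pi n - 2iI(|\zeta_p|i)$, and demand that this interval sit inside a gap $\left(2\pi m/P,\,2\pi(m+1)/P\right)$ between consecutive Floquet lattice points. Your extra attention to the two geometric subcases and the role of $n$ is a reasonable refinement of bookkeeping the paper largely leaves implicit, but it does not change the argument.
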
 
\begin{proof}
For stability with respect to perturbations of period $PT(k)$ we need that for $\mu T(k) = 2\pi m/P$, the spectral elements $\lambda \in \sigma_{\mathcal{L}}$ have zero real part for all $m\in \{0,1,\ldots, P-1\}$.

We examine $\zeta\in \sigma_L$, in the case where we have ellipse-like curves in the upper- and lower-half planes, see Figure~\ref{laxspectrumcases}(d). As in Theorem~\ref{suplibthm}, using symmetries we restrict ourselves to $\zeta$ in the upper-left quarter plane. Specifically we consider values of $\zeta$ ranging from $-|\zeta_t|$ to $|\zeta_p| i$. At $-|\zeta_t|$, $\mu T(k)= -2i I(-|\zeta_t|)$ and $\text{Re}(\lambda) = 0$. As $\zeta$ moves from $-|\zeta_t|$ to $|\zeta_p| i$, $\mu T(k)$ monotonically decreases (Lemma~\ref{monotonelemma}) until it reaches $\mu_t T(k) = -2i I(|\zeta_p| i)$ at $\zeta=|\zeta_p|i$. At $|\zeta_p|i$, $\text{Re}(\lambda)=0$.

Qualitatively, we have an ellipse-like curve beginning at $ -2i I(-|\zeta_t|)+2\pi n$ and extending to $-2i I(|\zeta_p| i)+2\pi n$, see Figure~\ref{mufigures}(g,h). 
The only values of $\mu T(k)$ with $\text{Re}(\lambda)>0$ lie within the range $(2i I(|\zeta_p| i)+2\pi n,2i I(-|\zeta_t|)+2\pi n).$ So if $(2i I(|\zeta_p| i)+2\pi n,2i I(-|\zeta_t|)+2\pi n) \subset (\frac{2\pi m}{P},\frac{2\pi (m+1)}{P})$, for some $m\in \{0,1,\ldots, P-1\}$, then $\text{Re}(\lambda)=0$ for $\mu T(k) = 2\pi m/P$ for all $m\in \{0,1,\ldots, P-1\}$.

Thus for stability we need the right-most edge of each of these ellipse-like curves to be to the left of $2\pi (m+1)/P$, and the left-most edge of each of these ellipse-like curves to be to the right of $2\pi m/P$ for some $m\in \{0,1,\ldots, P-1\}$. 
This gives us conditions (\ref{suprotmucond1}) and (\ref{suprotmucond2}).
\end{proof}

These results are summarized in Figure~\ref{muregionsuprot}. We choose to rescale parameter space using the elliptic modulus $k = \sqrt{2/E}$, to show the extent of the subharmonic stability regions as $E\rightarrow\infty$. We only show regions for $P=1,2,3,4,5$ for the sake of clarity.

We see that there are many disjoint regions of subharmonic stability for each value of $P$ corresponding to the various choices for $m$. 
Within each disjoint region of stability for same period perturbations (blue) there are $P$ disjoint regions of stability with respect to perturbations of $P$ times the period. This follows directly from the conditions (\ref{suprotmucond1}) and (\ref{suprotmucond2}). 
\begin{figure}
\centering
\includegraphics[width=12cm]{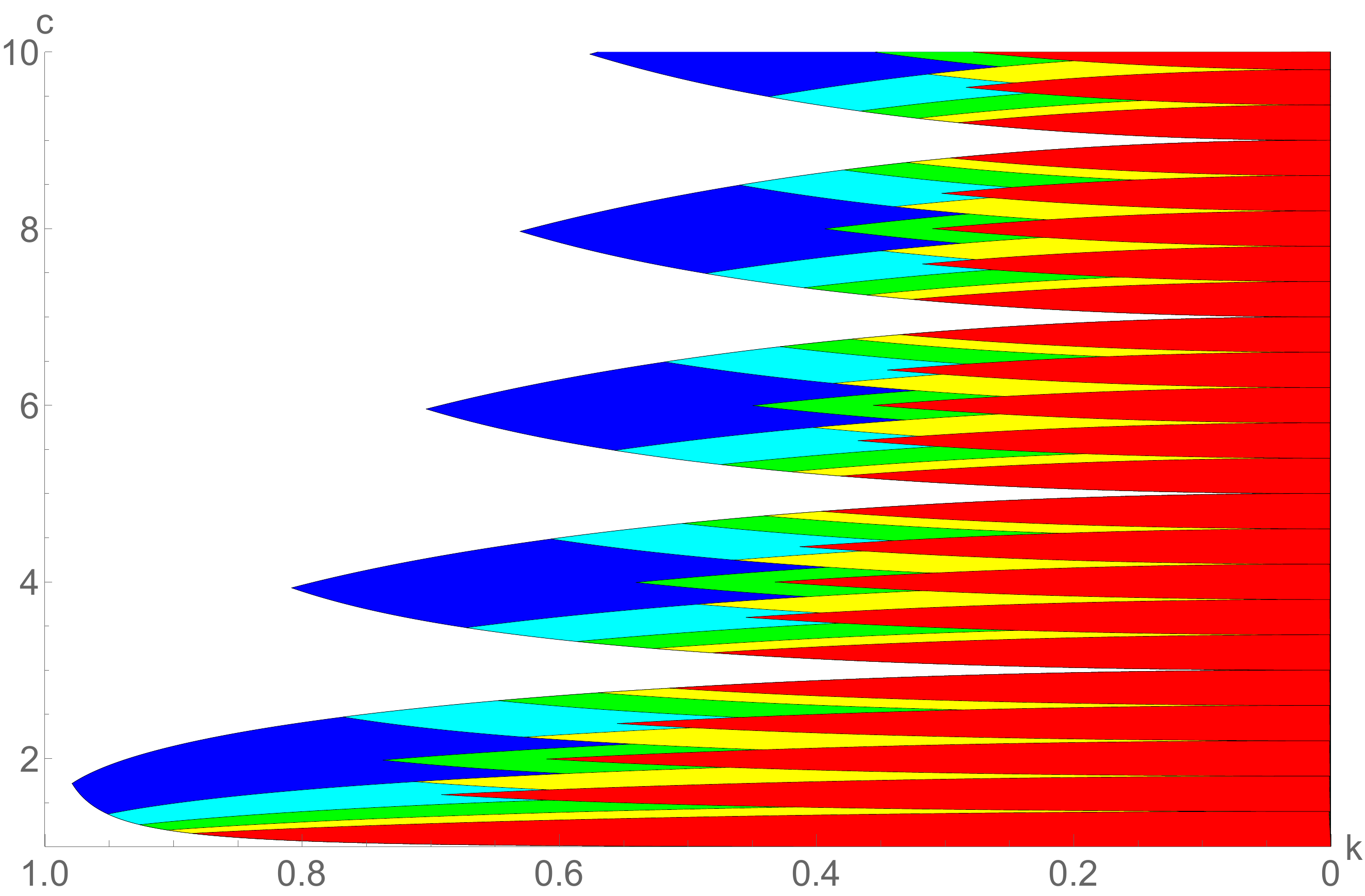} 
\caption{A plot of the superluminal rotational region of parameter space showing the spectral stability with respect to various subharmonic perturbations. Parameter space is rescaled using the elliptic modulus $k = \sqrt{2/E}$, to show the extent of the curves as $E\rightarrow\infty$. Solutions within the blue (light blue, green, yellow, red) region are stable with respect to perturbations of one 
(two, three, four, five) times the period respectively.}
\label{muregionsuprot}
\end{figure}
We note the possibility of solutions which are stable with respect to three times the period of the solution but not with respect to two times the period of the solution. An example of what $\mu T(k)$ looks like in this case is shown in Figure~\ref{mufigures}(h) with $c=2.1,\,E=6.8,\, k=0.542326.$ 
Indeed it is possible to have solutions which are stable with respect to $p$ times the period of the solution but not with respect to $q$ times the period of the solution for any $p>q$ where $q\nmid p$.
From Figure~\ref{muregionsuprot} we notice that if a solution is stable with respect to perturbations of five times the period (red) it is stable with respect to either perturbations of two times the period (light blue) or three times the period (green). This is proved by a simple topological argument shown in Figure~\ref{proofpics}(b) and explained in the caption. 

\section{Conclusion}

In this paper, the methods of
\cite{DS17} are used to examine and explicitly determine the stability spectrum of the
stationary solutions of the sine-Gordon equation. As in \cite{DS17}, we demonstrate that the parameter space for
the stationary solution separates in different regions where the
topology of the spectrum is different. An additional subdivision of
this parameter space is found for superluminal waves when considering the stability of the
solutions with respect to subharmonic perturbations of a specific
period. We find solutions which are stable with respect to perturbations of $p$ times the period but unstable with respect to $q$ times the period, where $p<q$.

\section{Acknowledgments}
This work was supported by the National Science Foundation through grant NSF-DMS-100801 (BD). Benjamin L. Segal acknowledges funding from a Department of Applied Mathematics Boeing fellowship and the Achievement Rewards for College Scientists (ARCS) fellowship. Any opinions, findings, and conclusions or recommendations expressed in this material are those of the authors and do not necessarily reflect the views of the funding sources.

\bibliography{refs}
\bibliographystyle{acm}

\end{document}